\def\UseBibLatex{1}
\newcommand{\SICOMP}[1]{}%
\newcommand{\NotSICOMP}[1]{#1}%
\newcommand{\SICOMP}[1]{#1}%
\newcommand{\NotSICOMP}[1]{}%
\def\input@path{{sicomp/}{../sicomp/}} 
   \let\UseBibLatex\undefined
\crefname{hypothesis}{Hypothesis}{Hypotheses}
\newcommand{\SarielComp}[1]{}
\newcommand{\NotSarielComp}[1]{#1}%
\newcommand{\SarielComp}[1]{#1}%
\newcommand{\NotSarielComp}[1]{}%
\newcommand{\IfPrinterVer}[2]{#2}%
\newcommand{\UsePackage}[1]{%
  \IfFileExists{../styles/#1.sty}{%
      \usepackage{../styles/#1}%
   }{%
      \IfFileExists{./styles/#1.sty}{%
         \usepackage{styles/#1}%
      }{%
         \usepackage{#1}%
      }%
   }%
}
\DeclareFontFamily{U}{BOONDOX-calo}{\skewchar\font=45 }
\DeclareFontShape{U}{BOONDOX-calo}{m}{n}{
  <-> s*[1.05] BOONDOX-r-calo}{}
\DeclareFontShape{U}{BOONDOX-calo}{b}{n}{
  <-> s*[1.05] BOONDOX-b-calo}{}
\DeclareMathAlphabet{\mathcalb}{U}{BOONDOX-calo}{m}{n}
\SetMathAlphabet{\mathcalb}{bold}{U}{BOONDOX-calo}{b}{n}
\DeclareMathAlphabet{\mathbcalb}{U}{BOONDOX-calo}{b}{n}
\newcommand{\Do}{{\small\bf do}\ }
\newcommand{\Return}{{\small\bf return\ }}
\newcommand{\If}{{\small\bf if}\ }
\newcommand{\Then}{{\small\bf then}\ }
\newcommand{\Else}{{\small\bf else}\ }
\newcommand{\For}{{\small\bf for}\ }
\newcommand{\xbeginlgox}{\begin{minipage}{1in}\begin{tabbing}
           \quad\=\qquad\=\qquad\=\qquad\=\qquad\=\qquad\=\qquad\=\kill}
        \newcommand{\xendlgox}{\end{tabbing}\end{minipage}}
\newenvironment{algorithmEnv}{
   \begin{tabular}{|l|}\hline\xbeginlgox\bigskip}
    {\xendlgox\\\hline\end{tabular}}
\newenvironment{program}{
   \begin{minipage}{4.0in}
   \begin{tabbing}
       \ \ \ \ \= \ \ \ \= \ \ \ \ \= \ \ \ \ \= \ \ \ \ \=
      \ \ \ \ \= \ \ \ \ \= \ \ \ \ \= \ \ \ \ \=
      \ \ \ \ \= \ \ \ \ \= \ \ \ \ \= \ \ \ \ \= \kill
}{
   \end{tabbing}
   \end{minipage}
}
\newcommand{\AlgorithmI}[1]{{%
      \textcolor[named]{AlgorithmColor}{\texttt{\bf{#1}}}%
   }}
\newcommand{\Algorithm}[1]{{%
      \AlgorithmI{#1}%
      \index{algorithm!#1@{\AlgorithmI{#1}}}%
   }}
\newlist{compactenumA}{enumerate}{5}%
\setlist[compactenumA]{topsep=0pt,itemsep=-1ex,partopsep=1ex,parsep=1ex,%
   label=(\Alph*)}%
\newlist{compactenumi}{enumerate}{5}%
\setlist[compactenumi]{topsep=0pt,itemsep=-1ex,partopsep=1ex,parsep=1ex,%
   label=(\roman*)}%
\newlist{compactenumI}{enumerate}{5}%
\setlist[compactenumI]{topsep=0pt,itemsep=-1ex,partopsep=1ex,parsep=1ex,%
   label=(\Roman*)}%
\providecommand{\BibLatexMode}[1]{}
\providecommand{\BibTexMode}[1]{#1}
  \renewcommand{\BibLatexMode}[1]{}
  \renewcommand{\BibTexMode}[1]{#1}
  \renewcommand{\BibLatexMode}[1]{#1}
  \renewcommand{\BibTexMode}[1]{}
\definecolor{darkgreen}{rgb}{0,0.3,0}
\definecolor{blue25}{rgb}{0,0,0.7}
\providecommand{\emphic}[2]{%
   \textcolor{blue25}{%
      \textbf{\emph{#1}}}%
   \index{#2}}
\providecommand{\emphi}[1]{\emphic{#1}{#1}}
\theoremstyle{plain}%
\newtheorem{theorem}{Theorem}[section]
\newtheorem{lemma}[theorem]{Lemma}
\newtheorem{corollary}[theorem]{Corollary}
\newtheorem{fact}[theorem]{Fact}
\theoremstyle{plain}%
\newtheorem*{remark:unnumbered}[theorem]{Remark}%
\newtheorem{remark}[theorem]{Remark}%
\newtheorem{definition}[theorem]{Definition}
\newtheorem{defn}[theorem]{Definition}
\newtheorem*{defn:unnumbered}[FakeCounter]{Definition}
\newtheorem{problem}[theorem]{Problem}
\newcommand{\myqedsymbol}{\rule{2mm}{2mm}}
\theoremstyle{nonumberplain}%
\newtheorem{proof}{Proof:}%
\newcommand{\atgen}{\symbol{'100}}
\newcommand{\SarielThanks}[1]{\thanks{Department of Computer Science;
      University of Illinois; 201 N. Goodwin Avenue; Urbana, IL,
      61801, USA; {\tt sariel\atgen{}illinois.edu}; {\tt
         \url{http://sarielhp.org/}.} #1}}
\newcommand{\MitchellThanks}[1]{%
   \thanks{%
      Department of Computer Science;
      University of Illinois; 201 N. Goodwin Avenue; Urbana, IL,
      61801, USA; {\tt mfjones2\atgen{}illinois.edu}; {\tt
         \url{http://mfjones2.web.engr.illinois.edu/}.} #1}}
\newcommand{\TimothyThanks}[1]{\thanks{Department of Computer Science;
      University of Illinois; 201 N. Goodwin Avenue; Urbana, IL,
      61801, USA; {\tt tmc\atgen{}illinois.edu}; {\tt
         \url{http://tmc.web.engr.illinois.edu/}.} #1}}
\numberwithin{figure}{section}%
\numberwithin{table}{section}%
\numberwithin{equation}{section}%
\newcommand{\HLink}[2]{\hyperref[#2]{#1~\ref*{#2}}}
\newcommand{\HLinkSuffix}[3]{\hyperref[#2]{#1\ref*{#2}{#3}}}
\newcommand{\figlab}[1]{\label{fig:#1}}
\newcommand{\figref}[1]{\HLink{Figure}{fig:#1}}
\newcommand{\thmlab}[1]{{\label{theo:#1}}}
\newcommand{\thmref}[1]{\HLink{Theorem}{theo:#1}}
\newcommand{\corlab}[1]{\label{cor:#1}}
\newcommand{\corref}[1]{\HLink{Corollary}{cor:#1}}%
\newcommand{\seclab}[1]{\label{sec:#1}}
\newcommand{\secref}[1]{\HLink{Section}{sec:#1}}
\newcommand{\remlab}[1]{\label{rem:#1}}
\newcommand{\remref}[1]{\HLink{Remark}{rem:#1}}%
\newcommand{\problab}[1]{\label{prob:#1}}
\newcommand{\probref}[1]{\HLink{Problem}{prob:#1}}
\providecommand{\deflab}[1]{\label{def:#1}}
\newcommand{\defref}[1]{\HLink{Definition}{def:#1}}
\newcommand{\apndlab}[1]{\label{apnd:#1}}
\newcommand{\apndref}[1]{\HLink{Appendix}{apnd:#1}}
\newcommand{\itemlab}[1]{\label{item:#1}}
\newcommand{\itemref}[1]{\HLinkSuffix{}{item:#1}{}}
\newcommand{\lemlab}[1]{\label{lemma:#1}}
\newcommand{\lemref}[1]{\HLink{Lemma}{lemma:#1}}%
\newcommand{\factlab}[1]{\label{fact:#1}}
\newcommand{\factref}[1]{\HLink{Fact}{fact:#1}}
\newcommand{\tbllab}[1]{\label{table:#1}}
\newcommand{\tblref}[1]{\HLink{Table}{table:#1}}
\providecommand{\eqlab}[1]{}%
\renewcommand{\eqlab}[1]{\label{equation:#1}}
\newcommand{\Eqref}[1]{\HLinkSuffix{Eq.~(}{equation:#1}{)}}
\providecommand{\remove}[1]{}%
\newcommand{\Set}[2]{\left\{ #1 \;\middle\vert\; #2 \right\}}
\newcommand{\norm}[1]{\left\| {#1} \right\|}
\newcommand{\pth}[2][\!]{\mleft({#2}\mright)}%
\newcommand{\pbrcx}[1]{\left[ {#1} \right]}%
\newcommand{\Ex}[2][\!]{\mathop{\mathbf{E}}#1\pbrcx{#2}}
\newcommand{\ceil}[1]{\left\lceil {#1} \right\rceil}
\newcommand{\floor}[1]{\left\lfloor {#1} \right\rfloor}
\newcommand{\brc}[1]{\left\{ {#1} \right\}}
\newcommand{\cardin}[1]{\left| {#1} \right|}%
\newcommand{\abs}[1]{\left| {#1} \right|}
\renewcommand{\th}{th\xspace}
\newcommand{\tldO}{\scalerel*{\widetilde{O}}{j^2}}%
\newcommand{\polylog}{\text{polylog}}
\renewcommand{\Re}{\mathbb{R}}%
\newcommand{\etal}{et~al.\xspace~}
\providecommand{\Mh}[1]{#1}
\newcommand{\eps}{\varepsilon}
\newcommand{\R}{\mathbb{R}}
\newcommand{\PS}{\Mh{P}}%
\newcommand{\PSA}{\Mh{Q}}%
\newcommand{\QS}{\Mh{Q}}%
\newcommand{\pa}{\Mh{p}}%
\newcommand{\pq}{\Mh{q}}%
\newcommand{\pb}{\Mh{s}}%
\newcommand{\pc}{\Mh{c}}%
\newcommand{\query}{\Mh{{z}}}%
\newcommand{\uDepth}{\Mh{\mathsf{u}}}%
\newcommand{\Line}{\Mh{\ell}}%
\newcommand{\hplane}{\Mh{h}}%
\newcommand{\hp}{\Mh{h}}%
\newcommand{\hs}{\Mh{h}}%
\newcommand{\hsa}{\Mh{f}}%
\newcommand{\HS}{\Mh{H}}
\newcommand{\EPlanesX}[1]{\Mh{E}_{#1}}
\newcommand{\PlanesX}[1]{\Mh{H}_{#1}}
\newcommand{\Simp}{\Delta}
\newcommand{\BX}[1]{\partial #1}
\newcommand{\Disk}{\Mh{\mathsf{o}}}%
\newcommand{\Ball}{\Mh{\mathsf{b}}}%
\newcommand{\rad}{\Mh{\nu}}%
\newcommand{\ray}{\rho}
\newcommand{\Dual}[1]{#1^\star}
\newcommand{\Cell}{\Mh{\Delta}}%
\newcommand{\Cells}{\Xi}
\newcommand{\ZoneY}[2]{\mathcal{Z}\pth{#1, #2}}
\newcommand{\lvY}[2]{L_{#1}(#2)}
\newcommand{\TY}[2]{T_{#1}(#2)}
\newcommand{\BY}[2]{B_{#1}(#2)}
\newcommand{\VertsX}[1]{\mathsf{V}\pth{#1}}
\newcommand{\Arr}{\Mh{\mathcal{A}}}%
\newcommand{\ArrX}[1]{\Arr\pth{#1}}
\newcommand{\LP}{\Term{LP}\xspace}%
\newcommand{\Objv}{\Mh{f}}%
\newcommand{\ObjvB}{\Mh{g}}%
\newcommand{\ObjvExt}{\Mh{\mathsf{F}}}%
\newcommand{\ObjvExtX}[1]{\ObjvExt\pth{#1}}%
\newcommand{\ObjvX}[1]{\Objv\pth{#1}}
\newcommand{\Impl}{\Mh{g}}%
\newcommand{\CN}{\Mh{{H}}}%
\newcommand{\CNa}{\Mh{{B}}}%
\newcommand{\CNc}{\Mh{{C}}}%
\newcommand{\CNX}{\Mh{{X}}}%
\newcommand{\cnx}{\Mh{{x}}}%
\newcommand{\CB}{\Mh{\mathcal{Z}}}%
\newcommand{\CNA}{\Mh{\mathcal{B}}}%
\newcommand{\CNb}{\Mh{{C}}}%
\newcommand{\CNs}{\Mh{{S}}}%
\newcommand{\CNP}{\Mh{\mathcal{H}}}%
\newcommand{\CNB}{\Mh{\mathcal{C}}}%
\newcommand{\CNR}{\Mh{R}}%
\newcommand{\cons}{\Mh{h}}
\newcommand{\Basis}{\Mh{\mathcalb{b}}}%
\newcommand{\BasisB}{\Mh{\mathcalb{t}}}%
\newcommand{\CD}{\Mh{\delta}}%
\newcommand{\lpdim}{\Mh{\delta}}%
\newcommand{\sza}{\Mh{\psi}}%
\newcommand{\szb}{\Mh{\eta}}%
\newcommand{\DX}[1]{\Mh{\mathcal{D}}\pth{#1}}%
\newcommand{\DT}{\Mh{\mathcal{D}}}%
\newcommand{\IRX}[1]{\left\llbracket #1 \right\rrbracket}
\newcommand{\IntRange}[1]{\left\llbracket #1 \right\rrbracket}
\newcommand{\permut}[1]{\left\langle {#1} \right\rangle}
\providecommand{\AlgorithmI}[1]{{%
      \textcolor[named]{AlgorithmColor}{\texttt{\bf{#1}}}%
   }}
\providecommand{\Algorithm}[1]{{%
      \AlgorithmI{#1}%
      \index{algorithm!#1@{\AlgorithmI{#1}}}%
   }}
\newcommand{\violate}{\AlgorithmI{violate}\xspace}%
\newcommand{\compBasis}{\AlgorithmI{compBasis}\xspace}%
\newcommand{\solveLPT}{\AlgorithmI{solveBatchLPT}\xspace}%
\definecolor{OliveGreen}{cmyk}{0.64, 0, 0.95, 0.40 }
\newlength{\savedparindent}
\newcommand{\SaveContent}[2]{%
   \expandafter\newcommand{#1}{#2}%
}
\newcommand{\Matousek}{Matou{\v{s}}ek\xspace}%
\newcommand{\Term}[1]{\textsf{#1}}%
\newcommand{\SO}{S_\textup{odd}}
\newcommand{\SE}{S_\textup{even}}
\newcommand{\TP}{\Mh{\mathcal{T}}}%
\newcommand{\ColZ}[3]{%
   \pagestyle{empty}%
   \begin{minipage}{#3}
       \pagestyle{empty}%
       \smallskip%
       #1\\
       #2 \smallskip%
   \end{minipage}%
}%
\newcommand{\ColY}[2]{%
   \ColZ{#1}{#2}{2.6cm}%
}
\providecommand{\TPDF}[2]{\texorpdfstring{#1}{#2}}
\newcommand{\cross}{\mathsf{d}}
\newcommand{\TCBX}[1]{\Mh{T_{\mathrm{c{}b}}}\pth{#1}}%
\newcommand{\TLPX}[1]{\Mh{T_{\mathrm{l{}p}}}\pth{#1}}%
\newcommand{\DotProdY}[2]{\permut{#1, #2}}
\newcommand{\HSup}{\Mh{\HS_\uparrow}}%
\newcommand{\DHSup}{\Mh{\HS^\star_\uparrow}}%
\newcommand{\HSdown}{\Mh{\HS_\downarrow}}%
\newcommand{\DHSdown}{\Mh{\HS^\star_\downarrow}}%
\newcommand{\Simplex}{\Mh{\nabla}}%
\newcommand{\Instance}{\Mh{\mathcal{I}}}%
\newcommand{\rankX}[1]{\Mh{\mathrm{r}}\pth{#1}}%
\newcommand{\rankMinX}[1]{\Mh{\zeta}\pth{#1}}
\newcommand{\levelX}[1]{\Mh{\mathrm{level}}\pth{#1}}%
\newcommand{\Bundle}{\Mh{\Upsilon}}%
\newcommand{\solveLPType}       {\Algorithm{solveLPType}\xspace}
\newcommand{\MST}{\textsf{MST}}
\title{%
   Optimal Algorithms for Geometric Centers and Depth%
   \thanks{%
      This paper is a merge of two conference papers that were
      published sixteen years apart. The first paper \cite{c-oramt-04}
      appeared in SODA 2004, and the second paper \cite{hj-fagc-20}
      (which can be viewed as an applications paper of the first
      paper) appeared in SoCG 2020. %
      \SICOMP{%
         \funding{Work by TMC was partially supported by NSF award
            CCF-1814026. Work by SHP and MJ was supported by AF awards
            CCF-1421231 and CCF-1907400.}  }%
   }%
}%
   \date{\today}%
\author{%
   Timothy M. Chan%
   \TimothyThanks{\NotSICOMP{Work was partially supported by NSF award
         CCF-1814026.}}%
   \and%
   Sariel Har-Peled%
   \SarielThanks{\NotSICOMP{Work on this paper was partially supported
         by NSF AF award CCF-1421231 and CCF-1907400.}}%
   \and%
   Mitchell Jones%
   \MitchellThanks{}%
}%
\begin{document}

\maketitle

\begin{abstract}
    We develop a general randomized technique for solving implicit
    linear programming problems, where the collection of constraints
    are defined implicitly by an underlying ground set of elements. In
    many cases, the structure of the implicitly defined constraints
    can be used to obtain faster linear program solvers.

    We apply this technique to obtain near-optimal algorithms for a
    variety of fundamental problems in geometry. For a given point set
    $P$ of size $n$ in $\Re^d$, we develop algorithms for computing
    geometric centers of a point set, including the centerpoint and
    the Tukey median, and several other more involved measures of
    centrality.  For $d=2$, the new algorithms run in $O(n\log n)$
    expected time, which is optimal, and for higher constant $d>2$,
    the expected time bound is within one logarithmic factor of
    $O(n^{d-1})$, which is also likely near optimal for some of the
    problems.
\end{abstract}

\section{Introduction}

\paragraph*{Parametric search}

In the 1980s, Nimrod Megiddo came up with an ingenious technique for
solving efficiently many geometric optimization problems.  This
para\-metric-search technique \cite{as-eago-98, m-apcadsa-83} works by
parallelizing a decider procedure for the problem (i.e., an algorithm
that can solve the decision problem associated with the optimization
problem), and conceptually running it on the unknown parameter being
the optimal value. One then simulates the execution of this parallel
algorithm.  This reduces to resolving a large batch of parallel
comparisons performed by the algorithm (i.e., the critical values of
the problem), which is done by performing a binary search over these
critical values using a sequential decider algorithm. The details of
the resulting algorithm, being a mixed simulation of a parallel
algorithm, tends to be convoluted, complicated and
counter-intuitive. Nevertheless, this technique provides the optimal
or fastest deterministic algorithm for many geometric optimization
problems.

\paragraph*{Linear programming (\LP)} %
Remarkably, in roughly the same time, Nimrod Megi\-ddo
\cite{m-lpltw-84} came up with a linear time algorithm for linear
programming in constant dimension. His algorithm shares some ideas
with his parametric search technique. This algorithm can be
dramatically simplified (and in practice sped up) by using
randomization \cite{s-sdlpchme-91, k-srsa-92, c-lvalipds-95,
   msw-sblp-96}.  Here is a quick sketch of Seidel's algorithm
\cite{s-sdlpchme-91}---it randomly permutes the constraints, inserts
them one by one, and checks whether an inserted constraint violates
the current optimal solution. If so, it recurses on the offending
constraint (and the prefix of the constraints inserted so far).  The
probability that the $i$\th constraint violates the current solution
is $O(1/i)$, which implies that the expected number of recursive calls
in the top level is $O( \log n)$.  Since the violation check takes
constant time, and the recursion depth is bounded, this readily
implies a running time that is near linear. A somewhat more careful
analysis shows that the expected running time is linear.

\begin{figure}
    \centering
    \begin{tabular}{cc}
      \includegraphics[page=1,width=0.39\linewidth]{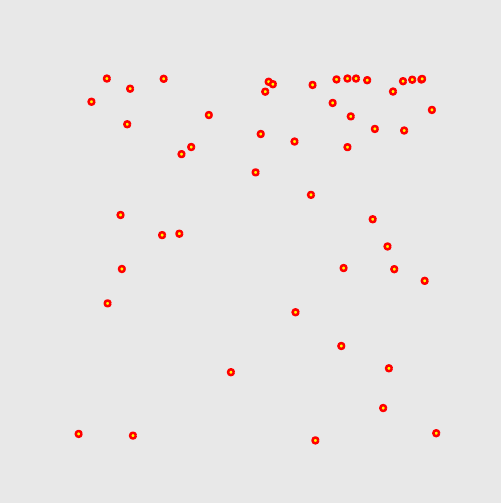}
      &
        \includegraphics[page=2,width=0.39\linewidth]{figs/yolk_2}
      \\
      (A) & (B)\\
      \includegraphics[page=3,width=0.39\linewidth]{figs/yolk_2}
      &
        \includegraphics[page=4,width=0.39\linewidth]{figs/yolk_2}\\
      (C) & (D)\\
    \end{tabular}
    \caption{%
       (A) Points. %
       (B) Median lines and the extremal yolk.  (C) All lines and the
       egg.  (D) Points with the extremal yolk and the egg.}
    \figlab{example}
\end{figure}%

\paragraph*{Randomization for parametric search} %
It is by now well known \cite{ov-psmp-04} that randomization can be
used as a replacement to parametric search, resulting in simpler (and
in many cases faster) algorithms. In particular, Chan
\cite{c-garot-99} identified a surprisingly simple and efficient
algorithmic technique that can be used to solve many of these
geometric optimization problems (it is similar in spirit to Seidel's
algorithm for \LP).  Specifically, imagine a maximization problem
where one has a fast decider algorithm that can tell us whether a
given value is larger or smaller than the optimal value of the given
instance.  Furthermore, assume that the problem at hand can be
(quickly) divided into a small number (e.g., constant) of smaller
instances, such that the value of one of these instances is the
desired optimal value, and all the other instances have values that
are not larger. Chan's algorithm now randomly orders these
subproblems, and solves the problem recursively on each subproblem,
but only if the (fast) decider indicates that the current subproblem
contains a higher value solution than the one found so far. If there
are $t$ subproblems, the algorithm in expectation performs only
$O( \log t)$ recursive calls. The result is a significantly simpler
randomized algorithm (which uses the decision algorithm only as a
black box), which is also faster and has none of the
logarithmic-factor slowdowns that parametric-search suffers from.

In this paper, we develop a generalization of the above randomized
optimization technique. This generalized technique is interesting in
its own right, as it can handle certain linear programming (\LP{})
problems, where the constraints are too numerous to be explicitly
stated and are thus specified implicitly. We apply this technique to a
variety of problems, discussed next.

\subsection{Motivation \& problems studied}
\seclab{motivation}

\subsubsection{Tukey depth}%

\begin{defn}
    \deflab{tukey:depth}%
    Given a set $\PS$ of $n$ points in $\Re^d$, the \emphi{Tukey
       depth} of a point $\pa \in \Re^d$ is
    \begin{equation*}
        \min_{\hs^+:\, \text{halfspace containing } \pa} |\PS\cap \hs^+|.
    \end{equation*}
\end{defn}

The task at hand is to compute a \emphi{Tukey median}, that is, a
point $\pa \in \Re^d$ with maximum Tukey depth. By the centerpoint
theorem, there is always a point of Tukey depth $\geq n/(d+1)$ in
$\Re^d$.

Notions of depths for point data sets are important in statistical
analysis.  The above definition (also called {\em location depth\/},
{\em data depth\/}, and {\em halfspace depth\/}) is among the most
well-known and was popularized by John Tukey~\cite{t-mpd-75}, who
suggested using the corresponding depth {\em contours\/} (boundaries
of regions of all points with equal depth) to visualize data.  A Tukey
median can serve as a point estimator for the data set (a ``center'')
which is robust against outliers, does not rely on distances, and is
invariant under affine transformations~\cite{rr-cbtm-98, rr-cdcbpc-96,
   s-smd-90}.

Because of the applications to statistics, the issue of designing
efficient algorithms to find Tukey medians and their relatives---for
example, a point with maximum \emph{Liu/simplicial depth}, minimum
\emph{Oja depth}, or maximum \emph{convex-layers/peeling depth}, and a
line or flat with maximum {\em regression depth\/}---has attracted a
great deal of attention from researchers in computational geometry.
See
\cite{acgst-lbcsd-02,alst-abmftfl-03,be-mrd-02,gsw-gm-92,vkmrsss-eamrd-08,
   ls-cmdpp-00,ls-chdp-03,ls-oa-03,mrrssss-ecldcmcg-03} for the
definitions of these concepts and relevant algorithms.

\typeout{1.1.2. Voting games and the yolk}

\subsubsection{Voting games and the yolk} %
Suppose there is a collection of $n$ voters in $\R^d$, where each
coordinate represents a specific ideology. In each coordinate, each
voter has a value representing their stance on a given ideology. One
can interpret $\R^d$ as a \emph{policy space}, and each point in
$\R^d$ represents a single policy. In the Euclidean spatial model, a
voter $\pa \in \R^d$ always prefers policies which are closer to $\pa$
under the Euclidean norm. For two policies $x, y \in \R^d$ and a set
of voters $\PS \subset \R^d$, $x$ \emph{beats} $y$ if more voters in
$\PS$ prefer policy $x$ compared to $y$. A plurality point is a policy
which beats all other policies in $\R^d$. For $d = 1$, the plurality
point is the median voter (when $n$ is odd) \cite{b-orgdm-48}. However
for $d > 1$, a plurality point is not always guaranteed to exist
\cite{r-nndshemr-79}. It is known that one can test whether a
plurality point exists (and if so, compute it) in $O(dn\log n)$ time
\cite{dbgm-facpp-18}. Note that the plurality point is a point of
Tukey depth $\ceil{n/2}$---in general this is the largest possible
Tukey depth any point can have; while the centerpoint is a point that
guarantees a ``respectable'' minority of size at least $n/(d+1)$.

Since plurality points may not always exist, one generalization of a
plurality point is the yolk \cite{m-cdifpsc-86}. A hyperplane is a
\emphi{median hyperplane} if the number of voters lying in each of the
two closed halfspaces (bounded by the hyperplane) is at least
$\ceil{n/2}$. The \emphi{yolk} is the ball of smallest radius
intersecting all such median hyperplanes. Note that when a plurality
point exists, the yolk has radius zero (equivalently, all median
hyperplanes intersect at a common point).

In terms of real world politics, one can think of the yolk as
representing an area of ambiguity where the policy of a political
party might be located. Such an ambiguity might be intentional, or the
natural consequence of forming a party made out of people with
differing views.

We also consider the following restricted problem.  A hyperplane is
\emphi{extremal} if and only if it passes through $d$ input points,
under the assumption that the points are in general position. The
\emphi{extremal yolk} is the ball of smallest radius intersecting all
extremal median hyperplanes. Importantly, the yolk and the extremal
yolk are different problems---the radius of the yolk and extremal yolk
can differ \cite{st-lml-92}.

\subsubsection{The egg of a point set} %
\seclab{egg:problem}%
A problem related to computing the yolk is the following: For a set of
$n$ points $\PS$ in $\R^d$, compute the smallest radius ball
intersecting all extremal hyperplanes of $\PS$ (i.e., all hyperplanes
passing through $d$ points of $\PS$). Such a ball is the \emphi{egg}
of $\PS$. See \figref{example} for an illustration of the yolk and egg
of a point set.

\subsubsection{Linear programs with many implicit constraints} %
Many of the problems discussed above (e.g., computing the Tukey median
or egg of a point set) can be written as an \LP{} with $\Theta(n^d)$
constraints, defined implicitly by the point set $\PS$. One can apply
Seidel's algorithm \cite{s-sdlpchme-91} (or any other linear time \LP
solver in constant dimension) to obtain an $O(n^d)$ expected time
algorithm for our problems.  However, as each $d$-tuple of points
forms a constraint, it is natural to ask if one can obtain a faster
algorithm in this setting. Specifically, we are interested in the
following problem: Let $I$ be an instance of a $d$-dimensional \LP
specified via a set of $n$ entities $\PS$, where each $k$-tuple of
$\PS$ induces a linear constraint in $I$, for some (constant) integer
$k$. The problem is to efficiently solve $I$, assuming access to some
additional subroutines. (We would also be interested in the more
general settings where not all the tuples induce constraints.)

\subsection{Previous work}

\subsubsection{On computing a Tukey median}
For dimension $d=1$, a Tukey median corresponds to the standard
median, and can be computed in linear time~\cite{clrs-ia-01}, with the
maximum depth being (exactly) $\ceil{n/2}$.

For $d >1$, the maximum Tukey depth is between $\ceil{n/(d+1)}$ and
$\ceil{n/2}$. The lower bound follows from the existence of a
\emphi{centerpoint}, which follows from Helly's theorem. A centerpoint
has depth at least $\ceil{n/(d+1)}$.  The first nontrivial algorithmic
result in the plane, by Cole \etal \cite{csy-khrp-87}, presented an
algorithm for computing a centerpoint in $O(n\log^5n)$ time, using a
two-level application of {parametric search} \cite{m-apcadsa-83}.
Cole's refined parametric-search technique \cite{c-sdsnofsa-87}
subsequently reduced the time bound to $O(n\log^3n)$.  Later, an
$O(n)$ time algorithm for centerpoints in the plane was discovered by
Jadhav and Mukhopadhyay~\cite{jm-ccfpsplt-94}, using a clever
prune-and-search approach.  This algorithm does not solve the (more
general) Tukey median problem.

In 1991, \Matousek \cite{m-ccpps-90} described an algorithm that
decides, in $O(n\log^4n)$ time, whether the maximum Tukey depth is at
least a given value~$k$, using also a two-level parametric search as a
subroutine.  His algorithm constructs a description of the entire
region of all points at depth at least $k$.  Consequently, a binary
search over $k$ yields the maximum Tukey depth and a Tukey median in
$O(n\log^5n)$ time.

In 2000, Langerman and Steiger~\cite{ls-cmdpp-00} obtained a faster
decision algorithm with an $O(n\log^3n)$ running time by using an
alternative to parametric search. This algorithm avoids constructing
the entire depth region.  The additional binary search then leads to
an $O(n\log^4n)$ time bound for Tukey median.  Subsequently, Langerman
and Steiger~\cite{ls-oa-03} showed that the Tukey median problem
itself can be solved in $O(n\log^3n)$ time.  Some extra logarithmic
factors seem inherent in their binary-search-like approach.  There is
an $\Omega(n\log n)$ lower bound on the time complexity of %
(i) computing the maximum Tukey depth, %
(ii) deciding whether the maximum depth is at least~$k$, %
(iii) the depth value of just a single point~$q$, %
or (iv) finding a Tukey median that is extreme along a given direction
\cite{alst-abmftfl-03, ls-cmdpp-00}.  We conjecture that the
$\Omega(n\log n)$ lower bound holds for finding an arbitrary Tukey
median as well.

Extensions to $d=3$ were also considered.  An $O(n^2\,\polylog\, n)$
algorithm for computing a 3-dimensional centerpoint was given by Naor
and Sharir~\cite{ns-cpcptd-90}.  More recently, Agarwal, Sharir, and
Welzl~\cite{AgarwalSW08} and Oh and Ahn~\cite{oa-ccr-19} gave more
near-quadratic algorithms (with extra $n^{o(1)}$ or polylogarithmic
factors) for computing the entire region of all points of depth at
least $k$ in 3 dimensions.

Note that the problem is difficult because of our insistence on using
exact depth values.  Approximate versions of the problem can be solved
considerably more quickly; for example,
see~\cite{cemst-acpir-96,hj-jcps-19,m-ccpps-90}.

\subsubsection{On computing the yolk} %
Both the yolk and the extremal yolk have been studied in the
literature.  The first polynomial time exact algorithm for computing
the yolk in $\R^d$ was by Tovey in $O\bigl(n^{(d+1)^2}\bigr)\Bigr.$
time---in the plane, the running time can be improved to $O(n^4)$
\cite{t-ptacyfd-92}.  Following Tovey, recent results have focused on
computing the yolk in the plane. In 2018, de Berg \etal
\cite{dbgm-facpp-18} gave an $O(n^{4/3} \log^{1 + o(1)} n)$ time
algorithm\footnote{ Actually the running time can be slightly improved
   to $O(n^{4/3})$, using a known randomized algorithm for median
   levels in the plane~\cite{Chan99}.  }  for computing the yolk. The
running time follows from the best known upper bound on the number of
combinatorially distinct median lines, which is $O(n^{4/3})$
\cite{d-ibpksrp-98}.  Obtaining a faster exact algorithm remained an
open problem.  Gudmundsson and Wong \cite{gw-cysvg-19,
   gw-cysvg-19-arxiv} presented a $(1+\eps)$-approximation algorithm
with $O(n \log^7 n \log^4 \eps^{-1} )$ running time. An unpublished
result of d{}e Berg \etal \cite{bcg-cysvg-19} achieves a randomized
$(1+\eps)$-approximation algorithm for the extremal yolk running in
expected time $O(n \eps^{-3} \log^3 n)$.

\subsubsection{On computing the egg} %
The egg of a point set in $\R^d$ can be computed by solving a linear
program with $\Theta(n^d)$ constraints. The egg is a natural extension
to computing the yolk, and thus obtaining faster exact algorithms is
of interest. The authors are not aware of any previous work on this
specific problem. Bhattacharya \etal \cite{bjmr-oasirp-94} gave an
algorithm which computes the smallest radius ball intersecting a set
of $m$ hyperplanes in $O(m)$ time, when $d = O(1)$, by formulating the
problem as an \LP{} (see also \lemref{egg:lp:type}). However we
emphasize that in our problem the set of hyperplanes is implicitly
defined by the point set $\PS$, and is of size $\Theta(n^d)$ in
$\Re^d$.

\begin{table}[t]
    \begin{center}
        \scriptsize
        \begin{tabular}{*{4}{|l}|}
          \multicolumn{1}{c|}{$d=2$}
          & \multicolumn{1}{c|}{$(1+\eps)$-approx}
          & \multicolumn{1}{c|}{Exact}
          & \multicolumn{1}{c}{Our results (Exact)}\\
          \hline\hline
          Extremal yolk
          & \ColY{$O(n \eps^{-3}\log^3 n)$}{\cite{bcg-cysvg-19}}
          & \ColY{$O(n^{4/3} \log^{1 + \epsilon} n)$}{\cite{dbgm-facpp-18}}
          & \ColY{$O(n \log n)$ }{\thmref{discrete:yolk}} \\
          \hline
          Yolk
          & \ColY{$O(n\log^7 n \log^4 \eps^{-1})$}{\cite{gw-cysvg-19-arxiv}}%
          & \ColZ{$O(n^{4/3} \log^{1 + \epsilon} n)$}%
            {Variant of \cite{dbgm-facpp-18}}{3.5cm}%
          & \ColY{$O(n \log n)$ }{\thmref{cont:yolk}} %
          \\
          \hline
          \multicolumn{1}{c|}{$d\ge 3$}
          & \multicolumn{3}{c}{}
          \\%
          \hline
          Extremal yolk
          & \multicolumn{1}{c|}{?}
          & \ColZ{$O(n^d)$}{Known techniques}{3cm}%
          &\ColY{$O(n^{d-1} \log n)$ }{\thmref{discrete:yolk}}%
          \\
          \hline%
          Yolk
          & \multicolumn{1}{c|}{?}
          & \ColZ{$O(n^d)$}{Known techniques}{3cm}%
          &\ColY{$O(n^{d-1})$ } {\thmref{cont:yolk}}%
          \\
          \hline%
        \end{tabular}%
    \end{center}
    \caption{Some previous work on the yolk and our results. Existing
       algorithms are deterministic, while the running time of our
       algorithms holds in expectation.}
    \tbllab{results}
\end{table}

\subsection{Our results}

In this paper we develop a generalized technique for solving \LP{}s
with many implicitly defined constraints. The new technique has
specific requirements to be met so it can be used, and these are
spelled out in \secref{settings:details}. Informally, these
requirements are:
\begin{compactenumI} %
    \smallskip%
    \item The problem can be solved in constant time for constant size
    instances.
    
    \smallskip%
    \item Given a candidate solution for an instance of size $n$, one
    can verify that it is optimal, in $\DX{n}$ time.
    
    \smallskip%
    \item One can break the given instance, in $\DX{n}$ time, into a
    constant number of smaller (by a constant factor) instances, such
    that the union of the implicit constraints they induce is the set
    of original constraints.

    \smallskip%
    \item The function $\DX{n}$ grows fast enough.
\end{compactenumI}
\medskip
Under these requirements, the implicit \LP problem can be solved in
$O\bigl( \DX{n}\bigr)$ time.

In \secref{many:cns} we state the technique and prove the key result
(\thmref{l:p:implicit}).  The technique builds on the work of Chan
\cite{c-garot-99} and leads to efficient algorithms for the following
problems. Throughout, let $\PS \subset \R^d$ be a set of $n$ points in
general position:

\begin{compactenumA}[leftmargin=0.8cm]
    \item In \secref{tukey:depth} we show that the point of maximum
    Tukey depth can be computed in $O(n^{d-1} + n\log n)$ expected
    time. As the problem of detecting affine degeneracies (the
    existence of $d$ points on a common hyperplane) among $n$ points
    in $\R^{d-1}$ is believed to require $\Omega(n^{d-1})$
    time~\cite{e-nlbchpod-99} and can be reduced to computing the
    maximum Tukey depth in $\R^d$, our result is likely to be optimal
    for $d \ge 3$ as well.  Note that as a byproduct, we get an
    improved $O(n^2)$ time randomized algorithm for centerpoints in
    $\R^3$. %

    \medskip
    \item In \secref{depth:region} we show how to compute in the plane
    the convex polygon forming the points of Tukey depth at least
    $k$. The new algorithm has running time $O(n \log^2 n)$, and
    improves over the work of \Matousek~\cite{m-ccpps-90}, that worked
    in $O(n \log^4 n)$ time.
    
    \medskip%
    \item The yolk of $\PS$ can be computed \emph{exactly} in
    $O(n^{d-1} + n \log n)$ expected time.  The extremal yolk can be
    computed in $O(n^{d-1} \log n)$ time.  Hence in the plane, the
    yolk can be computed in $O(n\log n)$ expected time.  This improves
    all existing algorithms (both exact and approximate)
    \cite{t-ptacyfd-92, dbgm-facpp-18, gw-cysvg-19-arxiv, gw-cysvg-19,
       bcg-cysvg-19} for computing the yolk in the plane, and our
    algorithm easily generalizes to higher dimensions.  See
    \tblref{results} for a summary of our results and previous work.

    \medskip%
    \item By a straightforward modification of the above algorithm, in
    \lemref{higher:dim:egg}, we prove that the egg of $\PS$ can be
    computed in $O(n^{d-1} \log n)$ expected time.

    \medskip%
    \item %
    Let $\PlanesX{k}(\PS)$ be the collection of all open halfspaces
    which contain more than $n-k$ points of $\PS$. Consider the convex
    polygon $\TP_k = \bigcap_{\hs \in \PlanesX{k}(\PS)}^{} \hs$.  Observe
    that $\TP_1$ is the convex hull of $\PS$, with
    $\TP_1 \supseteq \TP_2 \supseteq \cdots$. The centerpoint theorem
    implies that $\TP_{n/(d+1)}$ is non-empty (and contains the
    centerpoint). The Tukey depth of a point $\pq$ is the minimal $k$
    such that $\pq \in \TP_k \setminus \TP_{k+1}$.

    When $\TP_k$ is non-empty, the \emphi{center ball} of $\PS$ is the
    ball of largest radius contained inside $\TP_k$.  For $\TP_k$
    empty, we define the \emphi{Tukey ball} of $\PS$ as the smallest
    radius ball intersecting all halfspaces of $\PlanesX{k}(\PS)$.

    In \secref{tukey:center} we show that the Tukey ball and center
    ball can both be computed in
    $\tldO\bigl( k^{d-1}\big[1 + (n/k)^{\floor{d/2}}\big] \bigr)$
    expected time (see \lemref{tukey:ball:k} and
    \lemref{center:ball:k}, respectively), where $\tldO$ hides polylog
    terms.  In particular, when $k$ is a (small) constant, a point of
    Tukey depth $k$ can be computed in time
    $\tldO(n^{\floor{d/2}})$. As mentioned above, for
    $k \leq n/(d+1)$, the centerpoint has Tukey depth $\geq k$. As
    such, the issue here is computing such a point (and not deciding
    its existence).  This improves on the algorithm for computing a
    point of Tukey depth at least $k$, when $k \ll n$.
    \medskip%
    \item In \secref{more:apps}, we present the last application:
    Given a set $L$ of $n$ lines in the plane, the \emphi{crossing
       distance} between two points $\pa, \pq \in \R^2$ is the number
    of lines of $L$ intersecting the segment $\pa\pq$.  Given a point
    $\pq \in \R^2$ not lying on any lines of $L$, the disk of smallest
    radius containing all vertices of $\ArrX{L}$ within crossing
    distance at most $k$ from $\pq$ can be computed in $O(n\log n)$
    expected time.  See \lemref{cross:verts}.

\end{compactenumA}

\paragraph*{Paper organization} %
We provide some needed preliminaries in \secref{prelim}.  We study
some variants of \LP-type problems in \secref{variants} --
specifically, ranking and batched \LP-type problems. The main
technique is presented in \secref{many:cns}.  We present an algorithm
for the Tukey depth problem in \secref{tukey:depth}.  The algorithm
for the extremal yolk is presented in \secref{extremal:yolk}.  The
algorithm for the continuous yolk is presented in
\secref{continuous:yolk}.  The algorithms for Tukey ball and center
ball are presented in \secref{tukey:center}.  In \secref{more:apps},
we present the algorithm for computing the smallest disk within
certain crossing distance.  We conclude in \secref{conclusions} with a
few final remarks.

\section{Preliminaries}
\seclab{prelim}

\paragraph*{Notation} %
Throughout, the $O$ notation hides factors which depend (usually
exponentially) on the dimension $d$. Additionally, the $\tldO$
notation hides factors of the form $\log^c n$, where $c$ is a constant
that may depend on $d$.

\subsection{Duality}

\begin{definition}[Duality]
    \deflab{duality} The \emphi{dual hyperplane} of a point
    $\pa = (\pa_1, \ldots, \pa_d) \in \R^d$ is the hyperplane
    $\Dual{\pa}$ defined by the equation
    $x_d = -p_d + \sum_{i=1}^{d-1} x_i p_i$. The \emphi{dual point} of
    a hyperplane $h$ defined by $x_d = a_d + \sum_{i=1}^{d-1} a_i x_i$
    is the point
    \begin{equation*}
        \Dual{h} = (a_1, a_2, \ldots, a_{d-1}, -a_d).        
    \end{equation*}
\end{definition}

\begin{fact}
    \factlab{reverse}%
    Let $\pa$ be a point and let $h$ be a hyperplane. Then $\pa$ lies
    above $h$ $\iff$ the hyperplane $\Dual{\pa}$ lies below the point
    $\Dual{h}$.
\end{fact}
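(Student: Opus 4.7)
The plan is to unfold each side of the biconditional into an explicit algebraic inequality using the formulas given in \defref{duality}, and then observe that both sides reduce to the same inequality. Writing $\pa = (p_1,\ldots,p_d)$ and using $h : x_d = a_d + \sum_{i=1}^{d-1} a_i x_i$, the statement that $\pa$ lies above $h$ is, by definition, the single inequality
\[
p_d \;>\; a_d + \sum_{i=1}^{d-1} a_i \, p_i.
\]

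Next I would substitute $\Dual{h} = (a_1,\ldots,a_{d-1},-a_d)$ into the equation of the dual hyperplane $\Dual{\pa} : x_d = -p_d + \sum_{i=1}^{d-1} x_i \, p_i$. The statement that $\Dual{\pa}$ lies below $\Dual{h}$ means the last coordinate of $\Dual{h}$ strictly exceeds the value of $\Dual{\pa}$'s defining equation at the first $d-1$ coordinates of $\Dual{h}$, that is,
\[
-a_d \;>\; -p_d + \sum_{i=1}^{d-1} a_i \, p_i.
\]
Rearranging by moving $-p_d$ to the left and $-a_d$ to the right gives exactly the inequality in the first display, so the two conditions are equivalent.

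The only point requiring care is sign bookkeeping: the negations attached to $p_d$ in $\Dual{\pa}$ and to $a_d$ in $\Dual{\pa}$ are precisely what cause the ``above'' relation in the primal to swap to a ``below'' relation in the dual. Since this swap is built into the transform, no further work is needed and the fact follows immediately from the two displayed inequalities.
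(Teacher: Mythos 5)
Your proof is correct: the paper states this as a Fact without giving a proof (it is a standard consequence of the point--hyperplane duality transform), and your argument of unfolding both sides of the biconditional into the same algebraic inequality is precisely the standard, direct verification. The sign bookkeeping you highlight (the $-p_d$ in $\Dual{\pa}$ and the $-a_d$ in $\Dual{h}$) is indeed the essential mechanism that reverses ``above'' to ``below,'' and the two displayed inequalities rearrange to one another as you say.
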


Given a set of objects $T$ (e.g., points in $\R^d$), let
$\Dual{T} = \Set{\Dual{x}}{x \in T}$ denote the dual set of objects.

\subsection{\TPDF{$k$}{k}-Levels}

\begin{definition}[Levels]
    \deflab{levels}%
    \deflab{top:bottom:lvls}%
    For a collection of hyperplanes $H$ in $\R^d$, the \emphi{level}
    of a point $\pa \in \R^d$, denoted by $\levelX{\pa}$, is the
    number of hyperplanes of $H$ lying on or below $\pa$.  The bottom
    \emphi{$k$-level} of $H$ is the (closure of the) union of points
    in $\R^d$ which have level equal to $k$, and let $\BY{k}{H}$
    denote the set of all such points.  The (bottom)
    \emphi{$(\leq k)$-level} of $H$ is the union of points in $\R^d$
    which have level at most $k$.  Let
    $\BY{<k}{H} = \bigcup_{i=0}^{k-1} \BY{i}{H}$.

    The top $k$-level is defined analogously (i.e., all points that
    have $k$ hyperplanes above them).  We denote the \emph{top
       $k$-level} by $\TY{k}{H}$.  Let
    $\TY{<k}{H} = \bigcup_{i=0}^{k-1} \TY{i}{H}$.
\end{definition}

By \factref{reverse}, if $h$ is a hyperplane which contains $k$ points
of $\PS$ lying on or above it, then the dual point $\Dual{h}$ is a
member of the $k$-level of $\Dual{\PS}$.

\subsection{Zones of surfaces}

For a set of hyperplanes $H$, denote the arrangement of $H$ by
$\ArrX{H}$ (see, e.g., \cite{bcko-cgaa-08}).

\begin{definition}[Zone of a surface]
    For a collection of hyperplanes $H$ in $\R^d$, the complexity of a
    cell $\psi$ in the arrangement $\ArrX{H}$ is the number of faces
    (of all dimensions) which are contained in the closure of
    $\psi$. For a $(d-1)$-dimensional surface $\gamma$, the
    \emphi{zone} $\ZoneY{\gamma}{H}$ of $\gamma$ is the subset of
    cells of $\ArrX{H}$ which intersect $\gamma$. The
    \emphi{complexity of a zone} is the sum of the complexities of the
    cells in $\ZoneY{\gamma}{H}$.
\end{definition}

The complexity of a zone of a hyperplane is known to be
$\Theta(n^{d-1})$ \cite{ess-ztha-93}. For general algebraic surfaces
it is larger by a logarithmic factor. Furthermore, the cells in the
zone of a surface can be computed efficiently using lazy randomized
incremental construction \cite{dbds-lric-95}.

\begin{lemma}[\cite{aps-ozsha-93,dbds-lric-95}]
    \lemlab{zone:complexity} Let $H$ be a set of $n$ hyperplanes in
    $\R^d$ and let $\gamma$ be a $(d-1)$-dimensional algebraic surface
    of degree $\delta$.  The complexity of the zone
    $\ZoneY{\gamma}{H}$ is $O(n^{d-1} \log n)$, where the hidden
    constants depend on $d$ and $\delta$. The collection of cells in
    $\ZoneY{\gamma}{H}$ can be computed in $O(n^{d-1} \log n)$
    expected time.
\end{lemma}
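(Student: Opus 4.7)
The result has two parts: a combinatorial bound on the zone complexity, and an algorithmic construction matching that bound in expected time. Both are classical and I would prove them separately, following the approach of Aronov, Pellegrini, and Sharir for the combinatorial part and de Berg, Dobrindt, and Schwarzkopf for the algorithmic part.

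For the combinatorial bound, the plan is to apply the Clarkson--Shor random sampling framework to a canonical decomposition of the arrangement. First I would fix a bottom-vertex (or vertical) decomposition of $\ArrX{H}$ into simplices, each defined by a constant number $O(d)$ of hyperplanes from $H$. A simplex contributes to the zone if and only if its containing cell meets $\gamma$, so it suffices to count such ``conflict-free'' simplices. For the hyperplane case (when $\gamma$ is itself a hyperplane), the classical zone theorem of Edelsbrunner, Seidel, and Sharir gives $\Theta(n^{d-1})$ by induction on $d$ with a charging argument across the pieces of $\gamma$ split by the other hyperplanes. When $\gamma$ is an algebraic surface of degree $\delta$, the charging breaks because $\gamma$ is curved; instead one uses a divide-and-conquer via a random sample $R \subseteq H$ of size $r$. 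The surface $\gamma$ intersects at most $O(r^{d-1})$ cells of $\ArrX{R}$ (using general position and $\deg\gamma = \delta$), and inside each such cell one recurses on the conflict list, using Clarkson--Shor to take expectation. The extra logarithmic factor arises when the recursion is balanced against the number of conflict subproblems, as in $\cite{aps-ozsha-93}$.

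For the algorithmic part, the plan is to use the lazy randomized incremental construction (LRIC) of $\cite{dBds-lric-95}$. Insert the hyperplanes of $H$ one at a time in a uniformly random order, maintaining only the portion of the arrangement lying in the zone $\ZoneY{\gamma}{H_i}$, where $H_i$ is the prefix inserted so far. For each cell in the current zone, maintain a conflict list of not-yet-inserted hyperplanes that cross that cell. When a new hyperplane $h$ arrives, locate the cells it conflicts with via the conflict lists, subdivide only those, and update conflict lists for the offspring. The ``lazy'' aspect defers certain reconstructions so that the amortized work per insertion matches the expected size of the zone divided by $i$ at step $i$; summing over $i$ gives $O(n^{d-1}\log n)$ expected time by backward analysis, matching the combinatorial bound.

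The main technical obstacle I expect is handling the algebraic surface correctly in the LRIC, since the cells of the zone have curved $\gamma$-boundary pieces. One must work with a vertical decomposition refinement so that the primitive operations (point location, splitting a cell by $h$, and updating conflict lists) are each of constant cost per simplex produced, and so that the Clarkson--Shor bounds on expected conflict-list sizes apply. Once those primitives are in place, the running time bound follows routinely from the combinatorial bound together with the standard LRIC analysis.
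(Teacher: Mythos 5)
The paper does not supply a proof of \lemref{zone:complexity}; it imports it directly from \cite{aps-ozsha-93} (the $O(n^{d-1}\log n)$ zone bound for a low-degree algebraic surface) and \cite{dBds-lric-95} (lazy randomized incremental construction to compute the zone within that bound). Your proposal correctly identifies both sources and sketches the corresponding arguments---the Clarkson--Shor random-sampling/recursion framework for the combinatorial bound and LRIC with conflict lists for the algorithm---so it is consistent with the paper's treatment.
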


\subsection{Cuttings}
\seclab{cuttings}

\begin{defn}[Cuttings]
    Given $n$ hyperplanes in $\Re^d$, a \emphi{$(1/c)$-cutting} is a
    collection of interior disjoint simplices covering $\Re^d$, such
    that each simplex intersects at most $n/c$ hyperplanes.
\end{defn}

\begin{lemma}[\cite{c-chdc-93}]
    \lemlab{cuttings}%
    Given a collection of $n$ hyperplanes in $\Re^d$, a
    $(1/c)$-cutting of size $O(c^d)$ can be constructed in
    $O(nc^{d-1})$ time.
\end{lemma}

\subsection{LP-type problems}
\seclab{tm:lptype}

An \LP{}-type problem, introduced by Sharir and Welzl
\cite{sw-cblprp-92}, is a generalization of a linear program. Let
$\CN$ be a set of constrains and $\Objv$ be an objective function. For
any $\CNA \subseteq \CN$, let $\ObjvX{\CNA}$ denote the value of the
optimal solution for the constraints of $\CNA$. The goal is to compute
$\ObjvX{\CN}$. If the problem is infeasible, let
$\ObjvX{\CN} = \infty$. Similarly, define $\ObjvX{\CN} = -\infty$ if
the problem is unbounded.

\begin{definition}
    \deflab{lp:type}%
    Let $\CN$ be a set of constraints, and let
    $\Objv : 2^\CN \to \Re \cup \{\infty, -\infty\}$ be an objective
    function. The tuple $(\CN, \Objv)$ forms an \emphi{\LP{}-type
       problem} if the following properties hold:
    \begin{compactenumA}
        \smallskip%
        \item \itemlab{monotonicity}%
        \textbf{Monotonicity.} For any
        $\CNa \subseteq \CNb \subseteq \CN$, we have
        $\ObjvX{\CNa} \leq \ObjvX{\CNb}$.

        \smallskip%
        \item \itemlab{locality}%
        \textbf{Locality.} For any $\CNa \subseteq \CNb \subseteq \CN$
        with $\ObjvX{\CNb} = \ObjvX{\CNa} > -\infty$, and for all
        $s \in \CN$,
        $\ObjvX{\CNb} < \ObjvX{\CNb + s} \iff \ObjvX{\CNa} <
        \ObjvX{\CNa + s}$, where $\CNa + s = \CNa \cup \brc{s}$.
    \end{compactenumA}
\end{definition}

A \emphi{basis} of a set $\CN' \subseteq \CN$ is an inclusion-wise
minimal subset $\Basis \subseteq \CN'$ with
$\ObjvX{\Basis} = \ObjvX{\CN'}$.  The \emphi{combinatorial dimension}
$\CD$ is the maximum size of any feasible basis of any subset $\CN'$
of $\CN$. Throughout, we consider $\CD$ to be a constant. For a basis
$\Basis \subseteq \CN$, a constraint $\cons \in \CN$ \emphi{violates}
the current solution induced by $\Basis$ if
$\ObjvX{\Basis + \cons} > \ObjvX{\Basis}$.  \LP{}-type problems with
$n$ constraints can be solved in randomized time $O(n)$, hiding
constants depending (exponentially) on $\CD$ \cite{c-lvalipds-95},
where the bound on the running time holds with high probability.

\begin{remark}
    The aforementioned algorithms for solving \LP-type problems
    require certain primitives to be provided, such as testing for a
    basis violation, and computing the basis for a small set of
    constraints. In the following, we assume that such primitives are
    provided when considering any \LP-type problem.
\end{remark}

\section{Variants of \LP-type problems}
\seclab{variants}

\subsection{Ranking \LP-type problem}
\seclab{ranking:LP}

Let $\CN$ be a set of constraints, and assume that each constraint
$\hp $ has an associated rank $\rankX{\hp} \in \Re^+$ (importantly for
our purposes, the ranks are not necessarily distinct).  Assume that
$(\CN, \Objv)$ forms an instance of \LP-type (minimization)
problem. This instance might not be feasible (i.e.,
$\ObjvX{\CN} = +\infty$), so consider the parameterized instance.
Here, for a number $\alpha \in \Re$, we consider the \LP-type problem
instance $\Instance(\alpha)$ formed by the set of constraints
\begin{equation*}
    \CN_{\geq \alpha} = \Set{ \hp \in \CN}{\rankX{\hp} \geq \alpha}.
\end{equation*}
Let $\rankMinX{\CN}$ be the minimum positive real value of $\alpha$
such that $\CN_{\geq \alpha}$ is feasible.  For a set of constraints
$\CNa \subseteq \CN$, consider the target function
$\ObjvB(\CNa) = (\alpha,\beta )$, where $\alpha=\rankMinX{\CNa}$ and
$\beta = \ObjvX{\CNa_{\geq \alpha}}$. Let $\prec$ be the
lexicographical ordering of $\Re^2$ (i.e., $(x,y)\prec (x',y')$ $\iff$
$x < x'$ or [$x=x'$ and $y < y'$]).  The new optimization problem
$(\CN, \ObjvB )$, is to compute $\ObjvB( \CN)$.  Thus computing the
minimum $\alpha$, such that $\CN_{\geq \alpha}$ is feasible, and the
associated original \LP value for this subset.  We refer to
$(\CN, \ObjvB)$ as the \emphi{ranking} problem associated with
$(\CN, \Objv)$.

\begin{lemma}
    \lemlab{ranking:lp}%
    Given an \LP-type problem $(\CN, \Objv)$ of combinatorial
    dimension $d$, with associated ranks on the constraints of $\CN$,
    the ranking problem $(\CN, \ObjvB )$ is an \LP-type problem of
    combinatorial dimension $d+1$.
\end{lemma}
\begin{proof}
    The proof is straightforward, and we include it only for the sake
    of completeness.

    The basis of $\ObjvB( \CN)$ is a minimal set
    $\Basis \subseteq \CN$, such that
    $\ObjvB(\Basis) = \ObjvB( \CN )$.  A constraint $\cnx \in \CN$
    \emph{violates} $\Basis$, if $\rankX{\cnx} \geq \rankX{\Basis}$
    and $\Objv(\Basis + \cnx) > \Objv(\Basis)$, where
    $\rankX{\Basis} = \min_{\hp \in \Basis} \rankX{ \hp}$. As such, a
    basis of the ranking problem, is a basis of the original problem
    with potentially one additional constraint that realizes/records
    the minimum realizable rank subset.

    We now verify the required \LP properties from \defref{lp:type}:
    \begin{compactenumA}
        \smallskip%
        \item \textsc{Monotonicity}: For any
        $\CNa \subseteq \CNb \subseteq \CN$, let $\alpha$ be the
        minimum value such that $\CNb_{\geq \alpha}$ is feasible for
        $\Objv$. Clearly,
        $\CNa_{\geq \alpha} \subseteq \CNb_{\geq \alpha}$, which
        implies that $\CNa_{\geq \alpha}$ is feasible for
        $\Objv$. That is, we have $\ObjvB(\CNa) \preceq \ObjvB(\CNb)$.

        \medskip%
        \item \textsc{Locality}: Consider any
        $\CNa \subseteq \CNb \subseteq \CN$ with
        $\ObjvB(\CNb) = \ObjvB(\CNa) \succ (0, -\infty)$.

        Consider any $s \in \CN$, such that
        $(\alpha,\beta) = \ObjvB(\CNb) \prec \ObjvB(\CNb + s) =
        (\alpha',\beta')$. If $\alpha = \alpha'$ then
        $\beta = \Objv(\CNa_{\geq \alpha}) < \Objv(\CNa_{\geq \alpha}
        + s) = \beta'$.  The locality of $\Objv$ implies that
        $\Objv(\CNb_{\geq \alpha}) < \Objv(\CNb_{\geq \alpha} +
        s)$. This implies that $\ObjvB(\CNa) \prec \ObjvB(\CNa + s)$,
        which implies the locality property.
        
        Otherwise, $\alpha' > \alpha$. This implies that
        $\CNc_{\geq \alpha} + s$ is not feasible, which implies
        $\Objv(\CNb_{\geq \alpha}) < \Objv(\CNb_{\geq \alpha} + s)$.
        By the locality of $\Objv$, we have that
        $\Objv( \CNa_{\geq \alpha} + s ) > \Objv( \CNa_{\geq \alpha}
        )$. But that readily implies that
        $\ObjvB( \CNa + s ) \succ \ObjvB( \CNa ) = (\alpha,\beta)$.

        As for the other direction, if
        $\ObjvB(\CNa) \prec \ObjvB(\CNa + s) $ then by monotonicity,
        we have that $\ObjvB(\CNb) \prec \ObjvB(\CNb + s) $, which
        implies locality.
    \end{compactenumA}
\end{proof}

\subsection{Batched \LP}
\seclab{batched:lp}

Intuitively, one can group constraints of an \LP-type problem so that
each group form its own ``constraint'', and the modified problem
remains an \LP-type problem. This is captured by the following
definition.

\begin{definition}[Batched \LP{}-type problems]
    \deflab{batched}%
    Let $(\CN, \Objv)$ be an \LP{}-type problem.  A \emphi{batched}
    \LP-type problem is defined by the constraint set $2^\CN$ and the
    objective function
    $\ObjvExt : 2^{2^\CN} \to \R \cup \{\infty, -\infty\}$. For
    non-empty $\CNa_1, \ldots, \CNa_m \subseteq \CN$, define
    \begin{math}
        \ObjvExt(\{\CNa_1, \ldots, \CNa_m\}) := \Objv(\CNa_1 \cup
        \ldots \cup \CNa_m)
    \end{math}.
\end{definition}

\begin{lemma}
    \lemlab{same:cd}%
    Let $(\CN, \Objv)$ be an \LP{}-type problem of combinatorial
    dimension $\lpdim$, and let $\CNP = 2^\CN$. Then
    $(\CNP, \ObjvExt )$ is an \LP{}-type problem with combinatorial
    dimension $\lpdim$.
\end{lemma}
\begin{proof}
    The proof of this lemma is straightforward -- we provide a proof
    for the sake of completeness, but the reader is encouraged to skip
    it.  For any sets $\CNA \subseteq \CNB \subseteq \CNP$, we have
    that
    \begin{equation*}
        \ObjvExt(\CNA)%
        =%
        \Objv( \cup \CNA)%
        \leq%
        \Objv( \cup \CNB)%
        \leq
        \ObjvExt(\CNB),
    \end{equation*}
    by the monotonicity of $\Objv$, see \defref{lp:type}, where
    $\cup \CNA = \cup_{X \in \CNA} X$. This readily implies the
    monotonicity of $\ObjvExt$.

    For any $\CNA \subseteq \CNB \subseteq 2^\CN$ with
    $\ObjvExt( \CNB ) = \ObjvExt( \CNA ) > -\infty$, and for all
    $\CNs \in \CNP$, we have that
    \begin{align*}
      \ObjvExtX{\CNB} < \ObjvExtX{\CNB + \CNs}
      \SICOMP{&}
                \iff
                \ObjvX{\cup \CNB } < \ObjvX{\cup\CNB \cup \CNs}
                \iff
                \ObjvX{\cup \CNA } < \ObjvX{\cup\CNA \cup \CNs}
                \SICOMP{\\&}
      \iff
      \ObjvExtX{\CNA} < \ObjvExtX{\CNA + \CNs},
    \end{align*}
    by the locality of $\Objv$. This implies the locality of
    $\ObjvExt$.

    As for the combinatorial dimension, consider any family of sets
    \begin{equation*}
        \CNA = \{ \CNa_1, \ldots, \CNa_m \} \subseteq \CN,        
    \end{equation*}
    and let $\CNX = \{\cnx_1, \ldots, \cnx_t\}$ be the basis of
    $\ObjvX{ \cup_i \CNa_i}$. By assumption, $t \leq \lpdim$.  Assume,
    for simplicity of exposition, that $\cnx_i \in \CNa_i$, for all
    $i$. We have that $\CB = \{ \CNa_1,\ldots, \CNa_t \}$ is a basis
    for $\CNA$, for $\ObjvExt$, as
    \begin{equation*}
        \ObjvExtX{\CNA}%
        =%
        \ObjvX{\cup_i\CNa_i}        
        =%
        \ObjvX{ \CNX}
        \leq
        \ObjvExtX{\CB}%
        \leq 
        \ObjvExtX{\CNA}%
        \quad\implies\quad%
        \ObjvExtX{\CB}%
        =\ObjvExtX{\CNA}.
    \end{equation*}
    That readily implies that the combinatorial dimension of
    $(\CNP, \ObjvExt )$ is $\lpdim$.
\end{proof}

\subsection{Solving \LP-Type problem for bundles using %
   Seidel's algorithm}
\seclab{lp:bundles}

Let $\Instance = (\CN, \Objv)$ be an instance of an \LP{}-type problem
with combinatorial dimension $\lpdim$.

\begin{defn}
    Given a set $\CN$ of constraints, a subset of
    $\Bundle \subseteq \CN$ is a \emphi{bundle}. A collection
    $\Bundle_1, \ldots, \Bundle_n$ of bundles is a \emphi{cover} of
    $\CN$ if $\bigcup_i \Bundle_i = \CN$.
\end{defn}

For a given cover $\Bundle_1, \ldots, \Bundle_n$, and a basis
$\Basis$, assume that one can compute the basis of
$\Basis \cup \Bundle_i$ in $\DT$ time, for any $i$, by calling a
procedure \compBasis. Furthermore, assume that one can also check if
any of the constraints of $\Bundle_i$ violates $\Basis$, in $\DT$
time, by calling a procedure \violate. It is natural to ask if one can
solve $\Instance$ quickly. This question makes sense if $\DT$ is
sublinear in the number of constraints in a bundle.

We are going to use a variant of Seidel's algorithm for solving
\LP-type problems, which also works for violator spaces
\cite{s-sdlpchme-91, h-gaa-11,h-sppss-16} (other algorithms for
solving \LP-type problems can be used here).  For the sake of
completeness, we next sketch this algorithm.  See also
\figref{LP:type:algorithm}.

\begin{figure}[t]
    \centerline{%
       \begin{algorithmEnv}
           \solveLPType{}$\pth{\Basis_0, \Bundle_1, \ldots, \Bundle_m }$ \+\+\\
           // $\Basis_0$: initial basis \\
           $\permut{\Bundle_1', \ldots, \Bundle_n'}$: random
           permutation of
           $\Bundle_1, \ldots, \Bundle_m$.\\
           \For $i=1$ to $m$ \Do\+\\
           \If \violate{}($\Bundle_i'$, $\Basis_{i-1}$) \Then\+\\
           $\BasisB_i \leftarrow \compBasis\pth{ \Basis_{i-1}, \Bundle_i'}$\\
           $\Basis_i \leftarrow \solveLPType\pth{\BasisB_i \cup
              \Basis_0, \Bundle_1', \ldots, \Bundle_i'}$
           \-\\
           \Else \+\\
           $\Basis_i \leftarrow \Basis_{i-1}$
           \-\-\\
           \Return $\Basis_n$
       \end{algorithmEnv}%
    }
    \caption{The algorithm for solving \LP-type problems.}
    \figlab{LP:type:algorithm}
\end{figure}

The input is an initial basis $\Basis_0$ (made out of at most $\lpdim$
constraints), and $n$ bundles of constraints
$\Bundle_1, \ldots, \Bundle_n$.  The algorithm picks uniformly at
random a permutation $\pi$ of $\IRX{n} = \{1,\ldots, n\}$.  In the
$i$\th iteration, the algorithm adds $\Bundle_i' = \Bundle_{\pi(i)}$
to the current set of constraints, maintaining the optimal solution to
$\CNX_i= \{ \Bundle_1', \ldots, \Bundle_i' \} \cup \Basis_0$.  To this
end, the algorithm maintains a basis $\Basis_i \subseteq \CNX_i$ of
the solution for the constraints of $\CNX_i$. Initially, the algorithm
sets $\Basis_0 = \Basis$.

In the beginning of the $i$\th iteration, the algorithm uses the
violation test (i.e., \violate) to decide if any of the constraints of
$\Bundle_i'$ violates $\Basis_{i-1}$. If there is no violation, the
algorithm sets $\Basis_i$ to $\Basis_{i-1}$, and continues to the next
iteration.  Otherwise, the algorithm computes the basis $\BasisB_i$ of
$\Basis_{i-1} \cup \Bundle_i'$ by calling \compBasis. The algorithm
then computes $\Basis_i$ by calling itself recursively with the
initial ``basis'' $\BasisB_i \cup \Basis_0$ and the bundles
$\Bundle_1', \ldots, \Bundle_i'$ (this second call is on \LP-type
instance with smaller combinatorial dimension).  At the end, the
algorithm returns $\Basis_m$ as the basis of the solution.

The correctness of this algorithm is immediate by interpreting this
problem as an instance of batched \LP. See \lemref{same:cd}. In
particular, in this variant of the algorithm, the depth of the
recursion is at most $\lpdim$ \cite{h-sppss-16}, and as such the
initial ``basis'' set is of size at most $\lpdim^2$ (i.e., constant).
We need the following well known result.

\begin{lemma}[\cite{sw-cblprp-92, h-gaa-11, h-sppss-16}]
    \lemlab{num:calls:new}%
    The input is an instance $\Instance = (\CN, \Objv)$ of an
    \LP{}-type problem with constant combinatorial dimension $\lpdim$.
    In addition, the input also includes a cover $\CN$ by $n$ bundles
    $\Bundle_1, \ldots, \Bundle_n$, and procedures \violate and
    \compBasis as described above, where each call takes $\DT$
    time. For this input, the above algorithm computes the optimal
    solution to $\Instance$ in $O( n \DT )$ time, and
    $O\pth{(\lpdim \log m)^\lpdim}$ calls to \compBasis.
\end{lemma}

\begin{remark}
    It is possible to further reduce the number of calls to
    \compBasis\ to $O(\lpdim^{O(\lpdim)}\log m)$ by using Clarkson's
    randomized \LP algorithm~\cite{c-lvalipds-95} instead of Seidel's,
    though such an improvement will not be needed in our applications.
\end{remark}

\section{An optimization technique for implicit \LP{}-type problems}
\seclab{many:cns}

The main challenge in implementing the algorithm of
\secref{lp:bundles} for bundles is that we need to provide the
procedures \violate and \compBasis. A natural approach, if we have a
way to break a bundle into subbundles, is to use recursion. For this
scheme to make sense, the number of constraints defined by a bundle
has to be defined implicitly, and be superlinear in the number of
entities defining a bundle.

\subsection{Settings and basic idea}
\seclab{settings}

\subsubsection{Instance of implicit \LP}
\seclab{settings:details}

Let $(\CN, \Objv)$ be an \LP{}-type problem of constant combinatorial
dimension $\lpdim$. Let $\sza$ be an integer constant and $\szb > 1$
be another constant.  For an input space $\Pi$, suppose that there is
a function $\Impl : \Pi \to 2^\CN$ which maps inputs to sets of
constraints.  Furthermore, assume that for any input $\PS \in \Pi$ of
size $n$, we have the following properties: \smallskip%
\begin{compactenumI}[leftmargin=1.4cm,label={(P\arabic*)}]%
    \item \itemlab{base}%
    When $n = O(1)$, a basis for $\Impl (\PS)$ can be computed in
    constant time.
    
    \smallskip%
    \item \itemlab{verify}%
    We are given a violation test that, for a basis $\Basis$, decides
    if $\Basis$ satisfies $\Impl(\PS)$ in $\DX{n}$ time.  (Let
    \violate be the name of this procedure.)

    \smallskip%
    \item \itemlab{partition} In $\DX{n}$ time, one can construct sets
    $\PS_1, \ldots, \PS_\sza \in \Pi$, each of size at most $n/\szb$,
    such that $\Impl(\PS) = \bigcup_{i=1}^\sza \Impl(\PS_i)$.

    \smallskip
    \item \itemlab{sub} The function $\DX{n}/n^\eps$ is monotonically
    increasing for some constant $\eps>0$.
\end{compactenumI}
\smallskip%
The above is an \emphi{instance} of the implicit \LP problem.

\begin{remark}
    \remlab{replacement}%
    Note that the above condition on $\DX{n}$ in particular implies
    that for any $1 \leq k \leq n$, we have
    $\DX{n/k}/(n/k)^\eps \le \DX{n}/n^\eps$, i.e.,
    $\DX{n/k} \le \DX{n}/k^\eps$.
\end{remark}

\subsubsection{An assumption}

An annoying technicality is that to bound the running time of the
resulting algorithm, we need a strong assumption on the parameters
used in the above instance. We tackle issue next, but the casual
reader can safely skip to \secref{egg:plane}.

Specifically, for fixed constants $\eps \in (0,1)$, and $c > 1$, the
required property is that
\begin{equation}
    \frac{c \log^\lpdim \sza}{\szb^\eps} <  1,
    \eqlab{weird}
\end{equation}
where $\sza$ is the number of sets in the partition, and $n/\szb$ is
the bound on the size on each set of the partition, see
\itemref{partition}.  If the given instance does not have this
property, then one can modify the instance, to get a new instance that
has this property, as testified by the following.

\begin{lemma}
    \lemlab{weird}%
    Consider a fixed constant $\eps \in (0,1)$, and some fixed
    constant $c > 1$.  Given an instance of implicit \LP, one can
    create a new instance such that \Eqref{weird} holds. The
    asymptotic running time of the new partition procedure is the
    same.
\end{lemma}
\begin{proof}
    The idea is to apply the decomposition recursively for, say, $i$
    levels. We get sets $\PS_1, \ldots, \PS_m \subseteq \PS$, with
    $m \leq \sza^i$, where each set is of size at most
    $n/\szb^i$. This yields a finer decomposition into a larger number
    of smaller sets, and we can use this decomposition in the above
    settings.
    Then
    \begin{equation*}
        c \frac{ \log^\lpdim \sza^i}{ \szb^{\eps i}}
        =%
        c \frac{ i^\lpdim \log^\lpdim \sza}{ \szb^{\eps i}}        
        < 1,
    \end{equation*}
    by choosing $i$ to be a sufficiently large constant (depending
    only on $\sza,\szb,\lpdim,\eps,c$), since $\szb > 1$.  Thus, using
    this decomposition with $\sza^i$ sets, and shrinking factor
    $\szb^i$, implies a new instance of the problem that satisfies the
    claim.
\end{proof}

\subsubsection{Example: Egg in the plane}
\seclab{egg:plane}

Consider the egg problem in the plane, as defined in
\secref{egg:problem}.  The input space $\Pi$ is a set of $n$ points
$\PS$ in the plane. For a pair of points $\pa, \pb \in \PS$, consider
the line $\Line$ with equation $\alpha x + \beta y + \gamma =0$ that
passes through these two points, where $\alpha^2 + \beta^2 =
1$. Consider a point $\pc = (x,y,r)$ in three dimensions. The point
$\pc$ encodes a disk of radius $r$ centered at $(x,y)$. This disk
intersects $\Line$ if and only if
\begin{equation*}
    -r \leq \alpha x + \beta y + \gamma \leq r.
\end{equation*}
This condition corresponds to two linear inequalities, and let
$\Impl\bigl( \{\pa,\pb\})$ denote the set of these two
inequalities. For a set $\QS \subseteq \PS$, the associated set of
linear constraints is
\begin{equation*}
    \Impl\bigl( \QS )%
    =%
    \bigcup_{\pa, \pb \in \QS, \pa \neq \pb} \Impl\bigl( \{\pa,\pb\}).
\end{equation*}
Thus, to compute the smallest disk intersecting all the lines induced
by $\PS$, we need to solve the three-dimensional \LP defined by the
constraints of $\Impl\bigl( \PS )$---computing the feasible point with
minimum $z$ coordinate. This \LP can be solved in quadratic time in a
straightforward fashion---here we are interested in solving this
problem more quickly.

\begin{figure}[t]
    \centering{{\includegraphics{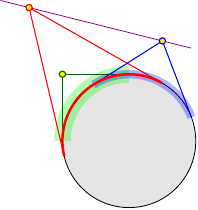}}}
    \caption{Finding an induced line avoiding the disk is equivalent
       to computing two arcs that intersect but do not contain each
       other in the associated circular arc graph.}
    \figlab{circular}%
\end{figure}

We next verify that the above settings apply. First, the problem can
be solved in constant time for a constant size point set. Next, given
a basis (i.e., a disk $\Disk$ in this case), and a set $\QS$ of
points, one can check (in near linear time) whether two points of
$\QS$ induce a line that avoids the disk $\Disk$.  An algorithm for
this problem is described, in somewhat more abstract settings, in
\secref{subproblem}. In this specific case, one can define a circular
arc graph on the boundary of $\Disk$ (see \figref{circular}), where
each point $\pa$ outside $\Disk$ induces an arc of all the points on
the disk boundary that see $\pa$. On this circular arc graph it is
enough to decide if there are two arcs that intersect but do not
contain each other, and this can be done, in $O(n \log n)$ time, using
``sweeping''.  As such, in this case, we have property
\itemref{verify} with $\DX{n} = O( n\log n)$.

The divide property, i.e., \itemref{partition}, is surprisingly
straightforward in this case. Partition the point set $\PS$ into $b$
helper sets $\QS_1, \ldots, \QS_{b}$ with
$\cardin{\QS_i} \geq \floor{ n/b}$, for
$i \in \IRX{b}= \{1,\ldots, b\}$. For any pair
$\{i,j\} \in\binom{\IRX{b}}{2}$ define the point set
$\PS_{i,j} = \QS_i \cup \QS_j$.  Thus, the constructed sets are the
members of
$\Set{\smash{\PS_{i,j}}}{\{i,j\} \in
   \smash{\binom{\IRX{b}}{2}}\bigr.}$.  As such, $\sza = \binom{b}{2}$
and $\szb = b/2$. Clearly, for any pair of points
$\{ \pa, \pb \} \in \binom{\PS}{2}$, there are indices $i,j$, such
that $\pa, \pb \in \PS_{i,j}$. This implies that
$\Impl( \PS ) = \Impl\bigl(\bigcup_{i,j} \PS_{i,j} \bigr)$.

The combinatorial dimension of the associated \LP is $3$.  We can
specifically choose $b=3$, but if we want to satisfy \Eqref{weird}
directly without invoking \lemref{weird}, we can pick the smallest $b$
such that $c\log^3\binom{b}{2}/(b/2)^\eps < 1$.

\subsection{The algorithm for solving implicit \LP}

\subsubsection{The algorithm in detail}

The basic idea is to modify the algorithm of \secref{lp:bundles} so
that \compBasis is implemented recursively, via refinement of the
current bundle. So, the input is a set of $n$ elements
$\PS \subseteq \Pi$, and the task is to solve the \LP-type problem
defined by the set of constraints $\Impl(\PS)$. For any set
$\PSA \subseteq \PS$, let $\Basis(\PSA)$ denote the basis of size at
most $\lpdim$ for the constraint set $\Impl(\PSA)$.  By requirement
\itemref{verify}, given a basis $\Basis$, we can decide if
$\Objv\bigl(\Basis \cup \Impl(\PSA)\bigr) > \Objv(\Basis)$ by calling
\violate.

\begin{figure}[h]
    \centerline{%
       \begin{algorithmEnv}
           \solveLPT{}$\pth{\Basis_0, \CNR_1, \ldots, \CNR_m }\Bigr.$
           \qquad// $\Basis_0$: initial basis \+\+
           \\
           $\permut{\CNR_1', \ldots, \CNR_n'}$: random permutation of
           $\CNR_1, \ldots, \CNR_m$.\\
           \For $i=1$ to $m$ \Do\+\\
           \If \violate{}($\CNR_i'$, $\Basis_{i-1}$) \Then\+\\
           $\BasisB_i \leftarrow \compBasis\pth{ \Basis_{i-1}, \CNR_i'}$\\
           $\Basis_i \leftarrow \solveLPT\pth{\BasisB_i \cup \Basis_0,
              \CNR_1', \ldots, \CNR_i'}$
           \-\\
           \Else \+\\
           $\Basis_i \leftarrow \Basis_{i-1}$
           \-\-\\
           \Return $\Basis_n$
       \end{algorithmEnv}%
    }
    \caption{The main procedure for solving the implicit \LP-type problems.}
    \figlab{LP:i:main}
\end{figure}

The main algorithm, $\solveLPT{}( \Basis, \CNR_1, \ldots, \CNR_m)$, is
given an initial basis $\Basis \subseteq \CN$, and $m$ subsets of
$\PS$. It returns the basis $\BasisB \subseteq \CN$ for the constraint
set $\Impl(\CNR_1) \cup \ldots \cup \Impl(\CNR_m) \cup \Basis$. This
algorithm implements \solveLPType{} modified to use the implicit
representation -- depicted in \figref{LP:i:main}.  To solve the
\LP{}-type problem of interest, invoke $\compBasis{}(\PS, \Basis)$,
where $\Basis$ is some initial basis.  Note, that \compBasis{} and
\solveLPType{} are mutually recursive calling each other in
turn. Importantly, the sizes of the subproblems decreases down the
recursion, implying that this algorithm indeed terminates.

\paragraph{Implementing \compBasis{}$( \PS, \Basis)$}

Here, $\PS$ is a set of $n$ entities, and $\Basis$ is an initial
basis.  The goal is to compute the basis for the set
$\Basis \cup \Impl(\PS)$. If $\PS$ is of constant size, we solve the
associated \LP{}-type problem in $O(1)$ time.

Otherwise, using the partition algorithm provided as part of the
instance (i.e., \itemref{partition}) compute sets
$\CNR_1, \ldots \CNR_\sza \subseteq \PS$, each of size at most
$n/\szb$. The problem is reduced to computing a basis for the set
$\Impl(\CNR_1) \cup \ldots \cup \Impl(\CNR_\sza)$.  Specifically, one
need to compute a basis for the set
$\{\Impl(\CNR_1), \ldots, \Impl(\CNR_\sza)\}$ of $\sza$ elements. By
\lemref{same:cd}, this new problem remains an \LP{}-type problem of
combinatorial dimension $\lpdim$. As such, we can invoke the
subroutine $\solveLPT{}( \Basis, \CNR_1, \ldots, \CNR_\sza)$ to solve
the extended \LP{}-type problem, and return the required basis.

\subsubsection{Analysis}

The procedure \compBasis{} is recursive, with the top most call being
of depth zero, which in turns calls \solveLPT{}. which in turn might
call \compBasis{} (these calls are of depth one), and so on.

\begin{lemma}
    \lemlab{levels}%
    A call to \compBasis{} of depth $i$ (might) result in a call to
    \solveLPT{}.  This call to \solveLPT{} triggers, in expectation,
    $O( \log^\lpdim \sza)$ calls to \compBasis{} of depth $i+1$, and
    $c_\lpdim \sza$ violation tests (at this level of the
    recursion---we are ignoring such calls performed in lower levels
    of the recursion).
\end{lemma}
\begin{proof}
    The procedure \compBasis{} calls \solveLPT{} on newly broken
    implicit sets of constraints $\CNR_1, \ldots, \CNR_m$, where
    $m\leq \sza$. By \lemref{same:cd}, the \LP-type problem defined by
    the ``meta'' constraints $\Impl(\CNR_1), \cdots, \Impl(\CNR_m)$ is
    an \LP-type problem of dimension $\lpdim$. We now interpret all
    the recursive calls to \compBasis of depth $i+1$, as being basis
    calculations for \solveLPT{}. Under this interpretation this is
    simply Seidel's algorithm. Now, \lemref{num:calls:new} readily
    implies both claims.
\end{proof}

\begin{lemma}%
    \lemlab{num:calls:new:2}%
    For an input of size $n$, the expected running time of \compBasis
    is $O\bigl( \DX{n} \bigr)$.
\end{lemma}
\begin{proof}
    Let $\TCBX{n}$ be the expected running time of \compBasis with an
    input of size $n$. Similarly, let $\TLPX{n}$ be the expected
    running time of \solveLPT{}, where each input set is of size at
    most $n$ (note that the number of input sets $m \leq \sza$).  We
    thus have that $\TCBX{ n} = O(1)$ for $n \leq O(1)$ and otherwise,
    we have that
    \begin{equation*}
        \TCBX{n} = O\bigl(  \DX{n}\bigr)  + \TLPX{n/\szb}.
    \end{equation*}
    Now, \lemref{levels} implies that
    \begin{align*}
      \TLPX{n}%
      &%
        =%
        O\pth{ \sza  \DX{n} + \TCBX{n}\log^\lpdim \sza \Bigr.}
        =%
        O\pth{ \sza  \DX{n} + \pth{\DX{n}  + \TLPX{n/\szb}\bigr.
        }\log^\lpdim \sza \Bigr.},      
    \end{align*}
    yielding the following recurrence, for some constant~$c$:
    \begin{equation*}
        \TLPX{n} \le%
        \pth{c\log^\lpdim\psi}\TLPX{n/\szb} + O\pth{ \sza  \DX{n}}.
    \end{equation*}
    Recall that Requirement \itemref{sub} (by \remref{replacement})
    implies that $\DX{n/\szb^i} = O(\DX{n}/\szb^{\eps i})$.  Expanding
    the recurrence gives
    \begin{align*}
      \TLPX{n}&%
                =%
                O\Bigl(
                \sum_{i=0}^\infty
                \pth{c \log^\lpdim \sza }^i      
                \sza  \DX{n/\szb^i}
                \Bigr)
      \\&%
      \leq%
      O\Bigl(
      \sza   \sum_{i=0}^\infty
      \Bigl[ \frac{c\log^\lpdim \sza}{\szb^\eps} \Bigr]^i      
      \DX{n}
      \Bigr)
      =%
      O\pth{\bigl.
      \sza
      \DX{n}
      },
    \end{align*}
    by \Eqref{weird}. To use the later, we might need to modify the
    given instance of implicit \LP as described in \lemref{weird}.
\end{proof}

We thus have proved our main theorem:

\begin{theorem}
    \thmlab{l:p:implicit}%
    Let $(\CN, \Objv)$ be an \LP{}-type problem of constant
    combinatorial dimension $\lpdim$. Let $\sza, \szb > 1$ be fixed
    constants. %
    For an input space $\Pi$, suppose that there is a function
    $\Impl : \Pi \to 2^\CN$ which maps inputs to
    constraints. Furthermore, assume that for any input $\PS \in \Pi$
    of size $n$, properties \itemref{base}--\itemref{sub} hold.
    Then a basis for $\Impl(\PS)$ can be computed in
    $O\bigl(\DX{n}\bigr)$ expected time.
\end{theorem}

\subsection{Some applications}
\seclab{tm:ex}

To illustrate the versatility of the new result, we briefly sketch a
few applications where some known results can be re-derived.

\subsubsection{Linear programming queries}

We first consider the problem of preprocessing a set $H$ of $n$
halfspaces in $\R^d$, so that we can quickly answer \emph{linear
   programming queries}, i.e., find a point in the intersection of $H$
maximizing any given linear function.  \Matousek~\cite{m-loq-93}
applied a multi-level parametric search to reduce the problem to
\emph{membership queries}: preprocess $H$ so that we can quickly
decide whether a query point lies in the intersection of $H$.  Our
technique easily gives a simpler randomized reduction:

\begin{corollary}
    If there is a data structure for membership queries with
    ${\cal P}(n)$ preprocessing time and $\DX{n}$ query time, then
    there is a data structure for linear programming queries with
    $O({\cal P}(n))$ preprocessing time and $O(\DX{n})$ query time,
    assuming that ${\cal P}(n)/n^{1+\eps}$ and $\DX{n}/n^\eps$ are
    monotonically increasing for some constant $\eps>0$.
\end{corollary}
\begin{proof}
    We build a data structure for linear programming queries for the
    given set $H$ as follows: arbitrarily divide $H$ into two subsets
    $H_1$ and $H_2$ of size $n/2$; store $H$ in the stated data
    structure for membership queries; recursively build a data
    structure for $H_1$ and for $H_2$.  The preprocessing time
    satisfies the recurrence
    ${\cal P}'(n) = 2{\cal P}'(n/2) + O({\cal P}(n))$, which yields
    ${\cal P}'(n)=O(\sum_i 2^i {\cal P}(n/2^i)) =O(\sum_i 2^i {\cal
       P}(n)/2^{(1+\eps)i})=O({\cal P}(n))$.

    To answer a linear programming query, we can immediately apply
    \thmref{l:p:implicit} with $\lpdim=d$.  
    Here, $\Pi$ consists of all sets $H$ that arise in the recursion,
    and we define $g(H)=H$.  When $H$ is divided into $H_1$ and $H_2$, we 
    have $g(H)=g(H_1)\cup g(H_2)$ trivially, and so we can set
    $\sza=\szb=2$.
\end{proof}

The above reduction is similar to an earlier reduction from ray
shooting queries to membership queries~\cite{c-garot-99}.  For linear
programming queries, similar results were obtained earlier by a
different randomized method by Chan~\cite{c-fdlpqme-96}.  The method
here uses randomization only in the query algorithm, not the
preprocessing, although the previous method can be derandomized more
effectively, as shown by Ramos~\cite{r-lpqr-00}.

\subsubsection{Minimum diameter of moving points}

As another example, consider the following problem: We are given a set
$P$ of $n$ linearly moving points $p_1, \ldots, p_n$ in $d$
dimensions, i.e., each $p_i$ is a function mapping a time value
$t\in\Re$ to a point $p_i(t)=a_i + b_i t$ for some $a_i,b_i\in\Re^d$.
We want to find a value~$t\in\Re$ that minimizes the diameter of the
point set at time $t$, i.e., that minimizes
$\max_{i,j}\|p_i(t)-p_j(t)\|$.

Gupta \etal\cite{gjs-facppimgo-96} applied parametric search to get an
$O(n\log^3n)$-time algorithm for the two-dimensional problem.
Clarkson~\cite{c-amdmpdcp} later described a randomized
$O(n\log n)$-time algorithm in dimension $d\le 3$, but this result
follows easily from \thmref{l:p:implicit}:

\begin{corollary}
    Given $n$ linearly moving points in two or three dimensions, we
    can find the time value that minimizes the diameter in
    $O(n\log n)$ expected time.
\end{corollary}
\begin{proof}
    For each pair of moving points $\{p_i,p_j\}$, define the following
    constraint in two variables $t$ and $y$:
    $\|p_i(t)-p_j(t)\|^2 \le y$.  Here, $y$ represents the square of
    the diameter.  Note that each such constraint forms a
    two-dimensional convex set.  Let $g(P)$ be the set of all
    $O(|P|^2)$ such constraints formed by all pairs in~$P$.  The
    problem is equivalent to minimizing $y$ over all $(t,y)\in\R^2$
    subject to the constraints in $g(P)$.  This is a convex program
    with combinatorial dimension $\lpdim=2$.

    Testing whether a given basis satisfies $g(P)$ amounts to testing
    whether a given $(t,y)$ satisfies $\|p_i(t)-p_j(t)\|^2 \le y$ for
    all $p_i,p_j\in P$, i.e., whether the diameter of the points
    $\{p_i(t): p_i\in P\}$ exceeds $\sqrt{y}$.  The diameter of a
    point set (at a fixed time) in dimension $d\le 3$ can be computed
    in $O(n\log n)$ time by known
    algorithms~\cite{cs-arscg-89,Ramos01}.  Thus, Property
    \itemref{verify} holds with $\DX{n}=O(n\log n)$.

    The divide property \itemref{partition} is easy to verify: As
    before, we can partition the point set $P$ into three subsets
    $P_1,P_2,P_3$ of equal size and express $g(P)$ as the union of
    $g(P_1\cup P_2)$, $g(P_2\cup P_3)$, and $g(P_1\cup P_3)$, with
    $\sza=3$ and $\szb=3/2$.
\end{proof}

\subsubsection{Inverse parametric minimum spanning trees}

Eppstein~\cite{e-spe-03} considered the following \emph{inverse
   parametric minimum spanning tree} problem: We are given a
connected, undirected, \emph{parametric} graph $G=(V,E)$ with $n$
vertices and $m$ edges, where the weight $w_e$ of each edge $e\in E$
is a linear function in $d$ variables (i.e., parameters).  We are also
given a spanning tree $T$.  The goal is to find values
$t_1, \ldots, t_d$ (if they exist) such that $T$ is the unique minimum
spanning tree (\MST) of $G$ when the $d$ variables are set to
$t_1,\ldots,t_d$.

\begin{corollary}
    The inverse parametric minimum spanning tree problem can be solved
    in $O(m)$ expected time for any constant~$d$.
\end{corollary}
\begin{proof}
    It is well known that in a (non-parametric) graph $G=(V,E)$, the
    tree $T$ is the unique \MST\ of $G$ if and only if for every
    non-tree edge $e=uv\in E-T$, every edge $e'$ on the path from $u$
    to $v$ in $T$ has smaller weight than $e$.

    For each pair of a non-tree edge $e=uv\in E-T$ and a tree edge
    $e'\in T$ such that $e'$ lies on the path from $u$ to $v$ in $T$,
    define a (linear) constraint
    $w_{e'}(t_1,\ldots,t_d)+z\le w_e(t_1,\ldots,t_d)$, where $z$ is an
    extra variable.  Let $g(G,T)$ be the set of all such $O(mn)$
    constraints.  The problem reduces to maximizing $z$ subject to the
    constraints in $g(G,T)$, and checking that the maximum is
    positive.  This is a linear program with combinatorial dimension
    $\lpdim=d+1$.

    Testing whether a given basis satisfies $g(G,T)$ amounts to
    testing whether $T$ is the unique minimum spanning tree of $G$
    after setting the $d$ variables to $t_1,\ldots,t_d$ and adding $z$
    to all tree edge weights.  This can be done in $O(m)$ expected
    time by Karger, Klein, and Tarjan's randomized \MST\
    algorithm~\cite{KargerKT95}, or more directly, by a known \MST\
    verification algorithm such as~\cite{King97}.  Thus, Property
    \itemref{verify} holds with $\DX{m}=O(m)$.

    To establish the divide property \itemref{partition}, we partition
    $E-T$ into two subsets $S_1$ and $S_2$ of equal size, and
    partition $T$ into two subsets $T_1$ and $T_2$ of equal size.  For
    each $i,j\in\{1,2\}$, define a graph $G_{ij}$ formed by keeping
    the edges from $S_i\cup T$ and then contracting all edges in
    $T-T_j$; similarly define the tree $T_j'$ formed by keeping the
    edges from $T$ and contracting all edges in $T-T_j$.  Then
    $g(G,T)$ is the union of $g(G_{ij},T_j')$ over all
    $i,j\in\{1,2\}$.  Thus, we have $\sza=4$ and $\eta=2$.
\end{proof}

In the journal version of his paper~\cite{e-spe-03}, Eppstein claimed
the above result by using the original optimization technique of
\cite{c-garot-99}, but with this less powerful technique, it is less
clear how to design an efficient decider.

\section{Tukey depth as an implicit \LP{}}
\seclab{tukey:depth}

\subsection{Finding a point of a given depth \TPDF{$k$}{k}}
\seclab{tm:k}

Let $\PS$ be a given non-degenerate set of $n$ points in $\Re^d$, and
let $k$ be a parameter.  Here, we consider the problem of finding a
point with Tukey depth at least~$k$, minimizing a linear function, if
such a point exists.

\subsubsection{Tukey depth, duality and levels}

Here, we provide some background on the Tukey depth (see
\defref{tukey:depth}).

\begin{remark}[Maximum Tukey depth of random points]
    \remlab{tueky:random}%
    Let $\PS$ be a set of $n$ random points sampled uniformly from the
    unit square. Setting $\eps = O(1/\sqrt{n})$, such a sample can be
    interpreted as an $\eps$-sample for the uniform measure of
    area. Specifically, it is known that a sample of size
    $O(1/\eps^2)$ is an $\eps$-sample, with some constant probability
    $\phi$ close to one \cite[Theorem 7.13]{h-gaa-11}, which readily
    implies that the point $(1/2,1/2)$ has Tukey depth
    $\geq n/2 - O(\sqrt{n})$ (with probability $\phi$).

    No point can have Tukey depth exceeding $n/2$, so this example is
    close to tight. By the centerpoint theorem, there is always a
    point in the plane of Tukey depth $n/3$, and in the worst case
    this is tight. As such, the maximum Tukey depth is always in the
    range $[n/3, n/2]$.
\end{remark}

The following characterization of all point of Tukey depth $k$ is well
known---we include a proof for the sake of completeness.

\begin{lemma}%
    \lemlab{depth:region}%
    Let $\HS$ be the set of all open halfspaces that contains strictly
    more than $n -k$ points of $\PS$ in them. Then
    $\TP_k = \bigcap_{\hplane^+ \in \HS} \hplane^+$ is the set of all
    points of Tukey depth $\geq k$.
\end{lemma}

\begin{proof}
    Consider a point $\pa$ of Tukey depth $k$, and assume, for the
    sake of contradiction, that $\pa \notin \TP_k$. But then, there
    exists an open halfspace $\hs^+ \in \HS$ that does not contain
    $\pa$. By construction $\hs^+$ contains $t > n-k$ points of
    $\PS$. Let $\hs$ denote the boundary hyperplane of $\hs^+$. Let
    $\hsa$ be the hyperplane passing through $\pa$ that is a
    translation of $\hs$. Clearly, the closed halfspace bounded by
    $\hsa$, that avoids $\hs^+$, contains $\pa$, but contains at most
    $n-t < k$ points of $\PS$. But this implies that the Tukey depth
    of $\pa$ is strictly smaller than $k$ (see \defref{tukey:depth}), a
    contradiction.

    As for the other direction, consider any point $\pa \in \TP_k$,
    and assume, for the sake of contradiction, that there is a
    halfspace $\hs^-$ that its boundary hyperplane passes through
    $\pa$, and it contains strictly less than $k$ points of $\PS$. By
    a small perturbation, one can assume that the boundary hyperplane
    $\hs$ does not contain any point of $\PS$. But then, the
    complement open halfspace $\hs^+$ contains strictly more than
    $n-k$ points of $\PS$, and it avoids $\pa$. As $\hs^+ \in \HS$, it
    follows that $\pa \notin \TP_k$. A contradiction.
\end{proof}

\begin{figure}
    \includegraphics[page=1,width=0.3\linewidth]%
    {figs/30_points_no_sep_l_12_l_18}%
    \hfill%
    \includegraphics[page=2,width=0.3\linewidth]%
    {figs/30_points_no_sep_l_12_l_18}%
    \hfill%
    \includegraphics[page=3,width=0.3\linewidth]%
    {figs/30_points_no_sep_l_12_l_18}

    \caption{A set of thirty points, its dual, and the $12$ and $18$
       levels. These two levels cannot be separated by a line, as the
       maximum Tukey depth is $10$ (realized by the centerpoint).}
    \figlab{tukey:e:1}
\end{figure}

\begin{figure}
    \includegraphics[page=1,width=0.3\linewidth]%
    {figs/30_r_points_sep_l_12_l_18}%
    \hfill%
    \includegraphics[page=2,width=0.3\linewidth]%
    {figs/30_r_points_sep_l_12_l_18}%
    \hfill%
    \includegraphics[page=3,width=0.3\linewidth]%
    {figs/30_r_points_sep_l_12_l_18}

    \caption{A set of $n=30$ random points, its dual, and the $12$ and
       $18$ levels. These two levels can be separated by a line, as
       the maximum Tukey depth is close to $n/2$ (see
       \remref{tueky:random}).}
    \figlab{tukey:e:2}
\end{figure}

Understanding this intersection polytope is somewhat easier in the
dual. See \defref{duality}. Let $\HSup$ (resp., $\HSdown$) be the set
of all hyperplanes that bounds from below (resp., above) a halfspace
which contains strictly more than $n - k$ points of $\PS$.  A point
$\pa \in \TP_k$ lies above (resp., below) all the hyperplanes of
$\HSup$ (resp., $\HSdown$).

The dual of $\HSup$ is the set of points
$\DHSup = \Set{ \Dual{\hp}}{\hp \in \HSup}$.  As duality is order
flipping (\factref{reverse}), it follows that all the points of
$\DHSup$ lie (vertically) below the hyperplane $\Dual{\pa}$. The set
$\DHSdown$ is defined in a similar fashion, and the points of
$\DHSdown$ lie above the hyperplane $\Dual{\pa}$.

Specifically, a point $\pa = (\pa_1, \ldots, \pa_d)$ induces in the
dual the hyperplane $\Dual{\pa}$ with equation
$x_d = -\pa_d + \sum_{i=1}^{d-1} x_i \pa_i$.  The condition that a
point $\pb \in \DHSup $ lies below the hyperplane $\Dual{\pa}$, then
reduces to the linear inequality (the variables being the coordinates
of $\pa$) that
\begin{equation*}
    \pb_d \leq -\pa_d + \sum_{i=1}^{d-1} \pb_i \pa_i,
\end{equation*}
As such, computing a point in $\TP_k$ is no more than solving a linear
program when each point of $\DHSup \cup \DHSdown$ induces a
constraint. This \LP computes a separating hyperplane between $\DHSup$
and $\DHSdown$.

In the dual, the set of points $\PS$ becomes a set of hyperplanes
$\Dual{\PS}$. A hyperplane $\hp \in \HSup$ in the dual is a point
$\Dual{\hp}$, such that there are at least $n-k$ hyperplanes of
$\Dual{\PS}$ strictly below it (in the $x_d$ direction). The number of
such hyperplanes is the \emph{level} of the point. The set $\HSup$ is
thus the set of all the points that have level strictly larger than
$n-k$. The boundary of the closure of $\DHSup$ (resp., $\DHSdown$) is
the \emphi{$(n-k)$-level} (resp., $k$-level).

In two dimensions, these levels are $k$-monotone polygonal curves. The
complexity of the $k$-level can be superlinear---a lower bound of
$n2^{\Omega(\sqrt{\log k})}$ is known \cite{t-psmks-01}, and currently
the best upper bound is $O(nk^{1/3})$~\cite{d-ibpksrp-98}. As such,
finding if there is a point of Tukey depth $k$ is equivalent to
deciding if there is a line that separates the $k$-level from the
$(n-k)$-level. In higher dimensions these levels are surfaces, and one
is looking for a hyperplane separating them.

In particular, this implies that to decide if there is a point of
Tukey depth $k$, one can solve the \LP defined in the dual to decide
if the $k$-level and the $(n-k)$-level are linearly separable. In two
dimensions, these two levels are polygonal curves, and it is enough to
decide if their vertices are linearly separable.  See
\figref{tukey:e:1} and \figref{tukey:e:2}.

Since the two levels can be computed in roughly $O(n^{4/3})$ time (and
this also bounds the number of their vertices), and linear programming
in two dimensions works in linear time, it follows that one can decide
whether there is a point of Tukey depth $k$ in roughly $O( n^{4/3})$
time. To get a faster algorithm, we deploy our implicit \LP
algorithm. Intuitively, the new algorithm computes an approximation of
the two levels, and thus computes a subset of the constraints of the
explicit LP\@. The algorithm refines these approximations in such a way
that the resulting rougher LP has a solution if and only if the
original LP has the same solution.

\subsubsection{The algorithm}

To deploy our framework for implicit \LP, we need a partition scheme
and a verifier, which we provide next.

\paragraph{Partitioning scheme}

A \emphi{bundle} $(\HS, \Simplex, \tau)$ of constraints in the
implicit \LP (of separating the two levels), is a simplex $\Simplex$,
the subset
\begin{equation*}
    \HS%
    =%
    \Set{ \hs \in \Dual{\PS}}{\hs \cap \Simplex \neq \emptyset},    
\end{equation*}
and a number $\tau$, which is the depth of a specific corner of
$\Simplex$ in the original arrangement $\Arr =
\ArrX{\Dual{\PS}}$. This bundle contains all the constraints that
define points of the $k$- and $(n-k)$-level of $\Arr$ that lie in
$\Simplex$. More broadly, given the bundle, one can compute the
arrangement $\Arr \cap \Simplex$. Thus, with the information provided,
one can compute the level of all the points inside $\Simplex$ in
$\Arr$ -- one can compute $\Arr \cap \Simplex$ and then propagate the
depth information from the specific corner of $\Simplex$.

A $(1/r)$-cutting restricted to the interior of $\Simplex$ can be
computed in $O( \cardin{\HS}r^{d-1})$ time, by \lemref{cuttings}. Each
cell in the cutting is a new bundle. Computing the depth of a point on
the boundary of each sub-simplex can be done easily in the same time
bound. Each bundle has $\cardin{\HS}/r$ hyperplanes in its conflict
list.

\paragraph{The verifier}

\begin{lemma}
    \lemlab{range:levels}%
    Let $\Dual{\PS}$ be a set of $n$ hyperplanes in $\Re^d$, and let
    $t^-, t^+$ be two numbers. Given a hyperplane $\hp$ one can decide
    in $O(n \log n + n^{d-1})$ time whether $\hp$ separates the
    $t^-$-level from the $t^+$-level of $\Arr = \ArrX{\Dual{\PS}}$.
\end{lemma}
\begin{proof}
    Consider any hyperplane of $\Dual{\PS}$ as bounding a halfspace
    that lies above it, and let $\HS$ be the resulting set of
    halfspaces. The \emphi{depth} of a point $\pa$ is the number of
    halfspaces of $\HS$ that contain $\pa$, and is the level of the
    point in the arrangement $\Arr$. Thus, it is enough to compute the
    range of depths realized by points lying on $\hp$. The
    intersection of each $d$-dimensional halfspace of $\HS$ with $\hp$
    is a $(d-1)$-dimensional halfspace of $\hp$. This range of depths
    can be computed by computing the arrangement of these $n$ induced
    halfspaces on the hyperplane $\hp$. As $\hp$ is $(d-1)$-dimensional,
    this requires $O(n \log n)$ time if $d=2$ (via essentially
    sorting), and $O(n^{d-1})$ in higher dimensions.

    Specifically, the algorithm computes for each face of the
    arrangement on $\hp$ its depth, which results in the range of
    depths of points on $\hp$. If the range lies outside $[t^-,t^+]$
    then $\hp$ is not the desired separating hyperplane.
\end{proof}

Consider a \emphi{bundle} $(\HS, \Simplex, \tau)$. Here, the algorithm
maintains the set $\HS \cap \Simplex$, and a corner $\pa$ of
$\Simplex$, such that its level is $\tau$. One can compute the range
of levels realized on the faces of $\Simplex$ using
\lemref{range:levels}. Since the level of a point is monotone
increasing along a vertical line, it follows that this provides the
range of levels realized also by the interior of $\Simplex$.

Given a hyperplane $\hp$, one needs to verify if it is feasible for
this bundle. To this end, one can first check if it intersects
$\Simplex$. If not, then the computed range of levels realized in
$\Simplex$ is sufficient to decide if it is feasible as far as the
levels inside $\Simplex$.  Otherwise, the algorithm computes the level
of some point $\pb \in \Simplex \cap \hp$, by computing how the level
changes as one moves from $\pa$ to $\pb$ (as we know the level of
$\pa$, and as $\pa \pb$ can intersect only hyperplanes in the set
$\HS \cap \Simplex$).  Now, using \lemref{range:levels} one can
compute the range of levels encountered on $\Simplex \cap \hp$, which
is sufficient to verify whether $\hp$ separates the desired levels
inside $\Simplex$. Note that if $m = \cardin{\HS \cap \Simplex}$, then
the running time of this algorithm is $O( m \log m + m^{d-1})$.

\paragraph{Putting everything together}

We next deploy the algorithm of \thmref{l:p:implicit}. In the dual, we
are solving an \LP computing a hyperplane separating the $k$ level
from the $(n-k)$ level. To this end, we pick an arbitrary linear
constraint on the \LP that we are trying to (say) minimize. We
sketched a verifier, and a partitioning schemes, that for $n$
hyperplanes, work in $\DX{n} = O(n \log n + n^{d-1})$ time. We thus
get the following.

\begin{theorem}
    \thmlab{tukey:geq:k}%
    Let $\PS$ be a set of $n$ points in $\Re^d$, and let $k$ be a
    parameter. One can compute a point in $\Re^d$ of Tukey depth at
    least $k$, if such a point exists, in $O(n \log n + n^{d-1})$
    expected time.
\end{theorem}

\subsubsection{Computing a point of maximum Tukey depth}
\seclab{tm:max}

Having solved the problem of deciding whether the maximum Tukey depth
is at least~$k$, we consider the problem of computing the maximum
Tukey depth.  Using binary search on top of \thmref{tukey:geq:k} one
can compute a point with maximum Tukey depth (paying an extra log
factor).  However, one can do better by considering the associated
ranking \LP problem (described in \secref{ranking:LP}).

For a vertex $\pa$ of $\ArrX{\Dual{\PS}}$, its associated rank is
$\rankX{\pa} = |n/2 - \levelX{\pa}|$. The \LP ranking problem, with
all the vertices of $\ArrX{\Dual{\PS}}$ as constraints, is exactly the
problem of finding a point of maximum Tukey depth. By
\lemref{ranking:lp} this is an \LP-type problem.  The idea is now to
solve this implicit \LP using the above machinery.

A basis now is a set of $d+1$ vertices of $\ArrX{\Dual{\PS}}$. It is
easy to modify the algorithm, for solving this implicit ranking \LP
problem, so that it explicitly computes the level of each
vertex/constraint being considered. As such, the rank of the basis is
known.  Now, such a basis defines a separating hyperplane and it is
supposed to separate specific levels as defined by its rank.  Thus,
the active constraints are induced by the points on these active
levels. Clearly, the verifier above works (unmodified) for this case,
as does the partition scheme. We thus get the following.

\begin{theorem}
    \thmlab{max:tukey}%
    Let $\PS$ be a set of $n$ points in $\Re^d$. One can compute a
    point in $\Re^d$ of maximum Tukey depth, in
    $O(n \log n + n^{d-1})$ expected time.
\end{theorem}

\subsubsection{Computing a depth region in two dimensions}
\seclab{depth:region}

In two dimensions, our approach can speed up an existing algorithm by
\Matousek~\cite{m-ccpps-90} for computing the entire region of depth
$\ge k$ (i.e., the convex polytope $\TP_k$ in
\lemref{depth:region}). The $O(n\log^4n)$ time bound is improved to
$O(n\log^2n)$.

\begin{theorem}
    Let $\PS$ be a set of $n$ points in $\Re^2$, and let $k$ be a
    parameter. One can compute the region $\TP_k$ of all points in
    $\Re^d$ of Tukey depth at least $k$ in $O(n \log^2 n)$ expected
    time.
\end{theorem}
\begin{proof}
    \Matousek~\cite{m-ccpps-90} noted that the problem reduces to
    computing the upper hull of the $k$-level in the dual plane.  He
    described a divide-and-conquer algorithm to compute this ``level
    hull'', using an oracle for the following subproblem: given a set
    of $n$ lines in the plane, a number $k$, and a vertical line
    $\ell$, find the tangent of the upper hull at $\ell$.  He applied
    a two-level parametric search to solve this subproblem in
    $O(n\log^3n)$ time.  By our approach, we can solve this subproblem
    in $O(n\log n)$ expected time: back in primal space, the
    subproblem is equivalent to finding a point in the region $\TP_k$
    that maximizes a given linear function.  This is an implicit \LP
    problem, and the method in this section yields an
    $O(n\log n)$-time randomized algorithm.

    The overall running time of \Matousek's divide-and-conquer
    algorithm is bounded by a logarithmic factor times the running
    time of the oracle, and is thus $O(n\log^2n)$.
\end{proof}

\section{The extremal yolk as an implicit \LP{}}
\seclab{extremal:yolk}

\subsection{Background}

\begin{definition}
    \deflab{extremal:yolk}%
    Let $\PS \subset \R^d$ be a set of $n$ points in general position.
    A median hyperplane is a hyperplane such that each of its two
    closed halfspaces contain at least $\ceil{n/2}$ points of $\PS$.
    A hyperplane is extremal if it passes through $d$ points of $\PS$.
    The \emphi{extremal yolk} is the ball of smallest radius
    interesting all extremal median hyperplanes of $\PS$.
\end{definition}

We give an $O(n^{d-1}\log n)$ expected time exact algorithm computing
the extremal yolk. To do so, we focus on the more general problem.

\begin{problem}
    \problab{k:sets}%
    Let $\EPlanesX{k}(\PS)$ be the collection of extremal hyperplanes
    which contain exactly $k$ points of $\PS$ on or above them.  Here,
    $k$ is not necessarily constant. The goal is to compute the
    smallest radius ball intersecting all hyperplanes of
    $\EPlanesX{k}(\PS)$.
\end{problem}

We observe that computing the extremal yolk can be reduced to the
above problem.

\begin{lemma}
    \lemlab{yolk:to:k:sets} The problem of computing the extremal yolk
    can be reduced to \probref{k:sets}.
\end{lemma}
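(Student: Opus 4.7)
The plan is to characterize which extremal hyperplanes are median ones by simple counting, and then observe that this carves the set of extremal median hyperplanes into $O(1)$ slices, each of which is an instance of \probref{k:sets}. Under general position, an extremal hyperplane $h$ contains exactly $d$ points of $\PS$; let $a$ be the number of points strictly above $h$ and $b$ the number strictly below, so $a + b + d = n$. The median condition requires both closed halfspaces to contain at least $\ceil{n/2}$ points, i.e., $a + d \geq \ceil{n/2}$ and $b + d = n - a \geq \ceil{n/2}$. Writing $k = a + d$ for the number of points lying on or above $h$, these inequalities collapse to $\ceil{n/2} \leq k \leq \floor{n/2} + d$, a range containing at most $d + 1$ consecutive integers.

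Hence the set of all extremal median hyperplanes equals $\bigcup_{k \in K} \EPlanesX{k}(\PS)$, where $K = \set{\ceil{n/2}, \ldots, \floor{n/2} + d}$ has size $O(1)$. The extremal yolk is therefore the smallest ball intersecting this union. Since the smallest ball intersecting a collection of hyperplanes is an \LP{}-type problem of constant combinatorial dimension, and we have a constant number of instances of \probref{k:sets} to combine, we run them in parallel within a single \LP{}-type computation: the violation oracle for the combined problem invokes the $O(1)$ oracles for the individual $k$-slices and returns the disjunction of their answers, while the partition step of \lemref{chan:implicit} is applied to each slice separately. The overhead is $O(1)$, so an $O(T(n))$-time algorithm for \probref{k:sets} immediately yields an $O(T(n))$-time algorithm for the extremal yolk.

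The main subtlety is that the reduction must not attempt to recover the extremal yolk from the smallest balls of the individual sets $\EPlanesX{k}(\PS)$ taken in isolation---these are in general different balls, and no aggregation of their centers or radii (e.g., taking the largest radius, the smallest enclosing ball of the centers, etc.) recovers the ball that simultaneously intersects all slices. The correct reduction is therefore at the level of \LP{}-type oracles, bundling the $O(1)$ slices into a single implicit constraint set before handing them to the algorithmic framework. Once this viewpoint is adopted, the counting argument above makes the reduction essentially immediate.
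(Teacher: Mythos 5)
Your proof is correct and takes essentially the same approach as the paper: count the points on or above an extremal median hyperplane and observe that this count must lie in a window of $O(d)$ consecutive integers, so the extremal yolk is the smallest ball hitting $\bigcup_{k \in K} \EPlanesX{k}(\PS)$ for a constant-size $K$. Your unified window $\ceil{n/2} \leq k \leq \floor{n/2} + d$ matches the paper's sets $\SE$ and $\SO$ exactly (it cleanly subsumes the even/odd split), and your remark that the $O(1)$ slices must be bundled into a single \LP{}-type computation rather than solved in isolation is precisely what the paper handles via \corref{range:k:sets}.
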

\begin{proof}
    Suppose that $n$ is even, and define the set
    $\SE = \{n/2, n/2+1, \ldots, n/2+d\}$. A case analysis shows that
    any extremal median hyperplane $h$ must have exactly $m$ points of
    $\PS$ above or on $h$, where $m \in \SE$. Thus, computing the
    extremal yolk reduces to computing smallest radius ball
    intersecting all hyperplanes in the set
    $\bigcup_{m \in \SE} \EPlanesX{m}(\PS)$.

    \smallskip%
    When $n$ is odd, a similar case analysis shows that any extremal
    median hyperplane must have exactly $m$ points above or on it,
    where
    $m \in \SO = \{\ceil{n/2}, \ceil{n/2}+1, \ldots,
    \ceil{n/2}+d-1\}$.  Analogously, computing the extremal yolk with
    $n$ odd reduces to computing the smallest radius ball intersecting
    all hyperplanes in the set
    $\bigcup_{m \in \SO} \EPlanesX{m}(\PS)$.
\end{proof}

We use \thmref{l:p:implicit} to solve \probref{k:sets}. To this end,
we prove that \probref{k:sets} is an \LP-type problem when the
constraints are explicitly given (the following Lemma was also
observed by Bhattacharya \etal \cite{bjmr-oasirp-94}).

\begin{lemma}
    \lemlab{egg:lp:type}%
    \probref{k:sets} when the constraints (i.e., hyperplanes) are
    explicitly given is an \LP{}-type problem and has combinatorial
    dimension $\CD = d+1$.
\end{lemma}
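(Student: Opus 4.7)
\begin{proofsketch}
The plan is to reformulate the problem as a standard linear program in $d+1$ dimensions, from which both the LP-type structure and the combinatorial dimension bound follow from classical LP facts. Let $H$ denote the explicit set of hyperplanes given as input. I would introduce $d+1$ real variables, namely a putative center $c \in \R^d$ and a radius $r \geq 0$. For each hyperplane $h \in H$ with unit normal $n_h$ and offset $a_h$, the ball $\BallY{c}{r}$ meets $h$ exactly when $|\langle n_h, c\rangle - a_h| \leq r$, which splits into the two linear inequalities $\langle n_h, c\rangle - r \leq a_h$ and $-\langle n_h, c\rangle - r \leq -a_h$ in the variables $(c,r)$. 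Minimizing $r$ subject to these $2|H|$ linear inequalities is thus a linear program in $\R^{d+1}$.

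Given this formulation, monotonicity is essentially free: enlarging $H$ only intersects the feasible region with additional halfspaces, so the minimum value of $r$ cannot decrease. For locality, I would appeal to the standard LP argument by using lexicographic tie-breaking on $(r, c_1, \ldots, c_d)$ to make the optimum of every subproblem unique; then for $A \subseteq B$ sharing the same optimum and any new hyperplane $s$, both $\Objv(A + s) > \Objv(A)$ and $\Objv(B + s) > \Objv(B)$ are equivalent to the statement that $s$ cuts off the common optimal point $(c^*, r^*)$, which gives the equivalence required in \defref{lp:type}.

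The only step requiring genuine care, and which I expect to be the main (mild) obstacle, is pinning down the combinatorial dimension. Since the LP has $d+1$ variables, at most $d+1$ linearly independent linear inequalities can be simultaneously tight at a vertex of the feasible polytope. The subtle point is that each hyperplane contributes \emph{two} linear inequalities: I need to argue that they cannot both be tight at the optimum, because summing them forces $r = 0$ and $\langle n_h, c\rangle = a_h$, a degenerate situation that can be handled by an infinitesimal perturbation or absorbed into the general position assumption. Consequently each basis contains at most $d+1$ hyperplanes, giving $\CD \leq d+1$, and a matching lower bound is realized by any ball tangent from outside to $d+1$ affinely independent hyperplanes, confirming $\CD = d+1$.
\end{proofsketch}
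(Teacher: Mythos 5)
Your proof takes essentially the same route as the paper: reformulate the problem as a linear program in the $d+1$ variables $(c,r)$ with two linear inequalities per hyperplane, and conclude that a basic feasible solution is tight for at most $d+1$ of the LP constraints, hence involves at most $d+1$ hyperplanes. One small remark: your worry about both inequalities of a single hyperplane being simultaneously tight is a red herring for the upper bound---if that happened it would only mean \emph{fewer} distinct hyperplanes are needed in the basis, so it poses no threat to $\CD \leq d+1$; the paper does not bother to rule it out, and neither do you need to.
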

\begin{proof}
    We prove something stronger, namely that the problem can be
    written as a linear program, implying it is an \LP{}-type
    problem. Let $\CN$ be the set of $n$ hyperplanes. For each
    hyperplane $\hplane \in \CN$, let
    $\DotProdY{a_\hplane}{x} + b_\hplane = 0$ be the equation
    describing $\hplane$, where $a_\hplane \in \R^d$,
    $\norm{a_\hplane} =1$, and $b_\hplane \in \R$.  Because of the
    requirement that $\norm{a_\hplane} = 1$, for a given point
    $\pa \in \R^d$, the distance from $\pa$ to a hyperplane $\hplane$
    is $\abs{\DotProdY{a_\hplane}{\pa} + b_\hplane}$.

    The linear program has $d+1$ variables and $2n$ constraints. The
    $d+1$ variables represent the center $\pa \in \R^d$ and radius
    $\rad \geq 0$ of the egg. The resulting \LP{} is
    \begin{align*}
      \text{min} \qquad%
      & \rad \\
      \text{subject to}\qquad
      & \rad \geq \DotProdY{a_\hplane}{\pa} +
        b_\hplane
      &\forall \hplane \in \CN \\
      & \rad \geq
        -\bigl(\DotProdY{a_\hplane}{\pa} +
        b_\hplane\bigr)
      & \forall \hplane \in \CN \\
      & \pa  \in \R^d.
    \end{align*}

    As for the combinatorial dimension, observe that any basic
    feasible solution for the above linear program will be tight for
    at most $d+1$ of the above $2n$ constraints.  Namely, these $d+1$
    hyperplanes are tangent to the optimal radius ball, and as such
    form a basis $\Basis \subseteq \CN$.
\end{proof}

To apply \thmref{l:p:implicit} we need to:
\begin{enumerate*}[label=(\roman*)]
    \item design an appropriate input space,
    \item develop a decider, and
    \item construct a constant number of subproblems which cover the
    constraint space.
\end{enumerate*}
As in \secref{tukey:depth}, the algorithm works in the dual space.
The following lemma shows that the dual of a ball $\Ball$ is the
closed region which lies between two branches of a hyperboloid. See
\figref{disk:dual}.

\begin{figure}
    \centering
    \includegraphics[scale=0.7]{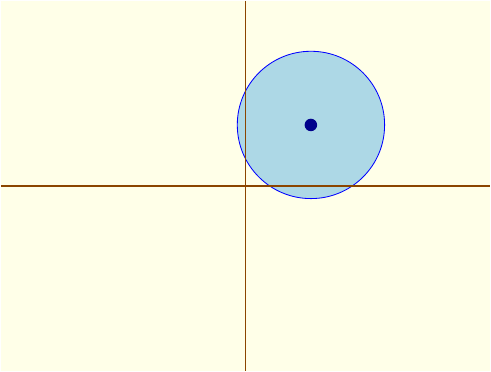}\hspace{20pt}%
    \includegraphics[page=2,scale=0.7]{figs/dual_of_a_disk}
    \caption{A disk and its dual.}
    \figlab{disk:dual}
\end{figure}

\begin{lemma}
    \lemlab{dual:ball} The dual of the set of points in a ball is the
    set of hyperplanes whose union forms the region enclosed between
    two branches of a hyperboloid.
\end{lemma}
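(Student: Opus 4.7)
The plan is to translate the question ``which points $q$ lie on a dual hyperplane $\Dual{p}$ for some $p \in B$'' into the equivalent question ``for which $q$ does the hyperplane $\Dual{q}$ meet $B$'', and then perform a direct algebraic computation on the resulting distance inequality.

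The duality of \defref{duality} is self-reciprocal: the conditions $q \in \Dual{p}$ and $p \in \Dual{q}$ are both given by the same equation $q_d + p_d - \sum_{i<d} p_i q_i = 0$. Writing $B = \BallY{c}{r}$ for the input ball, this immediately yields
$$\bigcup_{p \in B} \Dual{p} \;=\; \Set{q \in \R^d}{\Dual{q} \cap B \neq \emptyset},$$
which is the set of $q$ for which the Euclidean distance from $c$ to the hyperplane $\Dual{q}$ is at most $r$. Using the explicit equation of $\Dual{q}$ and squaring the distance, this becomes the quadratic inequality
$$\left( c_d + q_d - \sum_{i<d} c_i q_i \right)^{\!2} \;\leq\; r^2\left(1 + \sum_{i<d} q_i^2\right).$$

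Then I would introduce the affine change of coordinates $v = q_d - \sum_{i<d} c_i q_i + c_d$ and $u_i = q_i$ for $i < d$. In these coordinates the inequality reduces to $v^2 - r^2 \sum_{i<d} u_i^2 \leq r^2$, whose boundary $v^2 - r^2\sum u_i^2 = r^2$ is a hyperboloid of two sheets opening in opposite directions along the $v$-axis. The region cut out by the $\leq$ inequality contains the coordinate origin (since $0 \leq r^2$) and excludes points far out along the $v$-axis, so it is precisely the region lying between the two sheets.

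The only non-routine step is the signature check at the end: verifying that the quadratic form $v^2 - r^2\sum_{i<d} u_i^2$ has signature $(1, d-1)$, so that its level sets are honest hyperboloids of two sheets rather than ellipsoids or single-sheet hyperboloids, and confirming that the $\leq$ inequality selects the region between the sheets rather than its complement. Both are immediate from the explicit form displayed above, so the proof is ultimately one line of duality combined with a one-line coordinate change.
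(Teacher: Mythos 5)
Your proof is correct and takes essentially the same route as the paper: both reduce to the condition that the distance from the ball's center to the dual hyperplane of a candidate point $q$ is at most $r$, square it, and recognize the resulting quadratic as bounding the region between two branches of a hyperboloid after an affine change of coordinates. Your explicit appeal to the self-reciprocity of the duality (so that $\bigcup_{p\in B}\Dual{p} = \{q : \Dual{q}\cap B \neq \emptyset\}$) and the closing signature check are nice clarifications, but the underlying computation is the same one the paper performs.
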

\begin{proof}
    In $\R^d$ the hyperplane $\hplane$ defined by
    $x_d = \beta + \sum_{i=1}^{d-1} \alpha_i x_i$, or more compactly
    $\DotProdY{x}{(-\alpha,1)} = \beta$, intersects a disk $\Ball$
    centered at $\pa = (\pa_1, \ldots, \pa_d)$ with radius $r$ $\iff$
    the distance of $\hplane$ from $\pa$ is at most $r$. That is,
    $\hplane$ intersects $\Ball$ if
    \begin{align*}
      \frac{\cardin{\DotProdY{\pa}{(-\alpha,1)} -\beta }}
      {\norm{(-\alpha,1)}} \leq r
      &\iff
        \pth{\DotProdY{\pa}{(-\alpha,1)} -\beta }^2  \leq r^2
        \norm{(-\alpha,1)}^2 \\
      &\iff%
        \Bigl(\pa_d -\beta -\sum_{i=1}^{d-1} \alpha_i \pa_i\Bigr)^2  \leq r^2 (\norm{\alpha}^2 + 1).
      \\&%
      \iff
      \frac{\bigl(\pa_d -\beta -\sum_{i=1}^{d-1} \alpha_i \pa_i\bigr)^2}{r^2}  - \norm{\alpha}^2 \leq 1.
    \end{align*}
    The boundary of the above inequality is a hyperboloid in the
    variables $\pa_d - \beta - \sum_{i=1}^{d-1} \alpha_i \pa_i$ and
    $\alpha_1, \ldots, \alpha_{d-1}$. This corresponds to an affine
    image of a hyperboloid in the dual space $\alpha \times -\beta$.
\end{proof}

Throughout, let $\Dual{\Ball}$ denote the region between the two
branches of the hyperboloid dual to a ball $\Ball$.

\subsection{Solving the subproblem}
\seclab{subproblem}

We verify the requirements of \thmref{l:p:implicit} can be met.
First, we develop the algorithm for the violation test. As the
algorithm works in the dual, each subproblem consists of a simplex
$\Simp$, the dual set of hyperplanes $H = \Dual{\PS} \cap \Simp$
intersecting $\Simp$, and a parameter $\uDepth$ which is the number of
hyperplanes lying completely below $\Simp$. Additionally, the given
basis defines a ball $\Ball$, which in the dual is the region
$\Dual{\Ball}$. One can verify in the dual, that the violation test
must decide if there is a vertex of the $k$-level in the region
$\Re^d \setminus \Dual{\Ball}$.

\begin{lemma}
    \lemlab{e:y:d}%
    Given the input $(H, \Simp, \uDepth)$ and the region
    $\Dual{\Ball}$, checking whether there is a vertex of $\ArrX{H}$
    which has level $k$ and lies inside $\Re^d \setminus \Dual{\Ball}$
    can be done in $O(n^{d-1} \log n)$ time.
\end{lemma}
\begin{proof}
    Observe that $\Cell \cap (\R^d \setminus \Dual{\Ball})$ is the
    union of at most two convex regions. Indeed, the set
    $\R^d \setminus \Dual{\Ball}$ consists of two disjoint connected
    components, where each component is a convex body. Intersecting a
    simplex $\Cell$ with each component of
    $\R^d \setminus \Dual{\Ball}$ produces two (disjoint) convex
    bodies $\Cell'$ and $\Cell''$ (it is possible that $\Cell'$ or
    $\Cell''$ are empty). See \figref{simplex}.  Let $\Cell'$ be one
    of these two regions of interest. The algorithm will process
    $\Cell''$ in exactly the same way.

\begin{figure}[t]
    \centering%
    \includegraphics[page=1,scale=0.45]{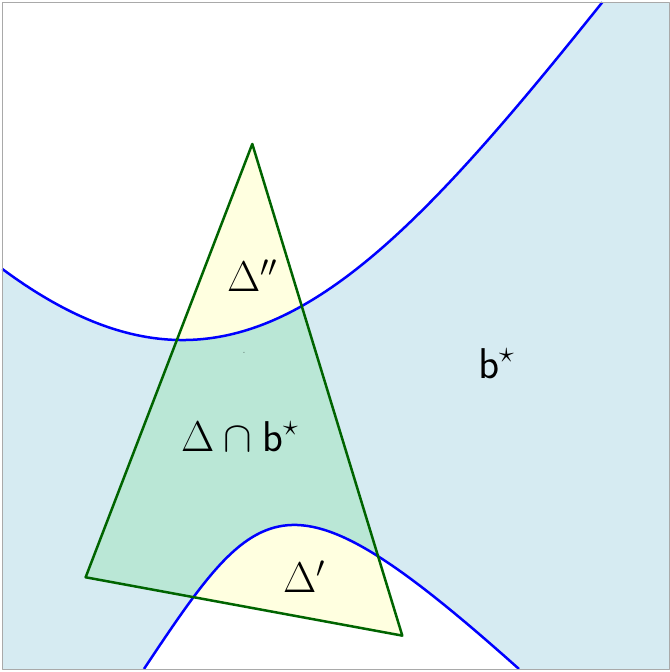}%
    \caption{The region $\Cell \cap (\R^d \setminus \Dual{\Ball})$
       consists of (at most) two disjoint convex regions, $\Delta'$
       and $\Delta''$.}
    \figlab{simplex}
\end{figure}

If $\Cell'$ is empty, then no constraints are violated. Otherwise, we
need to check for any violated constraints inside $\Cell'$. Let
$\BX{\Cell'}$ denote the boundary of $\Cell'$. Define $H' \subseteq H$
to be the subset of hyperplanes intersecting $\Cell'$. Observe that it
suffices to check if there is a vertex $v$ in the arrangement
$\ArrX{H'}$ such that:
\begin{enumerate*}[label=(\roman*)]
    \item $v$ has level $k$ in $\Dual{\PS}$,
    \item $v$ is a member of some cell in the zone
    $\ZoneY{\BX{\Cell'}}{H'}$, and
    \item $v$ is contained in $\Cell'$.
\end{enumerate*}

The algorithm computes $\ZoneY{\BX{\Cell'}}{H'}$. Next, it chooses a
vertex $v$ of the arrangement $\ArrX{H'}$ which lies inside $\Cell'$
and computes its level in $H'$ (adding $\uDepth$ to the count). The
algorithm then walks around the vertices of the zone \emph{inside}
$\Cell'$, computing the level of each vertex along the walk. Note that
the level between any two adjacent vertices in the arrangement differ
by at most a constant (depending on $d$). If at any point we find a
vertex of the desired level (such a vertex also lies inside $\Cell'$),
we report the corresponding median hyperplane which violates the given
ball $\Ball$. See \figref{walk} for an illustration.

\begin{figure}
    \centering%
    \includegraphics[page=3,scale=0.8]{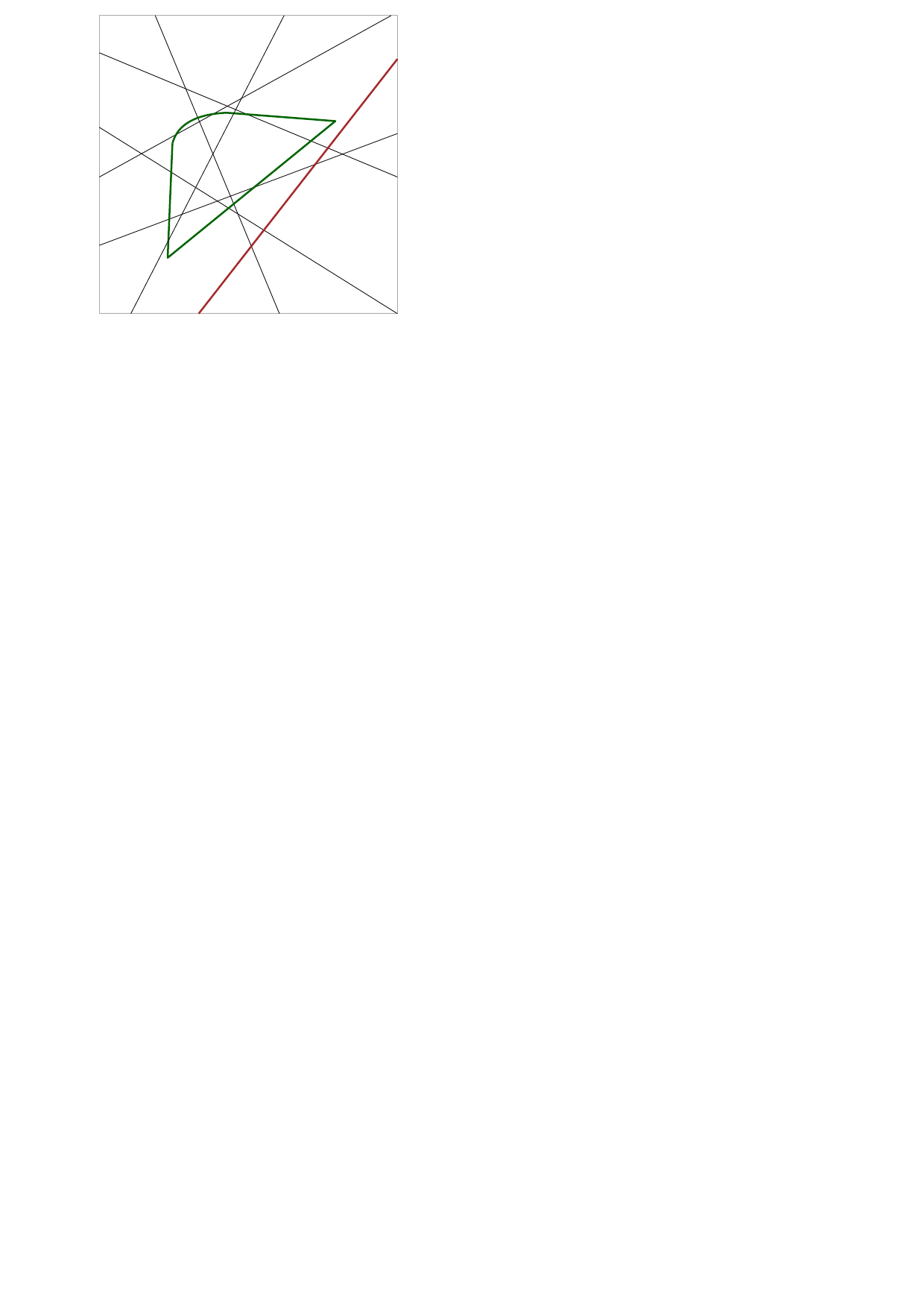}%
    \hspace{20pt}%
    \includegraphics[page=4,scale=0.8]{figs/discrete_yolk_dual}%
    \caption{Left: A convex region $\Cell'$, with one line lying
       completely below $\Cell'$ ($\uDepth = 1$). The shaded regions
       are the cells of $\ZoneY{\BX{\Cell'}}{H'}$. The vertices of the
       cells in the zone $\ZoneY{\BX{\Cell'}}{H'}$ are
       highlighted. Right: The vertices of $\ZoneY{\BX{\Cell'}}{H'}$
       which are part of the 3-level and contained inside $\Cell'$.}
    \figlab{walk}
\end{figure}

The running time of the algorithm is proportional to the complexity of
the zone $\ZoneY{\BX{\Cell'}}{H'}$. Because the boundary of $\Cell'$
is constructed from $d+1$ hyperplanes and the boundary of the
hyperboloid, \lemref{zone:complexity} implies that the zone complexity
is no more than $O(\cardin{H}^{d-1} \log \cardin{H})$.  As such, the
algorithm runs in time $\DX{n} = O(n^{d-1} \log n)$.
\end{proof}

\begin{lemma}
    \lemlab{k:sets} Let $\PS \subset \R^d$ be a set of $n$ points in
    general position.  For a given integer $k$, one can compute in
    $O(n^{d-1} \log n)$ expected time the smallest radius ball
    intersecting all of the hyperplanes of $\EPlanesX{k}(\PS)$
\end{lemma}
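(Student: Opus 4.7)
The plan is to invoke Chan's meta-algorithm from \lemref{chan:implicit} on the LP-type problem isolated in \lemref{egg:lp:type}, treating extremal hyperplanes through $d$-tuples of $\PS$ as the implicit constraint set. I would take the input space $\Pi$ to be the tuples $(\Cell, H, \uDepth)$ used in \secref{discrete:yolk:dec}, where $\Cell$ is a simplex, $H \subseteq \Dual{\PS}$ is the set of dual hyperplanes crossing $\Cell$, and $\uDepth$ counts the hyperplanes of $\Dual{\PS}$ lying completely below $\Cell$. The map $\Impl$ sends such a tuple to the set of extremal hyperplanes of $\PS$ whose duals are vertices of $\ArrX{H}$ inside $\Cell$ whose level in $H$, shifted by $\uDepth$, equals $k$. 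The top-level call uses $\Cell = \R^d$, $H = \Dual{\PS}$, and $\uDepth = 0$.

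Next I would verify the three hypotheses of \lemref{chan:implicit}. Condition~(i) is immediate: for $\cardin{H} = O(1)$ the induced LP from \lemref{egg:lp:type} has constantly many constraints, so a basis of size $\lpdim = d+1$ is found in constant time. Condition~(ii) is the decider of \secref{discrete:yolk:dec}: given a candidate ball $\Ball$, dualizing to the region $\Dual{\Ball}$ enclosed between the two branches of a hyperboloid (via \lemref{dual:ball}), and walking the zones $\ZoneY{\BX{\Cell'}}{H'}$ on each of the at most two convex pieces $\Cell'$ of $\Cell \cap (\R^d \setminus \Dual{\Ball})$ detects any offending $k$-level vertex in $D(n) = O(n^{d-1} \log n)$ expected time, by \lemref{zone:complexity}. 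Condition~(iii) is the cutting-based decomposition of \secref{subprob}, which in $O(\cardin{H})$ time produces $\sza = O(c^d)$ subproblems each on at most $n/c$ hyperplanes, with $\Impl$ of the parent equal to the union of $\Impl$ on the children, because a $(1/c)$-cutting partitions $\Cell$ and so every vertex of $\ArrX{H}$ survives in exactly one piece.

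Finally, I would fix $c$ large enough that $c_\lpdim \log^\lpdim \sza < c$; since $\log^\lpdim \sza = O(\log^{d+1} c)$, any sufficiently large constant $c$ qualifies. The regularity condition $D(n/k) = O(D(n)/k)$ is trivial for $D(n) = n^{d-1} \log n$ with $d \geq 2$. Feeding all of this into \lemref{chan:implicit} returns a basis of $\Impl(\PS)$, namely the $d+1$ extremal hyperplanes tangent to the optimal ball, in $O(D(n)) = O(n^{d-1} \log n)$ expected time, proving the claim. The real conceptual hurdle en route to this lemma was building a decider whose runtime matches the zone-complexity bound of \lemref{zone:complexity}, which was handled in \secref{discrete:yolk:dec} by combining the hyperboloidal dualization with a vertex-by-vertex level walk inside $\Cell'$; the present lemma is then just the bookkeeping that glues the three ingredients together.
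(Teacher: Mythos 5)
Your proposal is correct and follows essentially the same route as the paper: it is precisely the assembly of \lemref{egg:lp:type}, the decider of \secref{discrete:yolk:dec} (with the hyperboloidal dual of \lemref{dual:ball} and the zone bound of \lemref{zone:complexity}), and the cutting-based decomposition of \secref{subprob}, fed into \lemref{chan:implicit} with the input space $(\Cell, H, \uDepth)$ and the constant $c$ chosen to satisfy $c_\lpdim \log^\lpdim \sza < c$. The paper states the lemma as an immediate consequence of those subsections without writing out the verification of the three hypotheses, so your write-up is just a slightly more explicit version of the same argument.
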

\begin{proof}
    Apply \thmref{l:p:implicit}. The violation test follows from
    \lemref{e:y:d}. Constructing the subproblems from a given input
    $(H, \Simp, \uDepth)$ is done in the same way as
    \thmref{tukey:geq:k}.  Compute a $(1/c)$-cutting (for constant $c$
    sufficiently large) of $H$ and clip the cutting inside
    $\Simp$. For each cell $\Simp_i$ in the cutting, compute
    $H_i = \Dual{\PS} \cap \Simp_i$ and the parameter $\uDepth_i$.
    This entire step can be performed in linear time.  Hence, the
    problem can be solved in $O(\DX{n})$ expected time, where
    $\DX{n} = O(n^{d-1} \log n)$.
\end{proof}

\begin{corollary}
    \corlab{range:k:sets}%
    Let $\PS \subset \R^d$ be a set of $n$ points in general position,
    and let $S \subseteq \IntRange{n}$. One can compute in
    $O(n^{d-1} \log n)$ expected time the smallest radius ball
    intersecting all of the hyperplanes of
    $\bigcup_{k \in S} \EPlanesX{k}(\PS)$.
\end{corollary}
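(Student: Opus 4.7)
The plan is to observe that the proof of \lemref{k:sets} is almost completely agnostic to the fact that $k$ is a single integer, and to show that essentially the same machinery handles the set $S$ with no change in asymptotic running time. The only places the value $k$ is used in \lemref{k:sets} are
\begin{enumerate*}[label=(\roman*)]
   \item in the formulation of the implicit constraint set (hyperplanes whose dual point lies on the $k$-level of $\Dual{\PS}$), and
   \item inside the decider, when walking the vertices of $\ZoneY{\BX{\Cell'}}{H'}$ and testing each vertex's level against the value $k$.
\end{enumerate*}
Neither the \LP{}-type formulation of \lemref{egg:lp:type} nor the cutting-based subproblem construction of \secref{subprob} depends on $k$ in any way.

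First I would replace the test ``\textsf{level equals $k$}'' inside the decider of \secref{discrete:yolk:dec} with the test ``\textsf{level lies in $S$}''. Concretely, I would preprocess $S$ once into a bit-array indexed by $\IntRange{n}$ so that membership queries take $O(1)$ time; the walk along the vertices of the zone $\ZoneY{\BX{\Cell'}}{H'}$ still maintains the exact level of each visited vertex (since adjacent vertices differ in level by a constant depending on $d$), and at each such vertex the algorithm simply checks membership in $S$ instead of equality with $k$. If a vertex of $\ArrX{\Dual{\PS}}$ lying inside $\Cell' = \Cell \cap (\R^d \setminus \Dual{\Ball})$ is found whose level belongs to $S$, the corresponding extremal hyperplane is reported as a violator. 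The running time of the decider is still dominated by the zone complexity, so $D(n) = O(n^{d-1} \log n)$ as before.

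Second, the subproblem construction of \secref{subprob} is reused verbatim: the $(1/c)$-cutting decomposition depends only on the hyperplane set $\Dual{\PS} \cap \Cell$ and preserves all vertices of $\ArrX{\Dual{\PS}}$, regardless of which levels we care about. Combining this with the modified decider, the input space, and \lemref{egg:lp:type} (which gives the \LP-type structure and combinatorial dimension $\lpdim = d+1$ for any collection of hyperplane constraints) allows us to invoke \lemref{chan:implicit} in exactly the same manner as in the proof of \lemref{k:sets}.

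I do not expect any real obstacle: the only subtlety is making sure that the ``test against $S$'' step is $O(1)$ so that the per-vertex work of the zone walk is unchanged, which is handled by the bit-array preprocessing. Applying \lemref{chan:implicit} with $D(n) = O(n^{d-1} \log n)$ yields the claimed $O(n^{d-1} \log n)$ expected time bound for computing the smallest radius ball intersecting $\bigcup_{k \in S} \EPlanesX{k}(\PS)$.
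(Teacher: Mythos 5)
Your proposal is correct and follows essentially the same route as the paper: modify the decider in \lemref{k:sets} so that each zone vertex's level is tested for membership in $S$ rather than equality with $k$, with a precomputed lookup structure (your bit-array; the paper uses hashing) guaranteeing $O(1)$ per-vertex membership tests, and invoke \lemref{chan:implicit} unchanged. The subproblem decomposition and \LP{}-type structure are indeed untouched, exactly as you observe.
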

\begin{proof}
    The algorithm is a slight modification of \lemref{k:sets}.  During
    the decision procedure, for each vertex in the zone, we check if
    it is a member of the $k$-level for some $k \in S$. If $S$ is of
    non-constant size, membership in $S$ can be checked in constant
    time using hashing.
\end{proof}

\subsection{Computing the extremal yolk and the egg}

\begin{theorem}
    \thmlab{discrete:yolk}%
    Let $\PS \subset \R^d$ be a set of $n$ points in general position.
    One can compute the extremal yolk of $\PS$ in $O(n^{d-1} \log n)$
    expected time.
\end{theorem}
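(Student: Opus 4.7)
The plan is short: the theorem follows by composing two pieces already established. First, I would invoke \lemref{yolk:to:k:sets}, which reduces computing the extremal yolk to the problem of finding the smallest radius ball intersecting
\[
    \bigcup_{m \in S} \EPlanesX{m}(\PS),
\]
where $S = \SE$ if $n$ is even and $S = \SO$ if $n$ is odd. In both cases $\cardin{S} \leq d+1 = O(1)$, so in particular $S \subseteq \IntRange{n}$ is a small explicit set of level values.

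Next, I would apply \corref{range:k:sets} directly with this set $S$. That corollary already gives an algorithm running in $O(n^{d-1}\log n)$ expected time for the smallest enclosing ball over the union $\bigcup_{k \in S} \EPlanesX{k}(\PS)$ for any $S \subseteq \IntRange{n}$, which is exactly what the reduction produces. Composing the two steps yields an $O(n^{d-1}\log n)$ expected time algorithm for the extremal yolk.

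There is no real obstacle at this point; the heavy lifting was done earlier. The nontrivial ingredients are (a) the zone-walking decider of \secref{discrete:yolk:dec}, which uses \lemref{zone:complexity} on a region bounded by the hyperboloid dual to the candidate ball (\lemref{dual:ball}) to achieve $D(n) = O(n^{d-1}\log n)$; (b) the cutting-based subproblem construction of \secref{subprob}, which partitions the implicit constraint set while preserving all $k$-level vertices; and (c) the framework of \lemref{chan:implicit}, which combines these ingredients into \lemref{k:sets} and then \corref{range:k:sets}. Once all three are in place, the theorem is immediate.
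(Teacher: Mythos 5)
Your argument is exactly the paper's proof: reduce via \lemref{yolk:to:k:sets} to the smallest ball intersecting $\bigcup_{m \in S}\EPlanesX{m}(\PS)$ for the $O(d)$-sized level set $S$ ($\SE$ or $\SO$ depending on parity), then invoke \corref{range:k:sets}. Correct, and the same composition the paper uses.
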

\begin{proof}
    The result follows by applying \corref{range:k:sets} with the
    appropriate choice of $S$. When $n$ is even,
    \lemref{yolk:to:k:sets} tells us to choose
    $S = \{n/2, n/2+1, \ldots, n/2+d\}$. When $n$ is odd, we set
    $S = \{\ceil{n/2}, \ceil{n/2}+1, \ldots, \ceil{n/2}+d-1\}$.
\end{proof}

\begin{lemma}
    \lemlab{higher:dim:egg} Let $\PS \subset \R^d$ be a set of $n$
    points in general position.  One can compute the egg of $\PS$ in
    $O(n^{d-1} \log n)$ expected time.
\end{lemma}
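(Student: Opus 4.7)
The plan is to observe that computing the egg is essentially a special, easier case of the framework already developed for the extremal yolk, so it reduces immediately to \corref{range:k:sets}. The egg is the smallest ball intersecting every extremal hyperplane of $\PS$, and every extremal hyperplane has some number of points of $\PS$ lying on or above it in $\{0, 1, \ldots, n-d\}$. Hence the set of hyperplanes we must hit is exactly $\bigcup_{k \in S} \EPlanesX{k}(\PS)$ for $S = \{0, 1, \ldots, n-d\}$. Applying \corref{range:k:sets} with this choice of $S$ yields the claimed $O(n^{d-1} \log n)$ expected time.

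If one prefers to verify directly that Chan's framework (\lemref{chan:implicit}) applies, the decider and decomposition from \secref{discrete:yolk:dec} and \secref{subprob} go through essentially verbatim, and in fact simplify. The \LP{}-type formulation of \lemref{egg:lp:type} already treats a ball-intersects-hyperplane problem of combinatorial dimension $d+1$. The input space, the cutting-based decomposition into $O(c^d)$ subproblems of size at most $n/c$, and the dual region $\Dual{\Ball}$ (whose boundary is a single algebraic surface by \lemref{dual:ball}) are all identical. The decider, given a simplex $\Cell$, the hyperplanes $H = \Dual{\PS} \cap \Cell$, a candidate $\Dual{\Ball}$, and the under-count $\uDepth$, computes $\ZoneY{\BX{\Cell'}}{H'}$ for each convex piece $\Cell'$ of $\Cell \cap (\R^d \setminus \Dual{\Ball})$ via \lemref{zone:complexity} and checks whether any vertex of $\ArrX{H'}$ lies inside $\Cell'$. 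Unlike the extremal-yolk decider, we do \emph{not} need to walk the zone computing levels---any vertex of the arrangement inside $\Cell'$ corresponds to an extremal hyperplane missing $\Ball$, and that suffices to reject the candidate solution.

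The decider therefore runs in $D(n) = O(n^{d-1} \log n)$ time, with the same zone-complexity bound as before. All three hypotheses of \lemref{chan:implicit} are satisfied exactly as in the extremal-yolk analysis (constant-time base case via the \LP{}-type formulation; decider time $D(n)$; decomposition in $D(n)$ time via $(1/c)$-cuttings). Hence a basis for the implicit \LP{} defining the egg can be computed in $O(D(n)) = O(n^{d-1} \log n)$ expected time.

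There is no real obstacle: the egg is strictly simpler than the extremal yolk in that the level constraint is dropped, so the same machinery applies with a trivial (or vacuous) modification of the decider. The only thing one must verify is that dropping the level condition does not affect the zone-complexity bound or the cutting-based decomposition, both of which depend only on the arrangement $\ArrX{\Dual{\PS}}$ and on the algebraic degree of $\BX{\Dual{\Ball}}$, neither of which changes.
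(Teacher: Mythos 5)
Your approach matches the paper's: the proof is a one-liner applying \corref{range:k:sets}, and your alternative sketch (modify the decider to flag \emph{any} vertex of the arrangement inside $\Cell'$, with no level computation) is exactly the parenthetical alternative the paper gives.

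However, your choice of $S$ is wrong. By \probref{k:sets}, $\EPlanesX{k}(\PS)$ is the set of extremal hyperplanes with exactly $k$ points of $\PS$ lying \emph{on or above} them. An extremal hyperplane passes through $d$ points, all of which count toward $k$, so $k$ ranges over $\{d, d+1, \ldots, n\}$, not $\{0, 1, \ldots, n-d\}$ as you wrote (you appear to have counted only the points strictly above). With your $S$, the hyperplanes in $\EPlanesX{m}(\PS)$ for $m > n-d$ — e.g., any extremal hyperplane lying below all of $\PS$ — would be omitted, and the computed ball would not be guaranteed to hit them, so it could be strictly smaller than the true egg. The paper simply takes $S = \IntRange{n}$, a superset of $\{d,\ldots,n\}$; taking $S$ too large is harmless (extra values of $k$ give empty $\EPlanesX{k}$), but taking it too small breaks correctness. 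The rest of your argument — that the decider and decomposition from \secref{discrete:yolk:dec} and \secref{subprob} go through unchanged, and that dropping the level check only simplifies matters — is correct.
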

\begin{proof}
    The proof follows by \corref{range:k:sets} with
    $S = \IntRange{n}$.  (Alternatively, by directly modifying the
    decision procedure to check if any vertex of the zone
    $\ZoneY{\Cell'}{H'}$ lies inside $\Cell'$.)
\end{proof}

\subsection{An algorithm sensitive to \TPDF{$k$}{k}}
\seclab{sensitive}

Here, we solve \probref{k:sets} for the case that $k \ll n$.  Recall
that to compute the extremal yolk, we reduced the problem to computing
the smallest ball intersecting all hyperplanes which contain a fixed
number of points of $\PS$ above or on them (see
\lemref{yolk:to:k:sets}). In particular, we developed an algorithm for
\probref{k:sets} and applied it when $k$ is proportional to $n$.

To develop an algorithm sensitive to $k$, we use the result of
\lemref{k:sets} as a black box and introduce the notion of shallow
cuttings.

\begin{definition}[Shallow cuttings]
    Let $H$ be a set of $n$ hyperplanes in $\R^d$. A
    \emphi{$k$-shallow cutting} is a collection of simplices such
    that:
    \begin{enumerate*}[label=(\roman*)]
        \item the union of the simplices covers the $(\leq k)$-level
        of $H$ (see \defref{levels}), and
        \item each simplex intersects at most $k$ hyperplanes of $H$.
    \end{enumerate*}
\end{definition}

\Matousek was the first to provide an algorithm for computing
$k$-shallow cuttings of size $O((n/k)^{\floor{d/2}})$ \cite{m-rph-92}.
When $d = 2, 3$, a $k$-shallow cutting of size $O(n/k)$ can be
constructed in $O(n\log n)$ time \cite{ct-odasc-16}.  For $d \geq 4$,
we sketch a randomized algorithm which computes a $k$-shallow cutting,
based on \Matousek's original proof of existence \cite{m-rph-92}.

\begin{lemma}[Proof sketch in \apndref{sc:alg}]
    \lemlab{sc:alg} Let $H$ be a set of $n$ hyperplanes in $\R^d$. A
    $k$-shallow cutting of size $O((n/k)^{\floor{d/2}})$ can be
    constructed in $O(k(n/k)^{\floor{d/2}} + n\log n)$ expected time.
    For each simplex $\Cell$ in the cutting, the algorithm returns the
    set of hyperplanes intersecting $\Cell$ and the number of
    hyperplanes lying below $\Cell$.
\end{lemma}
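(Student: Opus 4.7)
The plan is to realize Matoušek's existence proof \cite{m-rph-92} as a randomized algorithm. I would draw a random sample $R \subseteq H$ of size $r = c \cdot n/k$ for a sufficiently large constant $c$, compute the lower envelope of $R$, and form the canonical bottom-vertex triangulation $\TSet$ of the region lying on or below this envelope. By the Upper Bound Theorem for arrangements of hyperplanes, the lower envelope of $r$ hyperplanes in $\R^d$ has complexity $O(r^{\floor{d/2}})$, so $\cardin{\TSet} = O\pth{(n/k)^{\floor{d/2}}}$ candidate simplices are produced.

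Two properties must hold for $\TSet$ to certify as a valid $k$-shallow cutting: \emph{sparsity}, that each simplex of $\TSet$ intersects at most $k$ hyperplanes of $H$, and \emph{coverage}, that the union of simplices of $\TSet$ covers the $(\leq k)$-level of $H$. Both follow from the standard Clarkson--Shor random sampling and $\eps$-net framework applied to canonical triangulations. A point at level $\leq k$ in $H$ has expected level $kr/n = c$ in $R$, so Chernoff bounds show that it lies at level $0$ of $R$ (hence inside some simplex of $\TSet$) with high probability. Conversely, the probability that a fixed canonical simplex has an $H$-conflict list exceeding $c' k$ is exponentially small in $c'$, and a union bound over the $O\pth{(n/k)^{\floor{d/2}}}$ canonical simplices establishes sparsity with high probability. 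If either check fails on the produced $\TSet$ I resample; the expected number of trials is $O(1)$.

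For the running time, sampling costs $O(n)$, and the lower envelope of $R$ together with its canonical triangulation can be constructed in $O\pth{r^{\floor{d/2}}} + O(r \log r)$ time using known lower-envelope algorithms. The main obstacle --- and the step that needs the full time budget --- is computing for each simplex $\Cell \in \TSet$ its conflict list in $H$ and the number of hyperplanes of $H$ lying entirely below $\Cell$. I would proceed hierarchically: build a point-location data structure on $\TSet$ in $O(\cardin{\TSet} \log \cardin{\TSet})$ time, then for each hyperplane $\Hplane \in H \setminus R$ locate a starting simplex in $O(\log n)$ time and trace $\Hplane$ through the simplices of $\TSet$ it crosses, updating each one's conflict list and propagating depth counts between neighboring simplices along the trace. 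By the sparsity bound, the total length of all conflict lists is $O\pth{k (n/k)^{\floor{d/2}}}$, which dominates the tracing cost; combined with the $O(n \log n)$ location overhead this yields the claimed expected running time of $O\pth{k (n/k)^{\floor{d/2}} + n \log n}$.
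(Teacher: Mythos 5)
Your proposal diverges from the paper's proof in two places where the divergence actually breaks the argument.

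\textbf{Coverage is not obtained from the level-$0$ region of $R$.} You triangulate only the region lying on or below the lower envelope of $R$, i.e.\ the level-$0$ cell of $\ArrX{R}$. With $r = c\,n/k$, a point $x$ at level $k$ in $H$ has expected level $c$ in $R$, so the probability that $x$ lies at level $0$ of $R$ is about $e^{-c}$, a positive constant. Your Chernoff appeal runs the wrong way: concentration tells you $x$ is \emph{near} level $c$, not \emph{at} level $0$. Hence a constant fraction of the $(\leq k)$-level of $H$ is typically \emph{not} covered by the lower-envelope prisms, and this cannot be repaired by an $O(1)$ number of resamplings. What the paper (following Matou{\v{s}}ek) does instead is triangulate the full arrangement of $R$ (by bottom-vertex triangulation) and then \emph{retain every simplex that contains a point of level at most $k$ in $H$}; the Clarkson--Shor bound on the complexity of the $(\leq O(1))$-level of $R$ is what yields the $O\big((n/k)^{\floor{d/2}}\big)$ size, not the complexity of the lower envelope alone.

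\textbf{Sparsity does not hold with high probability, so ``resample until valid'' is not an $O(1)$-round procedure.} The correct statement is the expected exponential decay: the expected number of retained simplices with conflict list in $\big((t-1)k, tk\bigr]$ is $O\big(2^{-t}\,\cardin{\Cells}\big)$. This is an expectation over the sample, not a high-probability per-simplex guarantee. There will, with overwhelming probability, exist simplices whose conflict lists substantially exceed $k$; a union bound over $\Theta\big((n/k)^{\floor{d/2}}\big)$ canonical simplices would force the conflict-list threshold up by a $\Theta\big(\log(n/k)\big)$ factor, which breaks the claimed size and sparsity simultaneously. The paper handles this with a \emph{second-level refinement}: each simplex whose conflict list has size $tk$ for $t > 1$ is refined by a $(1/t)$-cutting clipped inside it. The exponential decay lemma then bounds the expected total refinement cost by $O\big(k (n/k)^{\floor{d/2}}\big)$. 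This refinement step is not optional, and it is entirely absent from your proposal.

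The rest (computing conflict lists and below-counts by tracing, the rough form of the running-time accounting) is plausible once the construction is corrected, though the paper actually obtains the conflict lists and depth counts by a lazy randomized incremental construction rather than a post-hoc trace; that choice is what makes the depth-count maintenance clean in $d \geq 4$.
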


Let $\PS \subset \R^d$ be a set of $n$ points and let $H = \Dual{\PS}$
be the set of dual hyperplanes.  The algorithm
first computes a $k$-shallow cutting for the top and bottom $(\le k)$-levels
for the given set of hyperplanes $H$ using \lemref{sc:alg}.  Let
$\Cell_1, \ldots, \Cell_\ell$, where
\begin{equation*}
    \ell = O((n/k)^{\floor{d/2}}),
\end{equation*}
be the collection of simplices in the cutting. For each simplex
$\Cell_i$, we have the subset $H \cap \Cell_i$ and the number of
hyperplanes lying completely below $H$ (which is at most $k$). For
each cell $\Cell_i$, let $\Impl(\Cell_i)$ be the set of vertices of
$\ArrX{H}$ which have level $k$ or $n-k$ and are contained in
$\Cell_i$.

\subsubsection{The algorithm} %
The algorithm computes the above shallow cutting, and treats each
simplex as a bundle.  We now apply the algorithm of
\secref{lp:bundles} to solve the batched \LP problem defined by these
bundles, except that we delegate each basis calculation/verification
to a call to the algorithm of \lemref{k:sets}, which involves a single
bundle (i.e., $k$ hyperplanes).

\begin{lemma}
    \lemlab{yolk:small:k}%
    Let $\PS \subset \R^d$ be a set of $n$ points in general position.
    For a given integer $k$, one can compute in
    $O\pth{k^{d-1}\big(1 + (n/k)^{\floor{d/2}}\big) \log k + n \log n
    }$ expected time the smallest radius ball intersecting all of the
    hyperplanes of $\EPlanesX{k}(\PS)$.
\end{lemma}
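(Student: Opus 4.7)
The plan is to analyze the randomized incremental algorithm described just before the lemma statement. Two ingredients will be combined: (i) an efficient per-cell violation test implemented via the zone-walking decider from \secref{discrete:yolk:dec}, and (ii) a Seidel-style backward analysis bounding the expected cost of basis changes and their recursive subproblems.

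First I would invoke \lemref{sc:alg} to construct a $k$-shallow cutting of $H = \Dual{\PS}$, yielding $\ell = O((n/k)^{\floor{d/2}})$ simplices $\Cell_1, \ldots, \Cell_\ell$ in $O(k(n/k)^{\floor{d/2}} + n \log n)$ time, together with $H \cap \Cell_i$ and the below-count $\uDepth$ for every cell. Because the cutting covers the entire $(\leq k)$-level, every vertex of $\ArrX{H}$ of level exactly $k$ lies inside some $\Cell_i$, so the implicit constraint set of the \LP{}-type problem is exactly $\bigcup_i \Impl(\Cell_i)$. For each cell, I would implement the violation test by calling the decider of \secref{discrete:yolk:dec} with inputs $\Cell_i$, $H \cap \Cell_i$ (of size at most $k$), candidate $\Dual{\Ball_{i-1}}$, and the offset $\uDepth$. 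The region $\Cell_i \setminus \Dual{\Ball_{i-1}}$ splits into at most two convex pieces, each bounded by facets of $\Cell_i$ and a branch of a constant-degree hyperboloid; by \lemref{zone:complexity}, the zone of this surface in $\ArrX{H \cap \Cell_i}$ has complexity $O(k^{d-1} \log k)$, so the test runs in $V(k) = \tldO(k^{d-1})$ time.

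Next I would apply the standard Seidel-style analysis for \LP{}-type problems of combinatorial dimension $\CD = d + 1$ (\lemref{egg:lp:type}). Inserting the cells in uniformly random order, the backward argument shows that the basis changes upon inserting $\Cell_i$ with probability at most $(d+1)/i$, and each such event triggers a recursion on the first $i$ cells with the updated initial basis. The resulting recurrence
\[
T(m) \;\leq\; \sum_{i=1}^{m} \pth{ V(k) \,+\, \frac{d+1}{i}\, T(i) }
\]
solves to $T(\ell) = O(V(k) \cdot \ell)$, with a multiplicative constant depending only on $d$. Combining with the cutting cost gives
\[
\tldO(k^{d-1}) \cdot O\pth{(n/k)^{\floor{d/2}}} \;+\; O\pth{k(n/k)^{\floor{d/2}} + n \log n}
\;=\; \tldO\pth{k^{d-1}\pth{1 + (n/k)^{\floor{d/2}}}},
\]
where the ``$+1$'' absorbs the degenerate case $n \leq k$ in which the cutting reduces to a single cell and only a constant number of violation tests and basis computations are performed.

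The main obstacle I expect is justifying the $\tldO(k^{d-1})$ bound on a single violation test. Two points must be verified: (a) that the clipped hyperboloid $\BX{\Dual{\Ball_{i-1}}} \cap \Cell_i$ is a piecewise algebraic surface of constant description complexity, so that \lemref{zone:complexity} applies to give the $O(k^{d-1} \log k)$ zone bound; and (b) that the walk along the zone can maintain the true level in $\ArrX{H}$ of each visited vertex in constant amortized time---this is where the offset $\uDepth$ returned by \lemref{sc:alg} is essential, since it converts a locally computed level in $H \cap \Cell_i$ into the global level in $H$ without any additional work. Once these points are in place, the remainder is a direct application of the framework underlying \lemref{chan:implicit} in Seidel-style recursive form.
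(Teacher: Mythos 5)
Your proposal follows the paper's proof essentially step for step: build the $k$-shallow cutting of $\Dual{\PS}$ via \lemref{sc:alg}, treat each simplex $\Cell_i$ as a batched constraint, run a Seidel-style randomized incremental algorithm over the cells, and implement each per-cell test by zone-walking inside $\Cell_i$ at cost $\tldO(k^{d-1})$. The paper is slightly more terse—it routes both the violation test and the basis computation through \lemref{k:sets} as a black box and simply cites Sharir--Welzl \cite{sw-cblprp-92} for the $O(m)$ violation tests and $O(\log^{d+1} m)$ basis computations—but the content is the same.

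One small imprecision worth fixing: the displayed recurrence $T(m) \leq \sum_{i=1}^{m}\bigl(V(k) + \frac{d+1}{i}T(i)\bigr)$ is circular at $i=m$, and even with the natural repair (e.g.\ $T(i-1)$ on the right) it unrolls to $O(V(k)\, m^{d+1})$ by the usual product bound, not $O(V(k)\, m)$. What actually gives the linear bound is the Sharir--Welzl potential argument that tracks how the basis improves along the recursion, yielding $O(\CD^{\CD+1} m)$ violation tests and $O((\CD\log m)^{\CD})$ basis computations; the paper invokes exactly this (see also \lemref{num:calls:new}). Replacing the ad-hoc recurrence with that citation closes the gap without changing anything else in your argument.
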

\begin{proof}
    The algorithm is described above. The correctness is immediate
    from \lemref{num:calls:new}.  As for the running time, the
    algorithm performs $O( \ell)$ basis calculations and violation
    tests, and each one takes $O(k^{d-1} \log k)$ time, as this is the
    size of the conflict list of each bundle. This running time
    dominates the time to compute the cutting, except for the
    $O(n \log n)$ additive term.
\end{proof}

\section{The (continuous) yolk as an implicit \LP{}}
\seclab{continuous:yolk}

\subsection{Background}
\begin{definition}
    Let $\PS \subset \R^d$ be a set of $n$ points in general position.
    The \emphi{continuous yolk} of $\PS$ is the ball of smallest
    radius intersecting all median hyperplanes of $\PS$.
\end{definition}

In contrast to \defref{extremal:yolk}, we emphasize that the
(continuous) yolk must intersect all median hyperplanes defined by
$\PS$ (not just extremal median hyperplanes).

As before, the algorithm works in the dual space. For an integer $k$,
let $\PlanesX{k}(\PS)$ be the collection of halfspaces containing
exactly $k$ points of $\PS$ on or above it. Equivalently, $\Dual{\PS}$
is the collection of hyperplanes defined by $\PS$ in the dual space,
and $\Dual{\bigl(\PlanesX{k}(\PS)\bigr)}$ is the $k$-level of
$\Dual{\PS}$. Our problem can be restated in the dual space as
follows.

\begin{problem}
    \problab{cont:yolk:dual} Let $\PS$ be a set of points in $\R^d$ in
    general position and let $k$ be a given integer. Compute the ball
    $\Ball$ of smallest radius so that all points in the $k$-level of
    $\Dual{\PS}$ are contained inside the region $\Dual{\Ball}$.
\end{problem}

Let $\lvY{k}{\PS} = \Dual{\bigl(\PlanesX{k}(\PS)\bigr)}$ denote the
set of all points in the $k$-level of $\Dual{\PS}$.  Note that
$\lvY{k}{\PS}$ consists of points which are either contained in the
interior of some $\ell$-dimensional flat, where
$0 \leq \ell \leq d-1$, or in the interior of some $d$-dimensional
cell of $\ArrX{\Dual{\PS}}$.

We take the same approach as the algorithm of
\thmref{discrete:yolk}---building a decider subroutine, and showing
that the input space can be decomposed into subproblems efficiently.
However the problem is more subtle, as the collection of constraints
(i.e., median hyperplanes) is no longer a finite set.

\subsubsection{The input space} %
The input consists of a simplex $\Cell$. The algorithm, in addition to
$\Cell$, maintains the set of hyperplanes
\begin{equation*}
    H = \Dual{\PS} \cap \Cell = \Set{\hplane \in
       \Dual{\PS}}{\hplane \cap \Cell \neq \emptyset},
\end{equation*}
and a parameter $\uDepth$ which is equal to the number of hyperplanes
of $\Dual{\PS}$ lying completely below $\Cell$.

\subsubsection{The implicit constraint space} %
Each input $\Cell$ maps to a region $R$ which is the portion of the
$k$-level $\lvY{k}{\PS}$ contained inside $\Cell$.  For each
$d$-dimensional cell in $R$, we compute its bottom-vertex
triangulation (see, e.g., \cite[Section 6.5]{m-ldg-02}), and collect
all of these simplices, and all lower-dimensional faces of $R$, into a
set $\Impl(\Cell)$. See \figref{bvt}.

\begin{figure}[t]
    \phantom{}
    \hfill
    \includegraphics[width=0.28\linewidth,page=1]{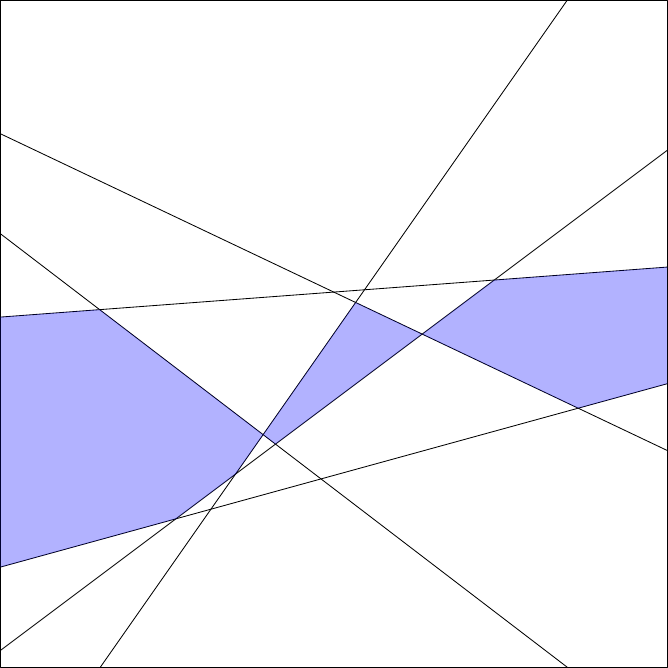}%
    \hfill
    \includegraphics[width=0.28\linewidth,page=2]{figs/k_level}%
    \hfill
    \includegraphics[width=0.28\linewidth,page=3]{figs/k_level}%
    \hfill
    \phantom{}

    \caption{Left: A set of lines and the cells of the $3$-level.
       Middle: A simplex $\Cell$, with the portion of the $3$-level
       inside $\Cell$. Right: Triangulating the portion of the
       $3$-level contained inside $\Cell$. All red triangles together
       with the lower-dimensional faces of the $3$-level form the set
       of constraints $\Impl(\Cell)$.}
    \figlab{bvt}
\end{figure}

Let $\Xi$ be the collection of all simplices formed from $d+1$
vertices of the arrangement $\ArrX{\Dual{\PS}}$.  We let $\CN$ be the
union of the sets $\Impl(\Cell)$ over all simplices $\Cell \in \Xi$.
To see why this suffices, each simplex in the input space is a simplex
generated by a cutting algorithm.  One property of cutting algorithms
\cite{c-chdc-93} is that the simplices returned are induced by
hyperplanes of $\Dual{\PS}$. Indeed, each simplex has (at most) $d+1$
vertices, and upon inspection of the cutting algorithm, each vertex is
defined by $d$ hyperplanes of $\Dual{\PS}$. There are a finite number
of simplices $\Cell$ to consider, and each $\Cell$ induces a fixed
subset of constraints $\Impl(\Cell) \subseteq \CN$.

As such, $\CN$ forms our constraint set, where each constraint is of
constant size (depending on $d$).  Clearly, a solution satisfies all
constraints of $\CN$ if and only if the solution intersects all
hyperplanes in the set $\PlanesX{k}(\PS)$. For a given subset
$\CNB \subseteq \CN$, the objective function is the minimum radius
ball $\Ball$ such that all regions of $\CNB$ are contained inside the
region $\Dual{\Ball}$. In particular, the problem of computing the
minimum radius ball $\Ball$ such that $\Dual{\Ball}$ contains all
points of $\lvY{k}{\PS}$ in its interior is an \LP{}-type problem of
constant combinatorial dimension.

\subsection{The algorithm}
\subsubsection{Constructing subproblems} %
For a given input simplex $\Cell$ (along with the set
$H = \Dual{\PS} \cap \Cell$ and the number $\uDepth$) a collection of
subproblems $\Cell_1, \ldots, \Cell_\sza$ (with the corresponding sets
$H_i$ and numbers $\uDepth_i$ for $i = 1, \ldots, \sza$) can be
constructed as described in \lemref{k:sets}, by computing a cutting of
the planes $H$ and clipping this cutting inside $\Cell$.  In
particular, we have that $\bigcup_i \Impl(\Cell_i) =
\Impl(\Cell)$. Strictly speaking, we have not decomposed the
constraints of $\Impl(\Cell)$ (as required by \thmref{l:p:implicit}),
but rather have decomposed the region which is the union of the
constraints of $\Impl(\Cell)$.  This step is valid, as a solution
satisfies the constraints of $\bigcup_i \Impl(\Cell_i)$ if and only if
it satisfies the constraints of $\Impl(\Cell)$.

\subsubsection{The decision procedure} %
Given a candidate solution $\Dual{\Ball}$, the problem is to decide if
$\Dual{\Ball}$ contains $\Impl(\Cell)$ in its interior.  The decision
algorithm itself is similar as in the proof of \thmref{discrete:yolk}.
Consider the set $\Cell \cap (\R^d \setminus \Dual{\Ball})$, where
$\Cell$ is a simplex, and observe that it is the union of at most two
convex regions. Let $\Cell'$ be one of these two regions of
interest. Observe that it suffices to check if there is a point on the
boundary of $\Cell'$ which is part of the $k$-level.  Let
$H' \subseteq H$ be the subset of hyperplanes intersecting $\Cell'$.

To this end, compute $\ZoneY{\BX{\Cell'}}{H'}$. For each
$(d-1)$-dimensional face $f$ of $\Cell'$, the collection of regions
$\Xi = \Set{f \cap s}{s \in \ZoneY{\BX{\Cell'}}{H'}}$ forms a
$(d-1)$-dimensional arrangement restricted to $f$.  Furthermore, the
complexity of this arrangement lying on $f$ is at most
$O(n^{d-1} \log n)$.  Notice that the level of all points in the
interior of a face of $\Xi$ is constant, and two adjacent faces
(sharing a boundary) have their level differ by at most a
constant. The algorithm picks a face in $\Xi$, computes the level of
an arbitrary point inside it (adding $\uDepth$ to the count). Then,
the algorithm walks around the arrangement, exploring all faces, using
the level of neighboring faces to compute the level of the current
face. If at any step a face has level $k$, we report that the input
$(\Cell, H, \uDepth)$ violates the candidate solution $\Dual{\Ball}$.

\paragraph{Analysis of the decision procedure} %
We claim the running time of the algorithm is proportional to the
complexity of the zone $\ZoneY{\BX{\Cell'}}{H'}$. Indeed, for each
$(d-1)$-dimensional face $f$ of $\Cell'$ (where $f$ may either be part
of a hyperplane or part of the boundary of $\Dual{\Ball}$), we can
compute the set $\Set{f \cap s}{s \in \ZoneY{\BX{\Cell'}}{H'}}$ in
time proportional to the total complexity of $\ZoneY{\BX{\Cell'}}{H'}$
(assuming we can intersect a hyperplane with a portion of a
constant-degree surface efficiently). The algorithm then computes the
level of an initial face naively in $O(\cardin{H'})$ time, and
computing the level of all other faces can be done in
$O(\cardin{\ZoneY{\BX{\Cell'}}{H'}})$ time by performing a graph
search on the arrangement.

Because the boundary of $\Cell'$ is constructed from $d+1$ hyperplanes
and the boundary of the hyperboloid, \lemref{zone:complexity} implies
that the zone complexity is bounded by
$O(\cardin{H}^{d-1} \log \cardin{H})$.  As such, our decision
procedure runs in time $\DX{n} = O(n^{d-1} \log n)$.

\paragraph{A slightly improved decision procedure} %
In $\R^3$, one can shave the $O(\log n)$ factor to obtain an $O(n^2)$
expected time algorithm. We modify the decision procedure as follows,
which avoids computing the zone $\ZoneY{\BX{\Cell'}}{H'}$. For each 2D
face $f$ of $\Cell'$, simply compute the arrangement of the set of
curves $\Set{f \cap h}{h \in H}$ on $f$ in $O(n^2)$ time. As before,
we perform a graph search on this arrangement, computing the level of
each face. If at any time we discover a point on the boundary of
$\Cell'$, of the desired level, we report that the given input
violates the given candidate solution.

For higher dimensions $d\ge 4$, we can similarly avoid computing the
zone.  Recall that the goal is to find a point $p$ that lies in the
intersection of the $k$-level with a $(d-1)$-dimensional face $f$ of
$\Cell'$.  Consider the (unknown) cell $\gamma$ containing $p$ in the
arrangement of $\Set{f \cap h}{h \in H}$ on $f$.  Imagine moving $p$
to lie in an arbitrarily small neighborhood of the minimum point $p'$
in $\gamma$ with respect to the $x_d$ coordinate.  The level of $p$
remains unchanged by the move (and differs from the level of $p'$ by
at most $d$).
	
\begin{itemize}
    \item \emph{Case 1}: $p'$ is incident to at most $d-2$ hyperplanes
    in $H$.  We can search for such a $p'$, by trying all tuples of at
    most $d-2$ hyperplanes.  For each such tuple $(h_1,\ldots,h_\ell)$
    with $\ell\le d-2$, we compute all $O(1)$ local $x_d$-minima $\pq$
    of $f\cap h_1\cap\ldots \cap h_\ell$. Next, we compute the level
    of $q$ naively in $O(n)$ time. Finally, we examine the
    neighborhood of $q$.  This takes $O(n^{d-2}\cdot n)=O(n^{d-1})$
    time.
	
    \item \emph{Case 2}: $p'$ is incident to $d-1$ hyperplanes in $H$.
    We try all tuples of $d-3$ hyperplanes.  For each such tuple
    $(h_1,\ldots, h_{d-3})$, we compute the two-dimensional
    arrangement of the set of curves
    \begin{equation*}
        \Set{f\cap h_1\cap\ldots\cap h_{d-3}\cap h}{h\in H} 
    \end{equation*}
    on the two-dimensional surface $h_1\cap\ldots\cap h_{d-3}\cap f$
    in $O(n^2)$ time. As above, we perform a graph search on this
    arrangement, computing the level of each cell and each vertex in
    the arrangement, and examine the neighborhood of each vertex. The
    total time is $O(n^{d-3}\cdot n^2)=O(n^{d-1})$.
\end{itemize}
   
Thus, our improved decision procedure runs in time
$\DX{n} = O(n^{d-1})$ for any $d\ge 3$.

\begin{lemma}
    \probref{cont:yolk:dual} can be solved in $O(n\log n +
    n^{d-1})$ %
    expected time, where $n = \cardin{\PS}$.
\end{lemma}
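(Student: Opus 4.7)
The plan is to appeal directly to \lemref{chan:implicit}, since every ingredient it demands has been assembled above. The input space $\Pi$ consists of triples $(\Cell, H, \uDepth)$ where $\Cell$ is a simplex, $H = \Dual{\PS} \cap \Cell$, and $\uDepth$ counts the hyperplanes of $\Dual{\PS}$ lying completely below $\Cell$. The map $\Impl$ sends each such input to the collection of simplices and lower-dimensional faces of the $k$-level $\lvY{k}{\PS}$ clipped to $\Cell$, as built via the bottom-vertex triangulation. Because every simplex in $\Xi$ is induced by $d$-tuples of hyperplanes of $\Dual{\PS}$, the global constraint set $\CN = \bigcup_{\Cell \in \Xi} \Impl(\Cell)$ is finite, each constraint has constant description size, and the smallest-enclosing-ball objective gives an \LP{}-type problem of constant combinatorial dimension, in direct analogy with \lemref{egg:lp:type}.

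I would then verify the three hypotheses of \lemref{chan:implicit} in turn. The base case $n = O(1)$ is immediate, since the clipped portion of $\lvY{k}{\PS}$ inside $\Cell$ then has constant complexity and a basis can be found by brute force. The decider developed above runs in time proportional to the complexity of the zone $\ZoneY{\BX{\Cell'}}{H'}$, where $\BX{\Cell'}$ is a piecewise algebraic surface built from at most $d+1$ hyperplanes together with a branch of the hyperboloid bounding $\Dual{\Ball}$; by \lemref{zone:complexity} this complexity is $O(n^{d-1} \log n)$, and a single graph traversal over the zone suffices to read off face levels, so $D(n) = O(n^{d-1} \log n)$. For the decomposition, the cutting-based construction of \secref{subprob} produces $\sza = O(c^d)$ subproblems, each of size at most $n/c$, in linear time, and by construction $\bigcup_{i=1}^{\sza} \Impl(\Cell_i) = \Impl(\Cell)$.

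Choosing $c$ to be a sufficiently large constant so that $c_\lpdim \log^\lpdim \sza < c$, and noting that $D(n/k) = O(D(n)/k)$ for the given $D(n)$, \lemref{chan:implicit} applies and yields expected running time $O(D(n)) = O(n^{d-1} \log n)$. The one subtle point worth flagging, and the step I expect to be the main conceptual check, is that $\Impl(\Cell)$ is a set of regions carved from a $k$-level rather than a collection of classical hyperplane constraints; however, the requirement $\bigcup_i \Impl(\Cell_i) = \Impl(\Cell)$ is preserved at the level of regions, and the LP-type objective depends only on the union of these regions (i.e.\ on which portion of $\lvY{k}{\PS}$ must be covered by $\Dual{\Ball}$), so the framework goes through unchanged. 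Once this is observed, the running-time bound is a direct consequence of the zone bound together with Chan's recursion.
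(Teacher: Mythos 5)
Your proposal is correct and follows essentially the same approach as the paper: the paper's proof is a one-line pointer to the preceding discussion (input space, constraint map via bottom-vertex triangulation of the clipped $k$-level, cutting-based decomposition, and zone-based decider), all of which you have reproduced and verified explicitly, including the same subtle observation that the decomposition preserves the union of regions rather than a literal partition of constraints.
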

\begin{proof}
    Follows by plugging the above discussion into
    \thmref{l:p:implicit}.
\end{proof}

By modifying the decision procedure appropriately, we also obtain a
similar result to \corref{range:k:sets}.

\begin{corollary}
    \corlab{range:cont} Let $\PS \subset \R^d$ be a set of $n$ points
    in general position, and let $S \subset \IntRange{n}$. The
    smallest ball intersecting all hyperplanes in
    $\bigcup_{k \in S} \PlanesX{k}(\PS)$ can be computed in
    $O(n\log n + n^{d-1})$ %
    expected time.
\end{corollary}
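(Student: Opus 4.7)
The plan is to essentially repeat the argument used for \probref{cont:yolk:dual} (the analog of \thmref{discrete:yolk} in the continuous setting), while widening the implicit constraint space to accommodate the range of levels $S$, and then invoke \lemref{chan:implicit}. The input space and subproblem construction carry over verbatim: the input is a simplex $\Cell$ together with $H = \Dual{\PS} \cap \Cell$ and the below-count $\uDepth$, and subproblems are built using a cutting of $H$ clipped to $\Cell$ exactly as in \secref{subprob}. The only thing that changes is the definition of $\Impl(\Cell)$ and the decision procedure.

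Concretely, I would redefine $\Impl(\Cell)$ to be a bottom-vertex triangulation of the union $\bigcup_{k\in S} \lvY{k}{\PS}$ restricted to $\Cell$, together with all lower-dimensional faces. As before, this is a union of faces of $\ArrX{\Dual{\PS}}$, and the objective (smallest ball whose dual region contains all of $\Impl(\Cell)$ in its interior) remains an \LP{}-type problem of constant combinatorial dimension since the dual of a ball is still bounded by a constant-degree surface. The decomposition identity $\bigcup_i \Impl(\Cell_i) = \Impl(\Cell)$ also still holds, because the union-of-levels set is itself a subset of $\ArrX{\Dual{\PS}}$ and each subcell is cut by the cutting in exactly the same way.

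For the decision procedure, given a candidate ball $\Ball$, partition $\Cell \cap (\R^d \setminus \Dual{\Ball})$ into the (at most two) convex pieces $\Cell'$, let $H' \subseteq H$ be the subset of hyperplanes intersecting $\Cell'$, and compute $\ZoneY{\BX{\Cell'}}{H'}$ via \lemref{zone:complexity}. Then, exactly as before, walk the restriction of this zone onto each $(d-1)$-face of $\BX{\Cell'}$, computing levels by starting from one face (with level computed naively in $O(|H'|)$ plus $\uDepth$) and updating by $\pm O(1)$ across each adjacent face. The sole modification is the test: instead of checking whether the current face has level exactly $k$, we check whether its level belongs to $S$. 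Preprocessing $S$ into a hash table makes each test $O(1)$, so the asymptotic cost of the walk is unaffected.

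The main potential obstacle is making sure that $S$ being non-constant does not break anything: the constraint set $\CN$ grows (it is the union of $\Impl(\Cell)$ over all cutting-induced simplices $\Cell$), but crucially it still consists of constant-complexity faces of $\ArrX{\Dual{\PS}}$, so combinatorial dimension stays constant, and the decider's running time is still governed by the zone complexity $O(|H|^{d-1}\log|H|)$ guaranteed by \lemref{zone:complexity} rather than by $|S|$. With $D(n) = O(n^{d-1}\log n)$ as before, \lemref{chan:implicit} yields the claimed $O(n^{d-1}\log n)$ expected running time.
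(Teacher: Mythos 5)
Your proposal is correct and matches the paper's intent: the paper's Corollary~\ref{cor:range:cont} is stated as following from the same modification used for Corollary~\ref{cor:range:k:sets}, namely replacing the test ``level $= k$'' with ``level $\in S$'' via a hash table, while keeping the input space, cuttings, and zone-walk decider exactly as in the continuous-yolk algorithm. You also correctly note that the combinatorial dimension and the $O(n^{d-1}\log n)$ zone bound are unaffected by $|S|$, which is the only subtlety worth flagging.
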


\begin{theorem}
    \thmlab{cont:yolk} Let $\PS \subset \R^d$ be a set of $n$ points
    in general position.  One can compute the yolk of $\PS$ in
    $O(n\log n + n^{d-1})$ %
    expected time.
\end{theorem}
\begin{proof}
    The result follows by applying \corref{range:cont} with the
    appropriate choice of $S$. When $n$ is even,
    \lemref{yolk:to:k:sets} tells us to choose
    $S = \{n/2, n/2+1, \ldots, n/2+d\}$. When $n$ is odd, we set
    $S = \{\ceil{n/2}, \ceil{n/2}+1, \ldots, \ceil{n/2}+d-1\}$.
\end{proof}

\section{The Tukey ball and center ball as implicit \LP{}s}
\seclab{tukey:center}

Here, we are dealing with an extension of Tukey depth
(\defref{tukey:depth}). The set of all points with Tukey depth
$\geq k$ is the polytope $\TP_k$ (see \lemref{depth:region}).  Recall
that by the centerpoint theorem $\TP_{n/(d+)}$ is not empty.

\begin{definition}
    \deflab{tukey:ball} Let $\PS \subset \R^d$ be a set of $n$ points
    in general position.  For a parameter $k \leq n$, the \emphi{Tukey
       ball} of $\PS$ is the smallest radius ball intersecting
    halfspaces in the set $\PlanesX{k}(\PS)$.
\end{definition}

The Tukey median is a point in $\R^d$ with maximum Tukey depth.  If
the Tukey median of $\PS$ has Tukey depth $k(\PS)$, then for
$k > k(\PS)$ the set $\TP_k$ is empty---the Tukey ball has non-zero
radius. When $k \leq k(\PS)$, $\TP_k$ is non-empty, implying that the
Tukey ball has radius zero.

\begin{definition}
    \deflab{center:ball} Let $\PS \subset \R^d$ be a set of $n$ points
    in general position.  For a parameter $k \leq k(\PS)$, the
    \emphi{center ball} of $\PS$ is the ball of largest radius
    contained in the region $\TP_k$.
\end{definition}

Recently, Oh and Ahn \cite{oa-ccr-19} develop an $O(n^2\log^4 n)$ time
algorithm for computing the polytope $\TP_k$ in $\R^3$.  In contrast,
the center ball is the largest ball contained inside $\TP_k$, and we
show it can be computed in expected time $O(n^2 \log n)$.

\subsection{The Tukey ball in the dual}

For a set of $n$ points $\PS$ in general position, it suffices to
restrict our attention to hyperplanes which contain $d$ points of
$\PS$, and one of the open halfspaces contains more than $n-k$ points
of $\PS$. In the dual, each point $\pa \in \PS$ is mapped to a
hyperplane $\Dual{\pa}$ (see \defref{duality}). A hyperplane $h$
passing through $d$ points of $\PS$ maps to a point $\Dual{h}$ which
is a vertex in the arrangement $\ArrX{\Dual{\PS}}$.

Recall that by \lemref{dual:ball}, a ball $\Ball$ in the primal maps
to the region enclosed by two branches of a hyperboloid.  Formally,
the region $\Dual{\Ball}$ is the collection of points
$(x_1, \ldots, x_d) \in \R^d$ satisfying has the equation
$(x_d/\alpha_d)^2 - \sum_{i=1}^{d-1} (x_i/\alpha_i)^2 \leq 1$, where
$\alpha_1, \ldots, \alpha_d \in \R$ define the hyperboloid, and are
determined by $\Ball$. We say that a point $(x_1, \ldots, x_d)$ lies
above the top branch of $\Dual{\Ball}$ if the inequality
\begin{equation*}
    x_d \geq \alpha_d \sqrt{1 + \sum_{i=1}^{d-1} (x_i/\alpha_i)^2}    
\end{equation*}
holds. A point lying below the bottom branch of $\Dual{\Ball}$ is
defined analogously.

Let $h$ be a hyperplane. Suppose the open halfspace $h^-$ below $h$
contains $k$ points of $\PS$. In the dual, a vertical ray $\ray_h$
shooting upwards from the point $\Dual{h}$ intersects $k$ hyperplanes
of $\Dual{\PS}$. When a hyperplane $h$ intersects $\Ball$ in its
interior, then $\Ball \cap h^- \neq \varnothing$ and
$\Ball \not\subseteq h^-$.  In the dual, $\Dual{\Ball}$ contains the
point $\Dual{h}$, and the upward ray $\ray_h$ intersects the boundary
of $\Dual{\Ball}$ once.  Alternatively, if $\Ball \subseteq h^-$, then
in the dual the upward ray $\ray_h$ intersecting the boundary of
$\Dual{\Ball}$ twice (once each at the top and bottom branch). As
such, if $h^-$ is an open halfspace containing $k$ points of $\PS$
below it and does not intersect $\Ball$, then the upward ray $\ray_h$
does not intersect the boundary of $\Dual{\Ball}$. Hence, $\ray_h$
must lie entirely above the top branch of $\Dual{\Ball}$.  See
\figref{tukey:ball}.

\begin{figure}
    \centering \includegraphics[scale=0.8, page=1]{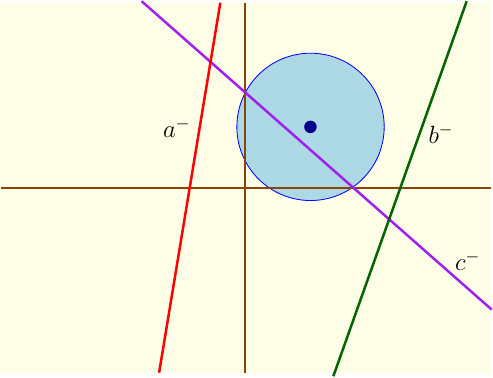}
    \hspace{20pt}%
    \includegraphics[scale=0.8, page=2]{figs/tukey_ball}
    \caption{A ball and three lines. Each line induces a halfspace
       which lies below the line. In the dual, this corresponds to
       three vertically upward rays.}
    \figlab{tukey:ball}
\end{figure}

Summarizing the above discussion, the problem of computing the Tukey
ball is equivalent to the following.

\begin{problem}
    \problab{dual:tukey}%
    Let $\PS \subset \R^d$ be a set of $n$ points in general
    position. The goal is to compute the ball $\Ball$, of smallest
    radius, such that (recalling \defref{top:bottom:lvls}):
    \begin{compactenumI}
        \item each point of the top $k$-level $\TY{k}{\Dual{\PS}}$,
        the vertical upward ray intersects $\Dual{\Ball}$, and
        \item each point of the bottom $k$-level $\BY{k}{\Dual{\PS}}$,
        the vertical downward ray intersects $\Dual{\Ball}$.
    \end{compactenumI}
\end{problem}

\begin{lemma}
    \lemlab{tukey:ball} Let $\PS \subset \R^d$ be a set of $n$ points
    in general position and $k \leq n$ a parameter. The Tukey ball can
    be computed in $O(n^{d-1} \log n)$ expected time.
\end{lemma}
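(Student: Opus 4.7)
The plan is to cast \probref{dual:tukey} as an implicit \LP-type problem and apply \lemref{chan:implicit}, following the template used for the extremal yolk in \secref{discrete:yolk:dec}. By \lemref{dual:ball}, a candidate ball $\Ball$ corresponds in the dual to the region $\Dual{\Ball}$ enclosed between two branches of a hyperboloid, and---analogously to \lemref{egg:lp:type}---the resulting problem is an \LP-type problem of constant combinatorial dimension, since a minimal set of tight constraints is bounded in terms of $d$. For the input space I maintain a simplex $\Cell$, the set $H = \Dual{\PS} \cap \Cell$ of dual hyperplanes meeting $\Cell$, and a counter $\uDepth$ recording the number of hyperplanes of $\Dual{\PS}$ lying entirely below $\Cell$. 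Subproblems are constructed exactly as in \secref{subprob}: compute a $(1/c)$-cutting of $H$, clip it against $\Cell$, and recompute the pair $(H_i, \uDepth_i)$ for each sub-simplex in $O(|H|)$ total time, yielding $O(c^d)$ subproblems each of size at most $|H|/c$, which preserves all relevant vertices of $\ArrX{\Dual{\PS}}$ inside $\Cell$.

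For the decider, I translate the ray-intersection conditions into purely geometric conditions on vertices. By the case analysis preceding \probref{dual:tukey}, a vertex $v \in \TY{k}{\Dual{\PS}}$ is violated by $\Dual{\Ball}$ iff the upward ray from $v$ misses $\Dual{\Ball}$, which occurs precisely when $v$ lies strictly above the top branch of the bounding hyperboloid; symmetrically, $v \in \BY{k}{\Dual{\PS}}$ is violated iff $v$ lies strictly below the bottom branch. Let $\Cell'$ denote either the convex portion of $\Cell$ above the top branch or the convex portion below the bottom branch, and let $H' \subseteq H$ be the hyperplanes meeting $\Cell'$. The decider computes the zone $\ZoneY{\BX{\Cell'}}{H'}$, picks any vertex of $\ArrX{H'}$ interior to $\Cell'$ and computes its top-level (or bottom-level, as appropriate) by counting hyperplanes of $H'$ lying above (resp.\ below) it and adding $\uDepth$, then walks the zone, updating the level across adjacent vertices. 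Since the level changes by a bounded amount between neighbors, this walk is linear in the zone size, and a violation is reported the moment a visited vertex attains the target level $k$.

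Since $\BX{\Cell'}$ is the union of at most $d+1$ hyperplane pieces together with one branch of a constant-degree algebraic surface, \lemref{zone:complexity} gives a zone of complexity $O(|H|^{d-1}\log|H|)$, constructible in the same time. Hence $D(n) = O(n^{d-1}\log n)$, and feeding all three ingredients into \lemref{chan:implicit} yields the stated expected running time. The main subtlety I expect is the bookkeeping across the two branches: the top- and bottom-level tests must be run as two independent zone computations (one for the convex region above the top branch, a separate one for the region below the bottom branch), and during the walk one must restrict attention to vertices that genuinely lie inside $\Cell'$ rather than merely in a zone cell that touches $\BX{\Cell'}$. Beyond that, the argument is structural and mirrors the extremal yolk proof almost verbatim.
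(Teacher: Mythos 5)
Your argument matches the paper's proof essentially verbatim: cast \probref{dual:tukey} as an implicit \LP-type problem, build the subproblems by clipping a $(1/c)$-cutting inside $\Cell$ as in \secref{subprob}, and implement the decider by computing $\ZoneY{\BX{\Cell'}}{H'}$ and walking its vertices while tracking levels against the two branches of the hyperboloid, which by \lemref{zone:complexity} gives $D(n)=O(n^{d-1}\log n)$ and hence the claimed bound via \lemref{chan:implicit}. The one small slip is bookkeeping---when seeding a top-level count you should add the number of hyperplanes of $\Dual{\PS}$ lying entirely above $\Cell'$ (and likewise account for the hyperplanes of $H \setminus H'$ lying above $\Cell'$), not $\uDepth$, which counts those below; both quantities are cheaply maintained during the cutting decomposition, so this does not affect the running time or correctness of the approach.
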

\begin{proof}
    The proof uses \thmref{l:p:implicit} to solve the dual problem
    (this problem is \LP{}-type with constant combinatorial dimension,
    where the constant depends on $d$). The input consists of a
    simplex $\Cell$, the set of hyperplanes
    $H = \Dual{\PS} \cap \Cell$ intersecting $\Cell$, and the number
    of hyperplanes lying above and below $\Cell$. A given input can be
    decomposed using cuttings, as in the algorithms for
    \thmref{discrete:yolk}, and \thmref{cont:yolk}.
  
    We sketch the decision procedure. Given a candidate ball $\Ball$,
    we want to decide if $\Dual{\Ball}$ violates any constraints
    induced by $H$. Equivalently, $\Dual{\Ball}$ is an invalid
    solution if either condition holds:
    \begin{enumerate*}[label=(\roman*)]
        \item there is a element of $\TY{k}{\Dual{\PS}}$ which is
        above the top branch of $\Dual{\Ball}$, or
        \item there is a element of $\BY{k}{\Dual{\PS}}$ which is
        below the bottom branch of $\Dual{\Ball}$.
    \end{enumerate*}
    As such, a straightforward modification of the decision procedure
    described in \lemref{e:y:d} yields a decider in
    $O(\cardin{H}^{d-1} \log \cardin{H})$ expected time.
\end{proof}

\subsubsection{Improved algorithm}

\begin{lemma}
    \lemlab{tukey:ball:k} Let $\PS \subset \R^d$ be a set of $n$
    points in general position and $k \leq n$ a parameter. The Tukey
    ball can be computed in
    $\tldO\pth{k^{d-1}\big(1 + (n/k)^{\floor{d/2}}\big)}$ expected
    time.
\end{lemma}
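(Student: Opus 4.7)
The plan is to mimic the $k$-sensitive algorithm for the extremal yolk given in \lemref{yolk:small:k}, replacing the shallow cutting of a single level by two shallow cuttings (one for the top level, one for the bottom level), and replacing the black-box decider with the Tukey ball decider from \lemref{tukey:ball}. The problem, in its dual form (\probref{dual:tukey}), asks for the smallest ball $\Ball$ such that every vertex of $\TY{k}{\Dual{\PS}}$ shoots an upward ray through $\Dual{\Ball}$, and every vertex of $\BY{k}{\Dual{\PS}}$ shoots a downward ray through $\Dual{\Ball}$.

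Let $H = \Dual{\PS}$. First, I would use \lemref{sc:alg} twice: once to build a $k$-shallow cutting $\Xi_B$ covering the $(\leq k)$-bottom-level of $H$, and once (applied to the hyperplanes reflected through a horizontal hyperplane) to build a $k$-shallow cutting $\Xi_T$ covering the $(\leq k)$-top-level. Each cutting has size $O((n/k)^{\floor{d/2}})$ and is built in $O(k(n/k)^{\floor{d/2}} + n\log n)$ expected time; together they supply, for each simplex $\Cell$, the set $H \cap \Cell$ (of size at most $k$) together with the relevant depth count. Every vertex of $\TY{k}{H}$ lies in some simplex of $\Xi_T$ and every vertex of $\BY{k}{H}$ in some simplex of $\Xi_B$, so the union of constraint sets $\Impl(\Cell)$ over $\Cell \in \Xi_T \cup \Xi_B$ is exactly the implicit constraint set of the Tukey ball \LP-type problem.

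Now I would run the batched randomized incremental \LP-type algorithm (the same used to prove \lemref{yolk:small:k}, cf.~\defref{batched} and \lemref{same:cd}) on the randomly permuted sequence of cells of $\Xi_T \cup \Xi_B$. In each iteration $i$, the algorithm performs a \emph{violation test} asking whether the current ball $\Ball_{i-1}$ violates any constraint induced by $\Cell_i$, and if so, a \emph{basis computation} that recomputes the optimum among the constraints already processed plus those in $\Cell_i$. Both subroutines are instances of the Tukey ball problem on the $O(k)$ hyperplanes $H \cap \Cell_i$ (together with the depth information the cutting provides), hence each costs $O(k^{d-1} \log k)$ expected time by \lemref{tukey:ball}. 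The Clarkson--Sharir--Welzl analysis gives $O((n/k)^{\floor{d/2}})$ expected violation tests and $O(\log^{d+1}(n/k))$ expected basis computations, yielding $\tldO\!\pth{k^{d-1}\big(1 + (n/k)^{\floor{d/2}}\big)}$ expected time overall, which dominates the cutting construction.

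The main obstacle is justifying that the batched \LP-type analysis applies in this setting: one must verify that the combinatorial dimension does not blow up when inserting a whole cell's worth of constraints at once, and that the ``backwards analysis'' bound on violation tests still holds when the ground set is $\Xi_T \cup \Xi_B$ rather than individual constraints. This is exactly the content of \lemref{same:cd}, and the argument transfers verbatim once we observe that each $\Cell_i$ carries only $O(k)$ defining hyperplanes, so the optimum restricted to $\cup_{j\le i} \Impl(\Cell_j)$ is computable by invoking \lemref{tukey:ball} as a sub-solver. The remaining care is ensuring the depth counts from the two shallow cuttings are correctly combined when a vertex happens to lie in the overlap of $\Xi_T$ and $\Xi_B$, but this is a bookkeeping matter that does not affect the asymptotic bound.
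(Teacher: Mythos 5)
Your proposal matches the paper's proof essentially exactly: compute $k$-shallow cuttings for both the $(\leq k)$-top and $(\leq k)$-bottom levels of $\Dual{\PS}$, then run the randomized incremental algorithm from \lemref{yolk:small:k} over the resulting simplices using \lemref{tukey:ball} as the black-box subsolver. The paper's proof is terser, but the argument, the cuttings used, the analysis via the Clarkson--Sharir--Welzl bounds, and the reliance on \defref{batched} and \lemref{same:cd} are all the same.
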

\begin{proof}
    The algorithm is the same as described in \lemref{yolk:small:k}
    with a small change: compute a shallow cutting for the top
    $(\leq k)$-level and bottom $(\leq k)$-level of $\Dual{\PS}$. Now
    run the randomized incremental algorithm of \lemref{yolk:small:k}
    on these collection of simplices with \lemref{tukey:ball} as a
    black box to solve the subproblems of smaller size.
\end{proof}

\subsection{The center ball in the dual}

For a parameter $k$, recall that our goal is to compute the largest
ball which lies inside all open halfspaces containing more than $n-k$
points of $\PS$. From the discussion above, in the dual this
corresponds to the following problem.

\begin{problem}
    \problab{center:ball} Let $\PS \subset \R^d$ be a set of $n$
    points in general position. The goal is to compute the ball
    $\Ball$ of largest radius such that:
    \begin{compactenumI}
        \item each point of the top $k$-level $\TY{k}{\Dual{\PS}}$
        lies below the bottom branch of $\Dual{\Ball}$, and
        \item each point of the bottom $k$-level $\BY{k}{\Dual{\PS}}$
        lies above the top branch of $\Dual{\Ball}$.
    \end{compactenumI}
\end{problem}

\begin{lemma}
    \lemlab{center:ball} Let $\PS \subset \R^d$ be a set of $n$ points
    in general position and $k \leq n$ a parameter. The center ball
    can be computed in $O(n^{d-1} \log n)$ expected time.
\end{lemma}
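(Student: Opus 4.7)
The plan is to follow \lemref{tukey:ball} and apply Chan's framework (\lemref{chan:implicit}) to the dual formulation in \probref{center:ball}. The problem is \LP-type with constant combinatorial dimension: the optimal largest ball is pinned down by a constant number of tangencies between the two branches of the dual hyperboloid and vertices of $\ArrX{\Dual{\PS}}$, and monotonicity and locality are straightforward. The input space, its invariants, and the subproblem decomposition are inherited from \secref{subprob}: given a simplex $\Cell$ together with $H = \Dual{\PS} \cap \Cell$ and the count $\uDepth$ of hyperplanes of $\Dual{\PS}$ lying entirely below $\Cell$, a $(1/c)$-cutting of $H$ clipped inside $\Cell$ yields $O(c^d)$ subproblems each of size at most $\cardin{H}/c$ in $O(\cardin{H})$ time.

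The only component that needs real adaptation is the decider. A candidate $\Ball$ is infeasible for $(\Cell, H, \uDepth)$ precisely when either (i) some vertex of $\ArrX{\Dual{\PS}}$ in $\Cell$ belongs to $\TY{k}{\Dual{\PS}}$ and lies above the bottom branch of $\Dual{\Ball}$, or (ii) some vertex of $\ArrX{\Dual{\PS}}$ in $\Cell$ belongs to $\BY{k}{\Dual{\PS}}$ and lies below the top branch of $\Dual{\Ball}$. I would handle the two cases symmetrically. For case (i), set $\Cell'$ to be the portion of $\Cell$ lying above the bottom branch of $\Dual{\Ball}$ (whose boundary consists of simplex faces together with a piece of one hyperboloid sheet), let $H' \subseteq H$ be the hyperplanes meeting $\Cell'$, and compute $\ZoneY{\BX{\Cell'}}{H'}$. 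Exactly as in \secref{discrete:yolk:dec}, one seeds the top-level of an arbitrary vertex of $\ArrX{H'}$ inside $\Cell'$ naively in $O(\cardin{H'})$ time (correcting with $\uDepth$ and with the count of hyperplanes of $H$ lying entirely above $\Cell$), then walks the zone vertices in $\Cell'$, updating top-level by a constant across each edge and flagging any vertex whose top-level equals $k$. Case (ii) is handled symmetrically using the top branch and bottom-level.

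Since $\BX{\Cell'}$ is composed of $d+1$ hyperplanes together with a single branch of a constant-degree algebraic surface, \lemref{zone:complexity} bounds the zone complexity by $O(\cardin{H}^{d-1} \log \cardin{H})$, so the decider runs in $D(n) = O(n^{d-1} \log n)$ time. Plugging this bound into \lemref{chan:implicit} yields the claimed expected running time. The one conceptual point worth flagging, though it does not affect the analysis, is that \probref{center:ball} is a maximization rather than a minimization problem; this merely reverses the direction of the objective, while the \LP-type axioms and the constant bound on the combinatorial dimension are preserved. The main subtlety in executing the plan is bookkeeping the correct level (top-level vs.~bottom-level) and the correct offset (via $\uDepth$ and the dual count of hyperplanes lying above $\Cell$) when restricting to $H'$; this is routine but must be done carefully for both branches.
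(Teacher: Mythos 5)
Your proposal matches the paper's proof in all essentials: both invoke Chan's implicit-\LP{} framework (\lemref{chan:implicit}), use the same input space $(\Cell, H, \uDepth)$, decompose via $(1/c)$-cuttings clipped to $\Cell$ exactly as in \secref{subprob}, and build a zone-based decider running in $O(\cardin{H}^{d-1}\log\cardin{H})$ time. The only noticeable difference is presentational: you split the decider into two symmetric passes, one per hyperboloid branch, each computing a zone $\ZoneY{\BX{\Cell'}}{H'}$ of a clipped region; the paper instead sketches a single zone computation $\ZoneY{\BX{\Cell}}{H}$, walks the zone vertices to determine top/bottom levels inside $\Cell$, and checks all four violation cases (vertex of $\TY{k}{\Dual{\PS}}$ or $\BY{k}{\Dual{\PS}}$ inside $\Dual{\Ball}$, or on the wrong side of the wrong branch) simultaneously. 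Both formulations carry the same complexity and the same correctness argument via \lemref{zone:complexity}, and your observation that the maximization objective merely reverses the direction of monotonicity without affecting combinatorial dimension is consistent with the paper's brief remark that the dual problem is \LP{}-type with constant combinatorial dimension depending on $d$. This is the same proof, merely more spelled out.
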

\begin{proof}
    As usual, we use \thmref{l:p:implicit} to solve the dual problem
    (this problem is \LP{}-type with constant combinatorial dimension,
    where the constant depends on $d$). The input consists of a
    simplex $\Cell$, the set of hyperplanes
    $H = \Dual{\PS} \cap \Cell$ intersecting $\Cell$, and the number
    of hyperplanes lying above and below $\Cell$.  A given input can
    be decomposed using cuttings, as used in previous algorithms.

    We sketch the decision procedure. We are also given a candidate
    ball $\Ball$.  The algorithm computes the zone
    $\ZoneY{\BX{\Cell}}{H}$ and computes the level of each vertex of
    $\ZoneY{\BX{\Cell}}{H}$ inside $\Cell$ (taking into account the
    number of hyperplanes above and below $\Cell$).  If we find a
    vertex of either the top or bottom $k$-level which also lies
    inside $\Dual{\Ball}$, we report the violated constraint.
    Otherwise, if we find a vertex of the top $k$-level lying above
    the top branch of $\Dual{\Ball}$ or a vertex of the bottom
    $k$-level lying below the bottom branch of $\Dual{\Ball}$, then
    the solution $\Ball$ is also deemed infeasible.  This decision
    procedure can be implemented in
    $O(\cardin{H}^{d-1} \log \cardin{H})$ expected time.
\end{proof}

\subsubsection{Improved algorithm}

\begin{lemma}
    \lemlab{center:ball:k} Let $\PS \subset \R^d$ be a set of $n$
    points in general position and $k \leq n$ a parameter. The center
    ball can be computed in
    $\tldO\pth{k^{d-1}\big(1 + (n/k)^{\floor{d/2}}\big)}$ expected
    time.
\end{lemma}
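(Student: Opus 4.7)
The plan is to mirror the improvement carried out in \lemref{tukey:ball:k} for the Tukey ball, now using \lemref{center:ball} as the black-box. The \LP{}-type problem for the center ball is entirely supported by the vertices of the $k$-top level $\TY{k}{\Dual{\PS}}$ and the $k$-bottom level $\BY{k}{\Dual{\PS}}$, so it suffices to cover both levels by a small collection of simplices, each intersecting only $O(k)$ hyperplanes.

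First, I would apply \lemref{sc:alg} twice: once to compute a $k$-shallow cutting for $\Dual{\PS}$ that covers the $(\leq k)$-bottom level (and hence every vertex of $\TY{k}{\Dual{\PS}}$ when read from above), and once for the corresponding $(\leq k)$-top level, by flipping the vertical direction. Together this yields $\ell = O((n/k)^{\floor{d/2}})$ simplices $\Cell_1,\ldots,\Cell_\ell$, and for each $\Cell_i$ the sets $H_i = \Dual{\PS}\cap\Cell_i$ (of size $O(k)$) and the counts $\uDepth_i$ of hyperplanes lying completely below $\Cell_i$, in $O(k(n/k)^{\floor{d/2}} + n\log n)$ expected time.

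Second, I would run the batched randomized incremental algorithm from \lemref{yolk:small:k} on these simplices. In each iteration $i$, the batch of constraints $\Impl(\Cell_i)$ consists of those vertices of $\ArrX{H_i}$ that belong to either the $k$-top or the $k$-bottom level (using $\uDepth_i$ to compute the absolute level), restricted to $\Cell_i$. The standard \LP{}-type analysis gives $O((n/k)^{\floor{d/2}})$ expected violation tests and $O(\log^{d+1}(n/k))$ expected basis computations. Each such subroutine call is implemented by invoking \lemref{center:ball} on the subproblem of size $\cardin{H_i} = O(k)$, which costs $O(k^{d-1}\log k)$ per call.

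Summing over all operations yields a total expected time of
\begin{equation*}
\tldO\pth[]{k^{d-1}(n/k)^{\floor{d/2}} + k(n/k)^{\floor{d/2}} + n\log n}
  = \tldO\pth[]{k^{d-1}\bigl(1 + (n/k)^{\floor{d/2}}\bigr)}.
\end{equation*}
The only delicate point, which I expect to be the main obstacle worth stating explicitly, is that the shallow cuttings must together capture every vertex of the $k$-top and $k$-bottom levels; this is exactly the defining covering property of a shallow cutting, so no further argument is needed beyond \lemref{sc:alg}.
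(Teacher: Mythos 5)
Your proof is correct and takes essentially the same approach as the paper: compute $k$-shallow cuttings of both the $(\leq k)$-top and $(\leq k)$-bottom levels via \lemref{sc:alg}, then run the batched randomized incremental algorithm of \lemref{yolk:small:k} with \lemref{center:ball} as the black-box subproblem solver. The paper states this tersely in one sentence; you have simply unpacked the same argument.
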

\begin{proof}
    The same argument for \lemref{tukey:ball:k} applies here, using
    \lemref{center:ball} as a black box to solve the subproblems
    generated by the $k$-shallow cutting of the top $(\leq k)$-level
    and bottom $(\leq k)$-level.
\end{proof}

\section{Smallest disk of all vertices %
   within crossing distance \TPDF{$k$}{k}%
}
\seclab{more:apps}

Let $L$ be a set of lines in the plane. For two points
$\pa, \query \in \R^2$, the \emphi{crossing distance}
$\cross_L(\pa,\query)$ is the number of lines of $L$ intersecting the
segment $\pa\query$.

Given a point $\query \in \R^2$ not lying on any line of $L$, and a
parameter $k$, let
$$S_k(\query) = \Set{\pa \in \VertsX{\ArrX{L}}}{\cross_L(\pa,\query) \leq k}$$
be the set of vertices of $\ArrX{L}$ with crossing distance at most
$k$ from $\query$. The goal is to compute the smallest disk enclosing
all points of $S_k(\query)$, as shown in \figref{cross}.

\begin{figure}[t]
    \centering \includegraphics[scale=0.5]{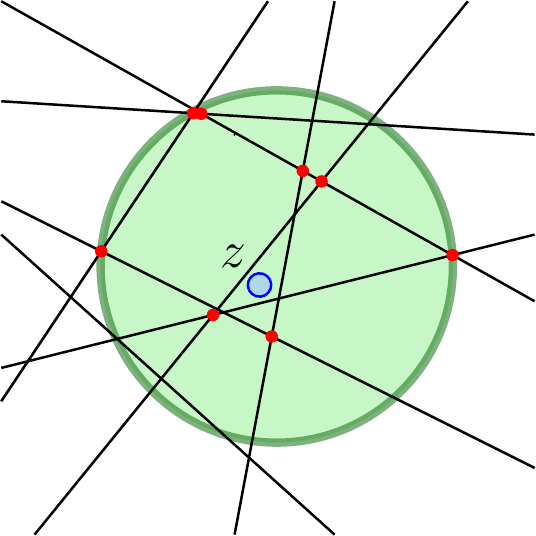}
    \caption{A disk containing all vertices of $\ArrX{L}$ lying within
       crossing distance at most three from $\query$.}
    \figlab{cross}
\end{figure}

\begin{lemma}
    \lemlab{cross:verts} Let $L$ be a set of $n$ lines in the plane
    and let $\query \in \R^2$ be a point not lying on any point of
    $L$.  In $O(n\log n)$ expected time, one can compute the smallest
    disk enclosing all vertices of $\ArrX{L}$ within crossing distance
    at most $k$ from $\query$.
\end{lemma}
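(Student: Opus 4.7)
The plan is to cast the problem as an implicit \LP{}-type problem and invoke \lemref{chan:implicit}. With the constraint points given explicitly, smallest enclosing disk is an \LP{}-type problem of combinatorial dimension at most three, so I only need to design an input representation, a decider, and a subdivider, and argue that each runs in $D(n) = O(n\log n)$.

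For the implicit input I would carry a triangle $\Cell$, the set of lines $L' = \Set{\Line \in L}{\Line \cap \Cell \neq \emptyset}$, and an integer $\uDepth$ equal to the number of lines of $L \setminus L'$ that separate $\Cell$ from $\query$. The identity $\cross_L(v,\query) = \cross_{L'}(v,\query) + \uDepth$ for every vertex $v$ of $\ArrX{L'}$ lying in $\Cell$ then reduces all crossing-distance computations to ones performed against $L'$ alone, so no information about $L \setminus L'$ is needed beyond the scalar $\uDepth$.

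The decider follows the zone-walk template of \secref{discrete:yolk:dec}. Given a candidate disk $\Disk$, I would compute $\ZoneY{\BX{\Cell}}{L'}$ with \lemref{zone:complexity} in $O(\cardin{L'}\log\cardin{L'})$ time, pick one vertex $v_0$ of $\ArrX{L'}$ inside $\Cell$, compute $\cross_{L'}(v_0,\query)$ naively in $O(\cardin{L'})$ time, and run a graph traversal of the zone's vertices inside $\Cell$. Crossing counts of arrangement-adjacent vertices differ by a bounded amount, so each subsequent count can be updated in $O(1)$ time from its predecessor. The first time the walk finds a vertex $v \notin \Disk$ with $\cross_{L'}(v,\query) + \uDepth \leq k$, report a violation; otherwise certify that $\Disk$ satisfies every implicit constraint represented by $\Cell$.

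Subdivision is the standard cutting step used throughout the paper: take a $(1/c)$-cutting of $L'$ for a sufficiently large constant $c$, clip its simplices to $\Cell$, and for each subsimplex $\Cell_i$ recompute $(L'_i, \uDepth_i)$ in $O(\cardin{L'})$ time, ensuring that the union of the implicit constraint sets is preserved. With $D(n) = O(n\log n)$ and the scaling hypothesis $D(n/k) = O(D(n)/k)$ clearly satisfied, \lemref{chan:implicit} delivers the claimed $O(n\log n)$ expected running time. The main technical point I would have to check carefully is that the zone-based traversal inside $\Cell$ really encounters every vertex of $\ArrX{L'}\cap\Cell$ that could be a violator, so that no low-crossing vertex outside $\Disk$ is overlooked; once that invariant is in hand, the rest is a direct instantiation of Chan's machinery.
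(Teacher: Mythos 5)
Your proposal matches the paper's proof essentially step for step: same implicit input triple $(\Cell, L', \uDepth)$ carried through the recursion, same cutting-based subdivider, same zone-of-the-boundary decider with a graph walk that updates crossing counts by $O(1)$ between adjacent arrangement vertices, and the same invocation of \lemref{chan:implicit} with $D(n) = O(n\log n)$. The worry you flag at the end is exactly the unstated observation the paper leans on; the way to discharge it is not that \emph{every} violating vertex lies in the zone (that is false), but that the region of points within crossing distance $\le k$ of $\query$ is connected and contains $\query$, so if a candidate disk $\Disk$ fails to cover all of its vertices the region must poke through the boundary of $\Disk$, and hence the zone of the boundary of $\Cell' = \Cell \setminus \Disk$ (the disk arc together with the simplex edges) picks up a violating vertex—so one should take the zone of $\BX{\Cell'}$ rather than of $\BX{\Cell}$ to make the argument go through cleanly.
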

\begin{proof}
    When the constraints (points) are explicitly given, this problem
    is \LP{}-type with constant combinatorial dimension. We now apply
    \thmref{l:p:implicit} to obtain an efficient algorithm for this
    problem:
  
    \begin{enumerate}
        \item Each subproblem consists of a simplex $\Cell$, the set
        of lines $L' = L \cap \Cell$, and a number $\uDepth$ which is
        the number of lines of $L$ separating $\Cell$ and
        $\query$.\footnote{A line $\Line$ separates $\Cell$ and
           $\query$ if they lie on opposite sides of $\ell$.} Given a
        disk $\Disk$ defined by the basis, check if there is a vertex
        of $\ArrX{L'}$ which lies outside $\Disk$ and has crossing
        distance at most $k$ from $\query$.

        To this end, compute the zone $\ZoneY{\BX{\Cell}}{L'}$. The
        algorithm chooses a vertex $v$ of $\ZoneY{\BX{\Cell}}{L'}$
        inside $\Cell$ and computes
        $\cross_L(v, \query) = \cross_{L'}(v, \query) + \uDepth$.
        Next, walk around the set of vertices in
        $\ZoneY{\BX{\Cell}}{L'} \cap \Cell$ and compute the crossing
        values using previously computed crossing values.  If at any
        time a vertex of crossing value at most $k$ which is outside
        $\Disk$ is encountered, report that $\Disk$ is an invalid
        solution.

        \item The subproblem $(\Cell, L', \uDepth)$ can be decomposed
        once again using cuttings. Compute a $(1/c)$-cutting (for
        sufficiently large constant $c$) of $L'$ and clip the cutting
        inside $\Cell$.  For each cell $\Cell_i$ in the cutting,
        compute $L' \cap \Cell_i$ and the number of lines separating
        $\Cell_i$ from $\query$.
    \end{enumerate}
  
    The running time of the algorithm is dominated by the running time
    of the violation test, which is proportional to the complexity of
    the zone $\ZoneY{\BX{\Cell}}{L'}$. By \lemref{zone:complexity},
    the violation test runs in $\DX{n} = O(n\log n)$ time.
\end{proof}

\section{Conclusions}
\seclab{conclusions}

Since the conference version of \cite{c-oramt-04}, several
applications of the implicit \LP technique have been found.  For
example, see \cite{AgarwalCSS06,BarbaM19,EppsteinW07,Morin08}.

The natural open problem is to improve the running times for computing
the yolk (and extremal yolk) even further. It seems believable, that
for $d>3$, the log factor in \thmref{discrete:yolk} might not be
necessary. We leave this as an open problem for further research.

\paragraph*{Acknowledgments} %
The first author thanks Stefan Langerman for re-posing the problem of
computing the maximum Tukey depth of a point set at the 2002 Fall
Workshop on Computational Geometry problem session, and for subsequent
discussions.

The authors thank Joachim Gudmundsson for bringing the problem of
computing the yolk to our attention. The last author thanks Sampson
Wong for discussions on computing the yolk in higher dimensions.

Finally, the authors thank the anonymous referees for the detailed
comments and review.

\BibTexMode{%
   \NotSICOMP{%
      \bibliographystyle{alpha}%
   } \SICOMP{%
      \bibliographystyle{siamplain}%
   } \bibliography{yolk}%
}

\BibLatexMode{\printbibliography}

\appendix

\section{Proof sketch of \TPDF{\lemref{sc:alg}}{Lemma
      \ref*{lemma:sc:alg}}}
\apndlab{sc:alg}

Let $H$ be set of $n$ hyperplanes in $\R^d$. We focus on constructing
a $k$-shallow cutting when $d \geq 4$ (for $d < 4$, we can construct
shallow cuttings in $O(n\log n)$ deterministic time
\cite{ct-odasc-16}). The original proof of existence of $k$-shallow
cuttings by \Matousek \cite{m-rph-92} provides a randomized algorithm
for constructing such a cutting.

\smallskip%
At a high level, the approach of \Matousek for constructing a
$k$-shallow cutting is the following:
\begin{compactenumI}
    \item Let $R \subseteq H$ be a random sample of size $n/k$ and
    compute a bottom-vertex triangulation of $\ArrX{R}$. Let $\Cells$
    denote the resulting set of simplices.
    \item Let $\Cells' \subseteq \Cells$ be the subset of simplices
    containing a point of level at most $k$ (with respect to $H$).
    \item For each $\Cell \in \Cells'$, if $\Cell$ intersects $tk$
    hyperplanes of $H$ for some $t > 1$, compute a $(1/t)$-cutting of
    the hyperplanes intersecting $\Cell$ and clip the cutting inside
    $\Cell$. Return this 2-level cutting as the desired $k$-shallow
    cutting.
\end{compactenumI}

\subsection{Computing the top-level cutting} %
The top-level cutting is computed via randomized incremental
construction. The algorithm randomly permutes the hyperplanes of $H$,
label them $h_1, \ldots, h_n$ and let $H_i = \{h_1, \ldots, h_i\}$.
For $i = 1, \ldots, n/k$, the algorithm maintains a collection of
simplices, formed from the arrangement $\ArrX{H_i}$ and which contain
a point of level $k$ (with respect to $H$). Each simplex $\Cell$
maintains pointers to the subset of hyperplanes
$\{h_{i+1}, \ldots, h_n\}$ which intersect $\Cell$ (this is the
\emph{conflict list} of $\Cell$). Each hyperplane $h_j$ for $j > i$
also maintains reverse pointers to the set of simplices it intersects
in the current triangulation. Finally, each cell in the arrangement
maintains the number of hyperplanes of $H$ which lie strictly below
it.

In an update step, insert the hyperplane $h_i$. Using the reverse
pointers, we determine the set of simplices that are split by
inserting $h_i$ into the current arrangement. Using these simplices,
we can find the cells that are split by $h_i$. Fix a cell $C$
intersected by $h_i$ and let $H_C \subseteq H \setminus H_i$ be the
union of the conflict lists over the simplices in $C$.  Suppose $C$ is
split into two new cells $C_1$ and $C_2$. Assume $C_1$ lies above
$h_i$. We determine the number of planes lying below $C_1$ ($C_2$ can
be handled symmetrically).  Let $v$ be a vertex of $C$ lying below
$h_i$. From $v$, we perform a graph search on the boundary of $C$ to
determine the number of hyperplanes of $H_C$ lying strictly below
$C_1$ (adding the number of hyperplanes lying below $C$ to the
count). If at any point this count is greater than $k$, we discard
$C_1$, as it does not cover the $(\leq k)$-level.  This process is
repeated for all cells split by $h_i$. At the end of the process, we
triangulate the newly created cells, and construct the conflict lists
for the new simplices. See \cite[Section 5.4]{dbds-lric-95} for
details on how to efficiently maintain the conflict lists and
arrangement incrementally.

\subsection{Refining the cutting} %
At the end of the process, the algorithm has a collection of simplices
$\Cells$ which cover the $(\leq k)$-level.  For each simplex
$\Cell \in \Cells$, if $\Cell$ has conflict list size $tk$ for some
$t \geq 1$, compute a $(1/t)$-cutting for the hyperplanes intersecting
$\Cell$ and clip the cutting inside $\Cell$.  This ensures that every
simplex in the final two-level cutting intersects at most $k$
hyperplanes of $H$.

\subsection{Analysis sketch} %
In each step of the randomized incremental algorithm, the total amount
of work done is proportional to the size of the conflict lists
destroyed or created. Let $\Cells_i$ denote the current collection of
simplices at step $i$, where $\Cells = \Cells_{n/k}$ is the collection
of cells in the top-level cutting at the end of the process.  We first
analyze the total size of the conflict lists over all simplices in
$\Cells_i$. For each $\Cell \in \Cells_i$, let $w(\Cell)$ be the size
of the conflict list of $\Cell$. For an integer $t \geq 1$, let
$\Cells_i[t] = \Set{\Cell \in \Cells_i}{(t-1)k < w(\Cell) \leq tk}$.
In the original proof of the shallow cutting lemma \Matousek proved
that, roughly speaking, the number of simplices in $\Cells_i$ with
$w(\Cell) \in ((t-1)k, tk] $ is decays exponentially in $t$---formally
$\Ex{\cardin{\Cells_i[t]}} = O(2^{-t} \cardin{\Cells_i})$ \cite[Lemma
2.4]{m-rph-92}.  Using this, one can bound the sum of the conflict
list sizes as (see \cite[Theorem 3]{dbds-lric-95} and \cite[Theorem
8.8]{h-gaa-11}):
\begin{align*}
  \alpha_i := \Ex{\sum_{\Cell \in \Cells_i} w(\Cell)} 
  = O\pth{\cardin{\Cells_i} (n/i)}
  = O\pth{i^{\floor{d/2}} (n/i)}.
\end{align*}
Since the hyperplanes were randomly permuted, we have that that the
amortized work done in the $i$\th step of the algorithm is
$O(\alpha_i/i)$ \cite[Theorem 5]{dbds-lric-95}. As such, the expected
running time to compute the top-level cutting is bounded by:
\begin{align*}
  \sum_{i=1}^{n/k} O\pth{\frac{\alpha_i}{i}}
  = O\pth{\sum_{i=1}^{n/k} \frac{n i^{\floor{d/2}}}{i^2}}
  = O\pth{\frac{n}{k} \cdot n \pth{\frac{n}{k}}^{\floor{d/2}-2}}
  = O\pth{k\pth{\frac{n}{k}}^{\floor{d/2}}},
\end{align*}
where in the second inequality we use the assumption $d \geq 4$.  (For
$d < 4$, the summation solves to $O(n\log(n/k))$.)

As for the second level cutting, fix a simplex $\Cell \in \Cells[t]$
with weight $w(\Cell) \in ((t-1)k, tk]$.  Computing a $(1/t)$-cutting
inside $\Cell$ costs $O(w(\Cell) t^{d-1}) = O(t^d k)$ expected time
\cite{c-chdc-93}.  Thus, the expected running time of the second-level
cutting is bounded by
$O\pth{k \sum_{t=1}^{\infty} \sum_{\Cell \in \Cells[t]} t^d}$.  Again,
by the properties of exponential decay \cite{m-rph-92, dbds-lric-95,
   h-gaa-11} we have that
\begin{align*}
  \Ex{k \sum_{t \geq 1} \sum_{\Cell \in \Cells[t]} t^d} 
  = O\pth{k \pth{n/k}^{\floor{d/2}}}.
\end{align*}

This completes the proof of \lemref{sc:alg}.

\end{document}